\documentclass[a4paper]{article}
\usepackage[utf8]{inputenc}
\usepackage[T1]{fontenc}
\usepackage{amsmath}
\usepackage{amssymb}
\usepackage{amsthm}
\usepackage{geometry}
\usepackage{authblk}
\theoremstyle{plain}
\newtheorem{thm}{Theorem} 
\newtheorem{lem}{Lemma}
\newtheorem{cor}{Corollary}

\theoremstyle{definition}
\newtheorem{remark}{Remark}
\newtheorem{example}{Example}
\newtheorem*{thm*}{Theorem}

\usepackage{graphicx,color}

\newcommand{\dif}{\mathop{}\!\mathrm{d}}

\newcommand{\ZZ}{\mathbb{Z}}
\newcommand{\RR}{\mathbb{R}}
\newcommand{\CC}{\mathbb{C}}

\newcommand{\diag}{\mathrm{diag}}
\newcommand{\He}{\mathrm{He}}

\DeclareMathOperator{\tr}{tr}

\newcommand{\beq}{\begin{equation}}
	\newcommand{\eeq}{\end{equation}}
\newcommand{\nn}{\nonumber}
\def\={\;=\;}
\def\+{\,+\,}

\title{On determinantal formulas for hermitian random matrices}
\date{}

\author[1]{Di YANG}
\author[1]{Jiayi ZHAO}
\author[2]{Jian ZHOU}

\affil[1]{
	School of Mathematical Sciences, University of Science and Technology of China \par
	Hefei 230026, P.R. China \par
	\texttt{diyang@ustc.edu.cn},\qquad\texttt{zhaojiayi616@mail.ustc.edu.cn}}
\affil[2]{
	Department of Mathematical Sciences, Tsinghua University \par
	Beijing 100084, P.R. China \par
	\texttt{jianzhou@mail.tsinghua.edu.cn}
}

\begin{document}
\maketitle

\begin{abstract}
	In this paper, we give a direct proof of determinantal formulas for connected $k$-point functions for hermitian matrix models. We also give a new proof of KP integrability for them. From the viewpoint of KP hierarchy, we further give a new proof of the explicit formula for the corresponding affine coordinates. Furthermore, duality for some hermitian matrix models is proved.
\end{abstract}

\tableofcontents

\section{Introduction}

Consider a positive measure $\dif\mu$ over $\RR$. Let $\pi_k(\lambda)=\lambda^k+\cdots$, $k\ge0$, be the associated monic orthogonal polynomials with respect to $\dif\mu$. Denote
	\begin{equation}
		h_k:=\int_\RR\pi_k(\lambda)^2\dif\mu(\lambda),\qquad k\ge0. \label{orth}
	\end{equation}
 The orthogonal polynomials satisfy the following three-term recurrence relations:
	\begin{equation}
		\lambda\pi_l(\lambda)=\pi_{l+1}(\lambda)+\beta_l\pi_l(\lambda)+\gamma_l\pi_{l-1}(\lambda),\quad l\ge1, \label{three-term}
	\end{equation}
	where $\beta_l,\gamma_l\in\RR$. In the 19th century, motivated from numerical integration and approximation theory, the Christoffel--Darboux kernel $K_n(\lambda_1,\lambda_2)$ was introduced:
	\begin{equation}
		K_n(\lambda_1,\lambda_2):=\sum_{l=0}^{n-1}\frac{\pi_l(\lambda_1)\pi_l(\lambda_2)}{h_l},\quad n\ge0,
	\end{equation}
which was shown~\cite{Christoffel,Darboux} to satisfy the following identity: for any $n\ge1$,
	\begin{equation}
		K_n(\lambda_1,\lambda_2)=\frac{1}{h_{n-1}}\frac{\pi_n(\lambda_1)\pi_{n-1}(\lambda_2)-\pi_{n-1}(\lambda_1)\pi_n(\lambda_2)}{\lambda_1-\lambda_2}.\label{c-d}
	\end{equation}

Since the mid-twentieth century, to understand energy level repulsion in complex heavy nuclei, Wigner, Dyson, Mehta and others established the random matrix theory, where the probability distributions of eigenvalues are important subjects. Dyson~\cite{Dyson1962} studied the joint probability density of $k$ eigenvalues of the hermitian matrix with respect to $\dif\mu$
\begin{equation}
	\rho_k(n;\lambda_1,\cdots,\lambda_k)=\frac{\int_{\RR^{n-k}}\prod_{1\le i<j\le n}(\lambda_i-\lambda_j)^2\dif\mu(\lambda_{k+1})\cdots\dif\mu(\lambda_n)}{\int_{\RR^n} \prod_{1\le i<j\le n}(\lambda_i-\lambda_j)^2\dif\mu(\lambda_1)\cdots\dif\mu(\lambda_n)},
\end{equation} and further showed~\cite{Dyson1970} that it can be written as a determinant using the Christoffel--Darboux kernel:
	\begin{equation} \label{dpp}
			\rho_k(n;\lambda_1,\cdots,\lambda_k)=\frac{1}{(n-k+1)_k}\det(K_n(\lambda_i,\lambda_j))_{i,j=1}^{k},
		\end{equation}
	where $(n)_k:=n(n+1)\cdots(n+k-1)$ denotes the increasing Pochhammer symbol.

The notion of correlators was known~\cite{thooft1,thooft2} to have important relations with field theory of strong interactions. For the Gaussian unitary ensemble (GUE), where $\dif\mu(\lambda)=e^{-\frac{\lambda^2}{2}}\dif\lambda$, the $k$-point correlators
\begin{equation}
	\langle\tr M^{i_1}\cdots\tr M^{i_k}\rangle(n):=\frac{\int_{\mathcal H_n}\tr M^{i_1}\cdots\tr M^{i_k}e^{-\frac{1}{2}\tr M^2}\dif M}{\int_{\mathcal H_n}e^{-\frac{1}{2}\tr M^2}\dif M}
\end{equation}
were considered in~\cite{BIPZ,BIZ,HZ} (see also~\cite{Mehta,Deift}). Here $\mathcal H_n$ denotes the space of $n\times n$ hermitian matrices. Denote by $\langle\tr M^{i_1}\cdots\tr M^{i_k}\rangle_c(n)$ the connected $k$-point correlators. Explicit formula for $\langle\tr M^{i}\rangle_c(n)$ was given in~\cite{HZ}. Explicit formula for $\langle\tr M^{i_1}\tr M^{i_2}\rangle_c(n) $ was given in~\cite{MS}. For general $k\ge1$, explicit formula for connected $k$-point GUE correlators was given in~\cite{DY} based on 1D Toda integrability.

It is known~\cite{Shaw} that the partition function of GUE correlators is a KP tau-function. In~\cite{Emergent}, for any KP tau-function $\tau$, the $k$-point generating series of logarithmic derivatives of $\tau$ are expressed by
\begin{align}
	\sum_{j\ge1}\left.\frac{\partial\log\tau}{\partial t_j}\right|_{\mathbf{t}=0}\frac{1}{\lambda_1^{j+1}}=&\lim_{\lambda\to\lambda_1}\left(\widehat A(\lambda,\lambda_1)-\frac{1}{\lambda-\lambda_1}\right),\label{kpmain0}\\
	\sum_{j_1,\cdots,j_k\ge1}\left.\frac{\partial^k\log\tau}{\partial t_{j_1}\cdots\partial t_{j_k}}\right|_{\mathbf{t}=0}\frac{1}{\prod_{i=1}^k\lambda_i^{j_i+1}}=&\frac{(-1)^{k-1}}{k}\sum_{\sigma\in S_k}\prod_{i=1}^k\widehat A(\lambda_{\sigma(i)},\lambda_{\sigma(i+1)})-\frac{\delta_{k,2}}{(\lambda_1-\lambda_2)^2},\,k\ge2,\label{kpmain}
\end{align}
where
\begin{align}
	\widehat A(\lambda_1,\lambda_2):=&\sum_{i,j\ge0}\frac{A_{i,j}}{\lambda_1^{j+1}\lambda_2^{i+1}}+\frac{1}{\lambda_1-\lambda_2}.
\end{align}
Here $A_{i,j}$ are the affine coordinates (cf.~\cite{BY,HE,Sato,ZhouWi,Emergent}) of the element in the Sato Grassmannian corresponding to $\tau$. For GUE, we denote the affine coordinates by $A_{i,j}^{\text{GUE}}$, whose explicit expressions are given by~\cite{Zhou}
\begin{equation}\label{kpgue}
	A_{i,j}^{\text{GUE}}=\begin{cases}
		(-1)^{i+\left[\frac{i+1}{2}\right]}\frac{(2m-1)!!}{(2m)!}\binom{m-1}{\left[\frac{i}{2}\right]}\prod_{k=-i}^{2m-1-i}(n+k),&i+j=2m-1,\\
		0,&\text{otherwise}.
	\end{cases}
\end{equation}
In this case,~\eqref{kpmain0}--\eqref{kpgue} were reproved in~\cite{Toda} using the method of~\cite{DY} and the following
\begin{equation}
	\sum_{i,j\ge0}\frac{A_{i,j}^\text{GUE}}{\lambda_1^{j+1}\lambda_2^{i+1}}+\frac{1}{\lambda_1-\lambda_2}=\frac{1}{\lambda_2\Gamma(n)}\frac{\psi_A^\text{G}(\lambda_1,n)\psi_B^\text{G}(\lambda_2,n-1)-\psi_A^\text{G}(\lambda_1,n-1)\psi_B^\text{G}(\lambda_2,n)}{\lambda_1-\lambda_2}\frac{\lambda_2^n}{\lambda_1^n}
\end{equation}
with
\begin{align}
	\psi_A^\text{G}(\lambda,n)=\sum_{j\ge0}(-1)^j\frac{(n-2j+1)_{2j}}{2^jj!\lambda^{2j}}\lambda^n,\qquad\psi_B^\text{G}(\lambda,n)=\Gamma(n+1)\sum_{j\ge0}\frac{(n+1)_{2j}}{2^jj!\lambda^{2j}}\lambda^{-n}.
\end{align}

For the case when $\dif\mu(\lambda)=w(\lambda)\dif\lambda$ where $w:\RR\to\RR$ is a probability density function with finite moments of all orders, it is known from~\cite{GMO,Martinec,AvM,DY,Toda,Yin,YGueI} that the partition function of the associated hermitian matrix model is a 1D Toda tau-function. Then from~\cite{DY,Toda} we know that generating series of connected $k$-point correlators can be represented by using the following kernel
\begin{equation} \label{dn}
	D_n(\lambda_1,\lambda_2):=\frac{\psi_A(\lambda_1,n)\psi_B(\lambda_2,n-1)-\psi_A(\lambda_1,n-1)\psi_B(\lambda_2,n)}{\lambda_1-\lambda_2},
\end{equation}
where $\psi_A(\lambda,n),\psi_B(\lambda,n)$ are a pair of wave functions of the associated Lax operator $L=\mathcal T+\beta_n+\gamma_n\mathcal T^{-1}$ with $\mathcal T:f(n)\mapsto f(n+1)$ the shifting operator.

For the case when the measure $\dif\mu(\lambda)=e^{V(\lambda)}\dif\lambda$ with $V'(\lambda)$ being a rational function, a formula for the associated connected $k$-point correlators was obtained by Ruzza~\cite{Ruzza} using the Riemann-Hilbert approach, where the Cauchy--Hilbert--Stieltjes transform (cf.~\cite{Deift,Bergere,BE,Ruzza}) of the orthogonal polynomials $\pi_n(\lambda)$
	\begin{equation} \label{chs}
		\widehat{\pi}_n(\xi):=\int_\RR\frac{\pi_n(\lambda)}{\lambda-\xi}\dif\mu(\lambda),\qquad\xi\in\CC\backslash\RR, \, n\in\ZZ_{\ge0},
	\end{equation}
	was considered. Actually, our observation is that the asymptotic expansions of $\pi_n(\lambda)$ and of $-\lambda\widehat\pi_n(\lambda)$ as $\lambda\to\infty$ within any sector in $\CC\backslash\RR$ form a pair of wave functions of the associated Lax operator.

Motivated by the above discussions, for any positive Borel measure $\dif\mu$ over $\RR$ supported on infinitely many points with finite moments of all orders~\cite{Deift,Konig}, denote by $\pi_n(\lambda)$ the orthogonal polynomials with respect to $\dif\mu$, by $\widehat\pi_n(\lambda)$ the 
	Cauchy--Hilbert--Stieltjes transform of $\pi_n(\lambda)$ (see~\eqref{chs}), and by
	\begin{equation}
		\psi_A^\star(\lambda,n)=(1+O(\lambda^{-1}))\lambda^n,\quad\psi_B^\star(\lambda,n)=(1+O(\lambda^{-1}))h_n\lambda^{-n}, \qquad n\in\mathbb{Z}_{\ge0},
	\end{equation}
	the asymptotic expansions of $\pi_n(\lambda)$, $-\lambda\widehat\pi_n(\lambda)$ as $\lambda\to\infty$ within any sector in $\CC\backslash\RR$. Introduce the kernel
\begin{equation} \label{dstar}
	D_n^\star(\lambda_1,\lambda_2):=\frac{\psi_A^\star(\lambda_1,n)\psi_B^\star(\lambda_2,n-1)-\psi_A^\star(\lambda_1,n-1)\psi_B^\star(\lambda_2,n)}{\lambda_1-\lambda_2},\quad n\ge1.
\end{equation}
We also denote by $\langle\tr M^{i_1}\cdots\tr M^{i_k}\rangle_c(n)$ the connected $k$-point correlators of the associated hermitian matrix model, and let
\begin{equation}
	C_k(n;\lambda_1,\cdots,\lambda_k):=\sum_{i_1,\cdots,i_k\ge1}\frac{\langle\tr M^{i_1}\cdots\tr M^{i_k}\rangle_c(n)}{\lambda_1^{i_1+1}\cdots\lambda_k^{i_k+1}},\qquad n\in\ZZ_{\ge0}.
\end{equation}	

	\begin{thm} \label{main}
		The following formula holds for $n\in\ZZ_{\ge0}$:
		\begin{align} \label{maineq}
			C_k(n;\lambda_1,\cdots,\lambda_k)=\frac{(-1)^{k-1}}{k}\sum_{\sigma\in S_k}\prod_{i=1}^kB_n(\lambda_{\sigma(i)},\lambda_{\sigma(i+1)})-\frac{\delta_{k,2}}{(\lambda_1-\lambda_2)^2},\qquad k\ge2,
		\end{align}
		where $B_n(\lambda_1,\lambda_2):=D_n^\star(\lambda_1,\lambda_2)/(\lambda_2h_{n-1})$ for $n\in\ZZ_{\ge1}$ and $B_0(\lambda_1,\lambda_2):=1/(\lambda_1-\lambda_2)$.
		Moreover, for $k=1$ and $n\in\ZZ_{\ge0}$,
		\begin{equation} 
			C_1(n;\lambda_1)=\lim_{\lambda\to\lambda_1}\left(B_n(\lambda,\lambda_1)-\frac{1}{\lambda-\lambda_1}\right)-\frac{n}{\lambda_1}.\label{maineq1}
		\end{equation}
	\end{thm}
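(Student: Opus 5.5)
The plan is a "direct proof" that starts from Dyson's determinantal formula \eqref{dpp} and reduces it to a cumulant (connected-part) computation with the Christoffel--Darboux kernel. First I would express the non-connected generating function through resolvents. Write $\langle \prod_{i=1}^k \tr\frac{1}{\lambda_i-M}\rangle(n)$ as an integral over eigenvalues. The $k$-fold sum over eigenvalue indices splits according to set partitions of $\{1,\dots,k\}$: blocks whose indices coincide. Each block contributes $\int\prod_{i\in\text{block}}\frac{1}{\lambda_i-x}\,\cdot\,(\text{density})$. The correlation functions of distinct eigenvalues are $\det(K_n(x_a,x_b))$ times the measure, by \eqref{dpp} with the factor $(n-k+1)_k$ absorbing the counting. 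Taking logarithms (cumulants) and using the standard fact that connected correlations of a determinantal point process are sums over cyclic products $\frac{(-1)^{k-1}}{k}\sum_{\sigma}\prod K(x_{\sigma(i)},x_{\sigma(i+1)})$, I get
\[
\Big\langle\prod\tr\tfrac{1}{\lambda_i-M}\Big\rangle_c=\frac{(-1)^{k-1}}{k}\sum_{\sigma\in S_k}\int\cdots\int\prod_{i}\frac{K_n(x_{\sigma(i)},x_{\sigma(i+1)})}{\lambda_{\sigma(i)}-x_{\sigma(i)}}\,\dif\mu(x_1)\cdots\dif\mu(x_k),
\]
with the coincident-index diagonal terms included. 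Equivalently, this is $\log\det(1-K_nf)$-type Fredholm expansion with $f=\sum t_i/(\lambda_i-x)$, which I would use to organize the combinatorics cleanly.

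Second, I would collapse each cyclic integral. The key identity to prove is that, with the convention $\frac{1}{\lambda-x}$ expanded at $\lambda=\infty$, the operator with kernel $\int K_n(\lambda\text{-side})$ reduces to $B_n$:
\[
\int\frac{K_n(x,y)}{(\lambda_1-x)}\cdots \;\leadsto\; \text{products of } \widetilde B(\lambda_a,\lambda_b):=\int\!\!\int\frac{K_n(x,y)\,\delta\text{-type}}{(\lambda_a-x)(\lambda_b-y)}.
\]
Concretely I would show, using \eqref{c-d} and the reproducing property $\int K_n(x,z)K_n(z,y)\dif\mu(z)=K_n(x,y)$ together with partial fractions $\frac{1}{(\lambda_a-z)(\lambda_b-z)}=\frac{1}{\lambda_b-\lambda_a}\big(\frac{1}{\lambda_a-z}-\frac{1}{\lambda_b-z}\big)$, that $\int\frac{\pi_l(x)}{\lambda-x}\dif\mu(x)=\pi_l(\lambda)\widehat{\cdot}$ terms: precisely $\int\frac{\pi_l(x)\pi_m(x)}{\lambda-x}\dif\mu=\pi_m(\lambda)(-\widehat\pi_l(\lambda))+\int\frac{\pi_m(x)-\pi_m(\lambda)}{\lambda-x}\pi_l(x)\dif\mu$, and the last integral vanishes for $m\le l$ by orthogonality. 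This converts each cycle to a cycle of the kernel $\frac{1}{h_{n-1}}\frac{\pi_n(\lambda_a)(-\widehat\pi_{n-1}(\lambda_b))-\pi_{n-1}(\lambda_a)(-\widehat\pi_n(\lambda_b))}{\lambda_a-\lambda_b}$ (via Christoffel--Darboux applied after the partial fractions), i.e. $\lambda_b B_n(\lambda_a,\lambda_b)\cdot\frac{1}{\lambda_b}$, after replacing $-\lambda\widehat\pi$ by $\psi_B^\star$ asymptotically. The correction $-\delta_{k,2}/(\lambda_1-\lambda_2)^2$ and the $1/(\lambda_a-\lambda_b)$ pieces arise from the polynomial-part (the $\sum_{l<n}$ diagonal) contributions, and the $k=1$ limit plus $-n/\lambda_1$ accounts for removing the $i=0$ term $\tr M^0=n$ from $C_1$ and the zero-moment terms in general.

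Third, I would pass from resolvent generating functions to $C_k$ by asymptotic expansion at $\lambda_i\to\infty$ in a sector, justified since all moments are finite, and check $n=0$ separately (both sides trivially vanish/reduce to $1/(\lambda_1-\lambda_2)$ cycles, whose symmetrized sum is $\delta_{k,2}/(\lambda_1-\lambda_2)^2$ — this needs the known identity that cyclic sums of Cauchy kernels vanish for $k\ge3$). The main obstacle I expect is the second step: bookkeeping the diagonal (coincident-index) terms so that, after partial fractions, all non-cyclic remainders telescope exactly into the $\delta_{k,2}$ correction, rather than leaving spurious $1/(\lambda_a-\lambda_b)$ factors; I would handle it by first proving the $k=2$ case by hand, then inducting using the Fredholm log-det structure, where $B_n=K$-conjugate plus Cauchy kernel makes the cyclic identity for Cauchy kernels do the cancellation.
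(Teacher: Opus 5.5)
Your skeleton matches the paper's: Dyson's formula, then the cumulant of the determinantal process as a sum over set partitions of cyclic $K_n$-integrals, then partial fractions, then \eqref{ax1} with Christoffel--Darboux to produce $\mathcal J_n$. Your handling of $k=1$ (the $-n/\lambda_1$ from $\tr M^0=n$) and of $n=0$ is right. But the step that carries the proof is only announced, and what you say about it is inaccurate.

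\textbf{The cyclic integrals do not collapse to $\mathcal J_n$-cycles.} The tool you need, and never state, is the rank-one identity
\begin{equation*}
\int_\RR K_n(x,y)\,\frac{K_n(y,z)}{\lambda-y}\,\dif\mu(y)=K_n(x,\lambda)\,\mathcal J_n(z,\lambda)+\frac{K_n(x,z)}{\lambda-z}.
\end{equation*}
It follows from \eqref{ax1}, \eqref{c-d}, \eqref{trans-cd} and \eqref{reduced-cd}, and is \eqref{ek} at $k=2$. Iterating it around a $k$-cycle yields three kinds of terms, as in \eqref{lk}:
\begin{itemize}
\item the $\mathcal J_n$-cycle $(-1)^k\prod_i\mathcal J_n(\lambda_{i+1},\lambda_i)$;
\item the pure Cauchy cycle $-\prod_i(\lambda_i-\lambda_{i+1})^{-1}$;
\item a whole family of shorter cyclic integrals $\ell_t(\lambda_{a_1},\dots,\lambda_{a_t})$ with rational coefficients.
\end{itemize}
Getting this closed form is what the paper's $E_k$ recursion and $\ell_{k,(m)}$ table are for.

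The theorem then hinges on one cancellation. After symmetrization, the coefficients of the shorter integrals must be exactly $\sum\prod_i Q_{a_i,P_i}$. These are the coefficients that partial fractions produce from partitions with non-singleton blocks, so the two cancel; this is \eqref{cor-L}, via \eqref{claimQ}. The $\delta_{k,2}$ term comes only from the Cauchy cycle, which vanishes after symmetrization for $k\ge3$. It does not come from the diagonal contributions, as you suggest. ``Do $k=2$ by hand and induct'' names no inductive statement that would deliver any of this.

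\textbf{The generating function you propose is the wrong one.} You take $f=\sum_it_i/(\lambda_i-x)$, which is linear in the $t_i$. Then the coefficient of $t_1\cdots t_k$ in $\log\det(1-K_nf)$ only sees distinct eigenvalue indices. It is, up to sign, exactly the singleton-block term you displayed: a factorial cumulant, not the connected correlator.

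Already for $k=2$ this fails. Set $\ell_1(\lambda)=\int_\RR K_n(y,y)\,\dif\mu(y)/(y-\lambda)$. Then $\ell_2(\lambda_1,\lambda_2)$ contains the remainder $-\bigl(\ell_1(\lambda_1)-\ell_1(\lambda_2)\bigr)/(\lambda_2-\lambda_1)$. This is cancelled only by the coincident-index term $\int_\RR K_n(y,y)\,\dif\mu(y)/\bigl((\lambda_1-y)(\lambda_2-y)\bigr)$, which $\log\det(1-K_nf)$ does not contain.

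The correct object is the multilinear part of $\log\det\bigl(1+K_n(e^{f}-1)\bigr)$, i.e.\ the paper's sum $\sum_\Pi C_{\mathbf Z_k,\Pi}$ over set partitions. Even with that fixed, you still have to supply the analogues of \eqref{lk} and \eqref{cor-L}.
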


The above formula~\eqref{maineq} in Theorem~\ref{main} was essentially known (perhaps in different generalities) 
	from \cite{DY,Ruzza,Toda,Yin} (see the discussions above).
We will give a new proof of Theorem~\ref{main} in Section~\ref{T1sec} using~\eqref{c-d},~\eqref{dpp}. Our proof may shed some lights on the relationship between the probability theory of determinantal point processes (cf.~\cite{DPP}) and the theory of integrable hierarchies.

	The following lemma is similar to~\cite[Lemma 4.4]{YZ}.

\begin{lem} \label{exp-an}
	For $n\in\ZZ_{\ge0}$, we have the expansion
	\begin{equation}
		\frac{\lambda_2^n}{\lambda_1^n}B_n(\lambda_1,\lambda_2)=\frac{1}{\lambda_1-\lambda_2}+\sum_{k,l\ge0}\frac{A_{k,l}(n)}{\lambda_1^{l+1}\lambda_2^{k+1}}.\label{affine}
	\end{equation}
\end{lem}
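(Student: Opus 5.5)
The plan is to factor out the leading behaviour of the wave functions and reduce the statement to a single ``Wronskian'' identity on the diagonal $\lambda_1=\lambda_2$. The case $n=0$ is immediate, since $B_0(\lambda_1,\lambda_2)=1/(\lambda_1-\lambda_2)$ and we may take all $A_{k,l}(0)=0$. So assume $n\ge1$.

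\textbf{Step 1 (normalize the wave functions).} Write
\[
\psi_A^\star(\lambda,n)=\lambda^n a_n(\lambda),\qquad \psi_B^\star(\lambda,n)=h_n\lambda^{-n}b_n(\lambda),
\]
where $a_n,b_n\in 1+\lambda^{-1}\CC[[\lambda^{-1}]]$. In fact $a_n$ is a polynomial in $\lambda^{-1}$, since $\pi_n$ is a polynomial. Substituting into \eqref{dstar} and the definition of $B_n$, a direct computation gives
\[
\frac{\lambda_2^n}{\lambda_1^n}B_n(\lambda_1,\lambda_2)=\frac{N(\lambda_1,\lambda_2)}{\lambda_1-\lambda_2},
\]
where
\[
N(\lambda_1,\lambda_2):=a_n(\lambda_1)b_{n-1}(\lambda_2)-\frac{h_n}{h_{n-1}}\frac{a_{n-1}(\lambda_1)b_n(\lambda_2)}{\lambda_1\lambda_2}.
\]
This $N$ is a formal power series in $u=\lambda_1^{-1}$ and $v=\lambda_2^{-1}$ with constant term $1$.

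\textbf{Step 2 (diagonal identity).} I would show that $N(\lambda,\lambda)=1$, which is equivalent to
\[
\psi_A^\star(\lambda,n)\psi_B^\star(\lambda,n-1)-\psi_A^\star(\lambda,n-1)\psi_B^\star(\lambda,n)=\lambda h_{n-1}.
\]
It suffices to prove the exact identity
\[
\pi_n(\xi)\widehat\pi_{n-1}(\xi)-\pi_{n-1}(\xi)\widehat\pi_n(\xi)=-h_{n-1},\qquad \xi\in\CC\backslash\RR,
\]
and then pass to asymptotic expansions in a sector; this is legitimate because asymptotic expansions multiply.

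To prove the exact identity, write the left-hand side as
\[
\int_\RR\frac{\pi_n(\xi)\pi_{n-1}(x)-\pi_{n-1}(\xi)\pi_n(x)}{x-\xi}\dif\mu(x).
\]
By the Christoffel--Darboux formula \eqref{c-d}, this equals $-h_{n-1}\int_\RR K_n(\xi,x)\dif\mu(x)$. Orthogonality kills every term of $K_n$ except $l=0$, and that term gives $\pi_0(\xi)h_0/h_0=1$. Hence the integral equals $-h_{n-1}$, as required.

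\textbf{Step 3 (divided difference).} In the variables $u,v$, the series $N-1$ vanishes on $u=v$. Therefore $N(u,v)-1=(u-v)Q(u,v)$ for some formal power series $Q$, obtained by the usual divided-difference argument applied monomial by monomial, $u^a v^b-v^{a+b}$, and so on. Since $1/(\lambda_1-\lambda_2)=uv/(v-u)$, we get
\[
\frac{\lambda_2^n}{\lambda_1^n}B_n-\frac{1}{\lambda_1-\lambda_2}=\frac{N-1}{\lambda_1-\lambda_2}=-uv\,Q(u,v).
\]
The right-hand side is a formal series of the form $\sum_{k,l\ge0}A_{k,l}(n)\lambda_1^{-l-1}\lambda_2^{-k-1}$, which proves \eqref{affine}.

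The only substantive step is the diagonal identity in Step 2. The remaining care is in justifying that the expression $1/(\lambda_1-\lambda_2)$ is expanded consistently. This is handled by working purely formally in $u,v$ once $N(u,u)=1$ is known.
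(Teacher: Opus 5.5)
Your proposal is correct and follows essentially the paper's argument. The paper also normalizes to $\phi_A,\phi_B$, obtains exactly your $N/(\lambda_1-\lambda_2)$ (its \eqref{an}), and concludes from the Wronskian identity \eqref{wronskian} (equivalently $N(\lambda,\lambda)=1$) combined with a divided-difference expansion. The differences are cosmetic: you derive \eqref{reduced-cd} directly from \eqref{c-d} and orthogonality instead of citing it, and your first-order divided difference in $u=\lambda_1^{-1}$, $v=\lambda_2^{-1}$ makes the factor $\lambda_1^{-1}\lambda_2^{-1}$ of the remainder immediate, whereas the paper uses the second-order expansions \eqref{phi2}.
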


Building on Theorem~\ref{main}, Lemma~\ref{exp-an} and Zhou's theorem~\cite{Emergent} (cf.~also~\cite[Corollary 2.2]{YZ},~\cite[Lemma 3.1]{xy}), we will give a new proof of the following result.
\begin{cor}[\cite{HH,Painleve,Yin,Zhou1}] \label{KPthm}
	The partition function of the hermitian matrix model associated to $\dif\mu$ is a KP tau-function.
\end{cor}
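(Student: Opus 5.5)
The plan is to combine Theorem~\ref{main} with Lemma~\ref{exp-an} and then invoke the converse direction of Zhou's theorem~\cite{Emergent}, in the form of \cite[Corollary 2.2]{YZ} or \cite[Lemma 3.1]{xy}. That result says: if a formal power series $\tau(\mathbf t)$ with $\tau(0)=1$ has logarithmic derivatives at $\mathbf t=0$ given by~\eqref{kpmain0}--\eqref{kpmain} for some kernel $\widehat A$ of the form $\frac{1}{\lambda_1-\lambda_2}+\sum_{i,j\ge0}A_{i,j}\lambda_1^{-j-1}\lambda_2^{-i-1}$, then $\tau$ is the KP tau-function of the point of the Sato Grassmannian with affine coordinates $A_{i,j}$.

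First, I would fix the normalization. Let $Z_n(\mathbf t)=\langle \exp(\sum_{j\ge1} t_j\tr M^j)\rangle(n)$ (normalized so $Z_n(0)=1$), so that $\partial^k\log Z_n/\partial t_{j_1}\cdots\partial t_{j_k}|_{\mathbf t=0}=\langle\tr M^{j_1}\cdots\tr M^{j_k}\rangle_c(n)$. The left-hand sides of~\eqref{kpmain0}--\eqref{kpmain} are then exactly $C_k(n;\lambda_1,\dots,\lambda_k)$.

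Second, I would set $\widehat A(\lambda_1,\lambda_2):=\frac{\lambda_2^n}{\lambda_1^n}B_n(\lambda_1,\lambda_2)$, which by Lemma~\ref{exp-an} has the required shape with $A_{i,j}=A_{i,j}(n)$.

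For $k\ge2$, the cyclic products in~\eqref{maineq} are unchanged by this conjugation, since $\prod_{i}\frac{\lambda_{\sigma(i+1)}^n}{\lambda_{\sigma(i)}^n}=1$ around a cycle. Hence Theorem~\ref{main} gives~\eqref{kpmain} verbatim.

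For $k=1$, I would expand near $\lambda=\lambda_1$:
\[
\frac{\lambda_1^n}{\lambda^n}B_n(\lambda,\lambda_1)-\frac{1}{\lambda-\lambda_1}=\Bigl(1-\frac{n(\lambda-\lambda_1)}{\lambda_1}+O((\lambda-\lambda_1)^2)\Bigr)B_n(\lambda,\lambda_1)-\frac{1}{\lambda-\lambda_1}.
\]
Since $B_n(\lambda,\lambda_1)$ has a simple pole at $\lambda=\lambda_1$ with residue $1$, the limit equals $\lim\bigl(B_n-\frac1{\lambda-\lambda_1}\bigr)-\frac{n}{\lambda_1}$. By~\eqref{maineq1} this is $C_1(n;\lambda_1)$, which is~\eqref{kpmain0}.

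The residue-$1$ claim is the step I expect to be the main obstacle. It needs the Wronskian-type identity
\[
\psi_A^\star(\lambda,n)\psi_B^\star(\lambda,n-1)-\psi_A^\star(\lambda,n-1)\psi_B^\star(\lambda,n)=\lambda h_{n-1}.
\]
I would derive it from the recurrence~\eqref{three-term}, which both $\pi_n$ and $-\lambda\widehat\pi_n$ satisfy, together with $\gamma_n=h_n/h_{n-1}$. The Casoratian is then $n$-independent up to the factor $h_{n-1}$, and at $n=1$ it can be checked directly. The case $n=0$, where $B_0=1/(\lambda_1-\lambda_2)$, is trivial.

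Finally, since all $k$-point logarithmic derivatives of $Z_n$ at $\mathbf t=0$ agree with those of the tau-function attached to $\{A_{i,j}(n)\}$, and both equal $1$ at $\mathbf t=0$, the two formal series coincide. So $Z_n$ is a KP tau-function, which proves Corollary~\ref{KPthm}.
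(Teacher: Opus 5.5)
Your proposal is correct and follows the paper's proof, which is the single line ``Theorem~\ref{main} plus Lemma~\ref{exp-an} plus the converse of Zhou's theorem''; your cancellation of the factors $\lambda_{\sigma(i+1)}^n/\lambda_{\sigma(i)}^n$ around each cycle and your accounting for the $-n/\lambda_1$ term when $k=1$ simply make explicit what the paper leaves implicit. The residue-$1$ step you flag as the main obstacle is in fact already contained in the singular part $\frac{1}{\lambda_1-\lambda_2}$ of~\eqref{affine}, equivalently in~\eqref{wronskian} obtained from~\eqref{reduced-cd}, although your Casoratian argument with the direct check at $n=1$ is also a valid way to derive it.
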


The $A_{k,l}(n)$ appeared in~\eqref{affine} are the affine coordinates of the KP tau-function associated to $\dif\mu$. Let $\{m_k\}_{k\ge0}$ be the moments of $\dif\mu$. We introduce the notation
\begin{equation}
	H_{(i_1,\cdots,i_n;j_1,\cdots,j_n)}:=(m_{i_k+j_l})_{k,l=1}^n.
\end{equation}
By expanding formula~\eqref{dstar}, we will give a new proof of the following

	\begin{thm}[\cite{Zhou1} (cf.~\cite{Painleve})] \label{hankel}
		For $j<n$, the affine coordinates $A_{i,j}(n)$ read
		\begin{equation} \label{hk}
			A_{i,j}(n)=\frac{(-1)^j}{\Delta_{n-1}}\det H_{(0,1,\cdots,\widehat{n-j-1},\cdots,n-1,n+i;0,1,\cdots,n-1)},
		\end{equation}
		where $\Delta_{n-1}:=\det H_{(0,1,\cdots,n-1;0,1,\cdots,n-1)}$.
	\end{thm}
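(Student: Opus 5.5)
The proof will work at the level of formal Laurent expansions at $\lambda_1,\lambda_2=\infty$. Here $\psi_A^\star(\lambda,n)=\pi_n(\lambda)$ exactly, and expanding $1/(x-\lambda)$ gives
\[
\psi_B^\star(\lambda,m)=\sum_{k\ge0}\lambda^{-k}\int_\RR\pi_m(x)x^k\dif\mu(x),
\]
with only $k\ge m$ contributing by orthogonality. In other words, $\psi_B^\star(\lambda_2,m)$ is the expansion of $\lambda_2\int\pi_m(x)(\lambda_2-x)^{-1}\dif\mu(x)$. The first step is to find a closed form for $B_n$. I substitute this into \eqref{dstar} and use the partial fraction identity
\[
\frac{1}{(\lambda_1-\lambda_2)(\lambda_2-x)}=\frac{1}{\lambda_1-x}\Big(\frac{1}{\lambda_1-\lambda_2}+\frac{1}{\lambda_2-x}\Big).
\]
By the Christoffel--Darboux identity \eqref{c-d}, the combination $\pi_n(\lambda_1)\pi_{n-1}(x)-\pi_{n-1}(\lambda_1)\pi_n(x)$ divided by $\lambda_1-x$ equals $h_{n-1}K_n(\lambda_1,x)$. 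Since $\int K_n(\lambda_1,x)\dif\mu(x)=1$ (the reproducing property applied to the constant $1$), I expect to obtain
\[
B_n(\lambda_1,\lambda_2)=\frac{1}{\lambda_1-\lambda_2}+\int_\RR\frac{K_n(\lambda_1,x)}{\lambda_2-x}\dif\mu(x),
\]
understood as an expansion in $\lambda_2^{-1}$. This is the step that needs care: the formal manipulations must be justified, either by working with the actual analytic functions in a sector of $\CC\backslash\RR$ and then taking asymptotic expansions, or purely coefficientwise.

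Next I multiply by $\lambda_2^n/\lambda_1^n$. Writing
\[
\frac{\lambda_2^n}{\lambda_1^n(\lambda_1-\lambda_2)}=\frac{1}{\lambda_1-\lambda_2}-\sum_{a=0}^{n-1}\frac{\lambda_2^a}{\lambda_1^{a+1}}
\]
and expanding the integral as $\sum_{k\ge0}\lambda_2^{-k-1}\int K_n(\lambda_1,x)x^k\dif\mu(x)$, the reproducing property gives $\int K_n(\lambda_1,x)x^k\dif\mu=\lambda_1^k$ for $k<n$. Hence the terms with $k<n$ cancel exactly against the subtracted finite sum (matching $a=n-1-k$). What remains is
\[
\sum_{i\ge0}\lambda_2^{-i-1}\lambda_1^{-n}\int_\RR K_n(\lambda_1,x)x^{n+i}\dif\mu(x).
\]
Thus $A_{i,j}(n)$ is the coefficient of $\lambda_1^{n-1-j}$ in the polynomial $\int K_n(\lambda_1,x)x^{n+i}\dif\mu(x)$, which has degree at most $n-1$; this is also why the condition $j<n$ appears. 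The same computation recovers Lemma~\ref{exp-an}, and the case $n=0$ is trivial.

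Finally I express $K_n$ through moments. With $H=(m_{a+b})_{a,b=0}^{n-1}$, the standard identity is $K_n(\lambda,x)=\sum_{a,b=0}^{n-1}\lambda^a(H^{-1})_{ab}x^b$, obtained by checking the reproducing property on $1,x,\dots,x^{n-1}$. Therefore
\[
A_{i,j}(n)=\sum_b (H^{-1})_{n-1-j,b}\,m_{b+n+i}.
\]
By Cramer's rule, using the symmetry of $H$, this equals $\Delta_{n-1}^{-1}$ times the determinant of $H$ with row $n-1-j$ replaced by $(m_{n+i+l})_{l=0}^{n-1}$. Moving that row to the last position costs $j$ transpositions, which gives the sign $(-1)^j$ and the index pattern $(0,\dots,\widehat{n-j-1},\dots,n-1,n+i;0,\dots,n-1)$ of \eqref{hk}. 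I expect the main obstacle to be the rigorous justification in the first step (interchanging the expansion with the integral against $\dif\mu$); the remaining steps are algebraic bookkeeping.
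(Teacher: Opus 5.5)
Your proof is correct, and it takes a genuinely different and much more direct route than the paper. The paper never extracts a general coefficient of \eqref{affine} directly. From the $\phi_A,\phi_B$ form of $B_n$ it obtains only the boundary values $A_{i,0}(n)=c_{i+1}(n+i)$ and $A_{0,j}(n)=-a_{j+1}(n)$, together with the differences $T_{k,l}(n)=A_{k-1,l}(n)-A_{k,l-1}(n)$ expressed through $a_k$, $c_k$, $\gamma_n$ (Lemma~\ref{at}). It rewrites these as Hankel determinants and then inducts along antidiagonals via $A_{i-1,j+1}=A_{i,j}+T_{i,j+1}$. Closing that induction needs Sylvester's identity and the three-term minor identity of Lemma~\ref{det}. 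That lemma in turn uses the relation between minors of $\mathcal H$ and those of its adjugate, the Bezoutian description of the adjugate through $K_{n+1}$, and a telescoping sum coming from the three-term recurrence.

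You instead write $B_n=\frac{1}{\lambda_1-\lambda_2}+\int_\RR\frac{K_n(\lambda_1,x)}{\lambda_2-x}\dif\mu(x)$. This is just \eqref{trans-cd} transported to asymptotic expansions, so the justification you worry about is already covered. One way is to apply \eqref{poincare} to each $\pi_l$ with $l<n$, since $K_n(\lambda_1,\cdot)$ is a combination of these. Alternatively, check the identity coefficientwise in $\CC[\lambda_1]((\lambda_2^{-1}))$, where $\lambda_1-\lambda_2$ is invertible. The reproducing property then yields every coefficient at once: $A_{i,j}(n)$ is the coefficient of $\lambda_1^{n-1-j}$ in the $L^2(\dif\mu)$-projection of $x^{n+i}$ onto polynomials of degree $<n$.

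The identity $K_n(\lambda,x)=\sum_{a,b}\lambda^a(H^{-1})_{ab}x^b$ that finishes your argument is exactly the Bezoutian fact the paper uses only as an auxiliary tool inside Lemma~\ref{det}. Your route therefore bypasses the induction and both determinant identities. It also gives for free Lemma~\ref{exp-an}, the boundary formulas of Lemma~\ref{at}, and the vanishing $A_{i,j}(n)=0$ for $j\ge n$, which explains the hypothesis $j<n$.

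What the paper's longer route buys is the difference formula \eqref{klcase} for $T_{k,l}(n)$, which it reuses for Theorem~\ref{rational} and in the proof of Theorem~\ref{dual-ex}. Note, however, that substituting $K_n(\lambda_1,x)=\sum_{l<n}\pi_l(\lambda_1)\pi_l(x)/h_l$ into your projection formula gives $A_{i,j}(n)=\sum_{l=n-1-j}^{n-1}a_{l-n+1+j}(l)\,c_{n+i-l}(n+i)$. That formula would deliver the polynomiality statement of Theorem~\ref{rational} just as directly.
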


	From the perspective of the KP hierarchy, following~\cite{Zhou2,Zhou3}, for two hermitian matrix models $M$ and $\widetilde M$, we say that $\widetilde M$ is a {\it KP dual} of $M$ if for any Schur polynomial $s_\rho$ in $n$ variables associated to the partition $\rho$,
\begin{equation}
	\widetilde{\langle s_\rho\rangle}(n)=(-1)^{|\rho|}\langle s_{\rho'}\rangle(-n),\qquad |n|\ge\max\{2,|\rho|\},
\end{equation}
where $\rho'$ is the conjugate partition of $\rho$. Obviously, $M$ is also a KP dual of $\widetilde M$. The following theorem gives a class of hermitian matrix models KP dual to each other.

\begin{thm} \label{dual-ex}
	For $\beta\in\RR$ and $\gamma>0$, consider hermitian matrix models $M,\widetilde M$ associated respectively to 
	\begin{equation}
		\dif\mu(\lambda)=\frac{1}{2\pi\gamma}\sqrt{4\gamma-(\lambda-\beta)^2}\mathbf{1}_I(\lambda)\dif\lambda,\qquad \dif\widetilde\mu(\lambda)=\frac{1}{\pi}\frac{1}{\sqrt{4\gamma-(\lambda-\beta)^2}}\mathbf{1}_I(\lambda)\dif\lambda,
	\end{equation}
	where $I:=(\beta-2\sqrt{\gamma},\beta+2\sqrt{\gamma})$. Then $\widetilde M$ is a KP dual of $M$.
\end{thm}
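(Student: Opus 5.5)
The plan is to reduce KP duality to a statement about affine coordinates, and then compute those coordinates explicitly. Both measures have essentially constant Jacobi parameters, so their orthogonal polynomials and Cauchy transforms can be written in closed form in a uniformizing variable.

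\textbf{Step 1: duality as a transposition of affine coordinates.} By Corollary~\ref{KPthm} and Lemma~\ref{exp-an}, each model gives a KP tau-function with affine coordinates $A_{i,j}(n)$, resp.\ $\widetilde A_{i,j}(n)$. For a tau-function given by affine coordinates, the Schur coefficient $\langle s_\rho\rangle(n)$ is, up to a sign $(-1)^{b_1+\cdots+b_r}$, the determinant $\det(A_{b_k,a_l}(n))_{k,l=1}^r$, where $\rho=(a_1,\dots,a_r\,|\,b_1,\dots,b_r)$ in Frobenius notation. Conjugation $\rho\mapsto\rho'$ swaps $a$ and $b$. It therefore suffices to prove
\[
\widetilde A_{i,j}(n)=(-1)^{i+j+1}A_{j,i}(-n)
\]
(the sign to be fixed by matching the $(-1)^{|\rho|}$ in the definition) for all $i,j$ with $i+j+1\le|\rho|\le|n|$. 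This must be understood after verifying that, in this range, both sides are polynomials in $n$, so that the evaluation at $-n$ is meaningful.

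\textbf{Step 2: explicit wave functions.} Set $\lambda=\beta+\sqrt\gamma\,(z+z^{-1})$ with $z\to\infty$ as $\lambda\to\infty$. For the semicircle law, $\beta_l=\beta$ and $\gamma_l=\gamma$, so
\[
\pi_n=\gamma^{n/2}\,\frac{z^{n+1}-z^{-n-1}}{z-z^{-1}},\qquad h_n=\gamma^n,
\]
and the Cauchy transform is a multiple of $z^{-n-1}$. For the arcsine law, $\widetilde\pi_n=\gamma^{n/2}(z^n+z^{-n})$ for $n\ge1$, $\widetilde h_n=2\gamma^n$ for $n\ge1$ (with $\widetilde h_0=1$), and the Cauchy transform is a multiple of $z^{-n}/(z-z^{-1})$. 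Expanding these as formal series in $\lambda^{-1}$ gives $\psi_A^\star$ and $\psi_B^\star$. Substituting into~\eqref{dstar} then gives closed forms for $B_n$ and $\widetilde B_n$ as rational functions of $z_1,z_2$. The structural observation is that replacing $n$ by $-n$ exchanges the roles of the $A$- and $B$-wave functions: $z^{n+1}/(z-z^{-1})$ versus $z^{-n}$, and similarly for the arcsine family. This is why $\widetilde B_n(\lambda_1,\lambda_2)$ should match $B_{-n}(\lambda_2,\lambda_1)$ up to the normalizing factors $\lambda_2^n/\lambda_1^n$ and a sign.

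\textbf{Step 3: read off coefficients and handle the boundary.} Expand both sides via~\eqref{affine} and compare. Alternatively, use Theorem~\ref{hankel}, with the moments being Catalan-type numbers for the semicircle and central binomial numbers for the arcsine law, to obtain $A_{i,j}(n)$ as explicit polynomials in $n$ for $j<n$.

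The main obstacle is the boundary effect. The formulas above hold only for $n\ge1$, or for $j<n$, while the formal substitution $n\mapsto-n$ requires polynomial dependence on $n$. On top of this, the arcsine family has the exceptional $\widetilde h_0$ and $\widetilde\gamma_1=2\gamma$. The hypothesis $|n|\ge\max\{2,|\rho|\}$ should be exactly what keeps all indices $i,j\le|\rho|-1$ inside the range where the $z$-expansion coefficients are polynomial in $n$ and unaffected by these low-index exceptions. I would first verify this carefully for hooks $\rho=(a|b)$, i.e.\ for single entries $A_{b,a}$, and then invoke the determinantal formula to conclude for general $\rho$.
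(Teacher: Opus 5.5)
Your route is viable, and it is genuinely different from the paper's. The paper never uniformizes and never uses the Schur expansion in affine coordinates. Instead it:
\begin{itemize}
\item writes down the polynomial formulas for $a_k(n)$ and $c_k(n)$ of the semicircle family;
\item asserts that $\widetilde a_k(n):=c_k(-n-1+k)$ are the coefficients of the monic arcsine polynomials (recurrence with $\widetilde\gamma_n=\gamma+\delta_{n,1}\gamma$), and that $\widetilde c_k(n)=a_k(-n-1+k)$;
\item feeds this into Lemma~\ref{at} to get $\widetilde A_{0,i}(n)=-A_{i,0}(-n)$, $\widetilde A_{i,0}(n)=-A_{0,i}(-n)$ and $\widetilde T_{i,j}(n)=T_{j,i}(-n)$, hence $\widetilde A_{i,j}(n)=-A_{j,i}(-n)$;
\item uses Theorem~\ref{main} (reversing cycles) to obtain $\widetilde{\langle\tr M^{i_1}\cdots\tr M^{i_k}\rangle}_c(n)=(-1)^k\langle\tr M^{i_1}\cdots\tr M^{i_k}\rangle_c(-n)$;
\item passes to $\langle p_\lambda\rangle$, and reaches $s_\rho$ via the Frobenius formula and $\chi^\rho_\lambda=(-1)^{|\lambda|-l(\lambda)}\chi^{\rho'}_\lambda$.
\end{itemize}

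Your Step~2 produces the key identity in one stroke, and your heuristic is right. Put $w=\sqrt\gamma z/\lambda$ and $u=z/(z-z^{-1})$. Up to errors $O(\lambda^{-2n})$ in the $A$-functions only, you get:
\begin{itemize}
\item semicircle: $\phi_A=w^nu$ and $\phi_B=w^{-n-1}$;
\item arcsine: $\widetilde\phi_A=w^n$ and $\widetilde\phi_B=w^{-n-1}u$.
\end{itemize}
Substituting into the expression for $\frac{\lambda_2^n}{\lambda_1^n}B_n$ used in the proof of Lemma~\ref{exp-an}, and using $\lambda_1-\lambda_2=\sqrt\gamma(z_1-z_2)(1-\frac{1}{z_1z_2})$, gives $\frac{\lambda_2^n}{\lambda_1^n}B_n=\frac{(w_1/w_2)^nu_1}{\sqrt\gamma(z_1-z_2)}$ and $\frac{\lambda_2^n}{\lambda_1^n}\widetilde B_n=\frac{(w_1/w_2)^nu_2}{\sqrt\gamma(z_1-z_2)}$. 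Hence $\widetilde B_n(\lambda_1,\lambda_2)=-B_{-n}(\lambda_2,\lambda_1)$ exactly, where $B_{-n}$ is the continuation with coefficients polynomial in $n$. The truncated pieces only touch $A_{k,l}$ with $l\ge 2n-1$, and the exceptional $\widetilde\gamma_1$ only affects $n=1$. This settles your boundary concern for $n\ge\max\{2,|\rho|\}$, since there $l\le|\rho|-1\le n-1$.

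Combining this with the affine-coordinate Schur expansion, where conjugation of $\rho$ is just transposition, lets you bypass Theorem~\ref{main} and the character identity. It also explains the duality conceptually: $n\mapsto -n$ swaps the $z^{\pm n}$-type wave functions of the two Chebyshev families. The paper's route, in exchange, needs no closed-form Cauchy transforms and yields the correlator-level duality as a by-product.

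One correction is needed. Your displayed target $\widetilde A_{i,j}(n)=(-1)^{i+j+1}A_{j,i}(-n)$ is wrong. The required and true relation has the uniform sign $-1$; your sign would introduce an extra factor $(-1)^{|\rho|-r}$ in the determinant. Already for $\beta=0,\gamma=1$ one has $A_{1,0}(n)=n$, $A_{0,1}(n)=n-1$, $\widetilde A_{1,0}(n)=n+1$ and $\widetilde A_{0,1}(n)=n$.

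Also, in the paper's convention, where $A_{0,l}(n)=-a_{l+1}(n)$, the Schur coefficient is $(-1)^{b_1+\cdots+b_r}\det(A_{a_k,b_l}(n))$, not $\det(A_{b_k,a_l}(n))$. For example, $\langle s_{(2)}\rangle(n)=A_{1,0}(n)=n$. This transposition does not affect your argument, because the needed sign is uniform.
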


We will give some preliminaries in Section~\ref{prel}, and prove Theorem~\ref{main} in Section~\ref{T1sec}. In Section~\ref{sec-kp}, we will give the proof of Corollary~\ref{KPthm} and Theorem~\ref{hankel}. We will also describe the polynomiality of the connected correlators in terms of coefficients of orthogonal polynomials in Section~\ref{section-rational}. We will prove Theorem~\ref{dual-ex} in Section~\ref{sec-dual}.
\newline
\par\noindent{\bf Acknowledgements.} D. Yang would like to thank Giulio Ruzza and Marco Bertola for very helpful discussions. We also thank Don Zagier for helpful discussions. D. Yang and J. Zhao are partially supported by NSFC No. 12371254 and CAS YSBR-032. J. Zhou is partly supported by NSFC No. 11890662.
	
\section{Preliminaries} \label{prel}

For $n\ge0$, we denote the expected value of a function $f:\mathcal{H}_n(E)\to\RR$ by~\cite{Eynard, Mehta}
	\begin{equation}
		\left\langle f(M)\right\rangle(n):=\frac{1}{Z_n}\int_{\mathcal{H}_n(E)}f(M)\dif\nu_n(M),
	\end{equation} where $\dif\nu_n$ is the unitarily invariant measure on $\mathcal{H}_n(E)$ defined via the spectral decomposition $M=U\diag(\lambda_1,\cdots,\lambda_n)U^*$ with respect to the measure $(\dif\mu)^n$ over $E^n$ and the Haar measure over $U(n)$, and
	\begin{equation}
		Z_n:=\int_{\mathcal H_n(E)}\dif\nu_n(M).
	\end{equation}
	Here $Z_0\equiv1$. For $n\ge0$, denote the disconnected $k$-point correlators by $\langle\tr M^{i_1}\cdots\tr M^{i_k}\rangle(n)$, and denote the connected $k$-point correlators $\langle\tr M^{i_1}\cdots\tr M^{i_k}\rangle_c(n)$ by the M\"obius relation~\cite{Rota,Mehta,Emergent}:
	\begin{align}
		\langle\tr M^{i_1}\cdots\tr M^{i_k}\rangle(n)&=\sum_{\text{partition }\mathcal{P}\text{ of }\mathbf Z_k}\prod_{I\in\mathcal{P}}\left\langle\prod_{j\in I}\tr M^{i_j}\right\rangle_c(n),\\
		\langle\tr M^{i_1}\cdots\tr M^{i_k}\rangle_c(n)&=\sum_{\text{partition }\mathcal{P}\text{ of }\mathbf Z_k}(-1)^{|\mathcal P|-1}(|\mathcal P|-1)!\prod_{I\in\mathcal{P}}\left\langle\prod_{j\in I}\tr M^{i_j}\right\rangle(n),
	\end{align}
	where $\mathbf Z_k:=\{1,\cdots,k\}$.

	We denote for $\lambda_1,\cdots,\lambda_k\in\CC\backslash\RR$ and $n\ge0$,
	\begin{align}
		\mathcal C_k(n;\lambda_1,\cdots,\lambda_k):=&\sum_{\text{partition }\mathcal{P}\text{ of }\mathbf Z_k}(-1)^{|\mathcal P|-1}(|\mathcal P|-1)!\prod_{I\in\mathcal{P}}\left\langle\prod_{j\in I}\tr\frac{1}{\lambda_j-M}\right\rangle(n)-\frac{n\delta_{k,1}}{\lambda_1}.
	\end{align}
	It is analytic on $(\CC\backslash\RR)^k$. For $n\ge0$, as $\lambda_1,\cdots,\lambda_k\to\infty$ within any sector in $(\CC\backslash\RR)^k$, we have
	\begin{equation} \label{asym-func}
		\mathcal C_k(n;\lambda_1,\cdots,\lambda_k)\sim C_k(n;\lambda_1,\cdots,\lambda_k).
	\end{equation}
We note that for the case of $\mathbb{CP}^1$, the analytic $k$-point functions were first discussed in~\cite{DYZcp1}.

	It is known~\cite{Ruzza} that for $n\ge0$, as $\lambda\to\infty$ within any sector in $\CC\backslash\RR$,
		\begin{equation} \label{poincare}
			\psi_B^\star(\lambda,n)=\frac{1}{\lambda^n}\sum_{j=0}^\infty \frac{1}{\lambda^j}\int_\RR y^{j+n}\pi_n(y)\dif\mu(y).
		\end{equation}
	In addition, it is known in~\cite{Ruzza,Eynard} that for any $n\ge1$ and $\lambda\in\CC\backslash\RR$,
		\begin{equation}
			\pi_n(\lambda)\widehat\pi_{n-1}(\lambda)-\pi_{n-1}(\lambda)\widehat\pi_n(\lambda)=-h_{n-1}.\label{reduced-cd}
		\end{equation} 
		Denote for $\lambda_1,\lambda_2\in\CC\backslash\RR$,
		\begin{align}
		\mathcal J_n(\lambda_1,\lambda_2):=&-\frac{1}{h_{n-1}}\frac{\pi_n(\lambda_1)\widehat\pi_{n-1}(\lambda_2)-\pi_{n-1}(\lambda_1)\widehat\pi_n(\lambda_2)}{\lambda_1-\lambda_2},\qquad n\ge1, \label{jn}\\
		\mathcal J_0(\lambda_1,\lambda_2):=&\frac{1}{\lambda_1-\lambda_2},\\
		\mathcal J^\circ_n(\lambda_1,\lambda_2):=&\mathcal J_n(\lambda_1,\lambda_2)-\frac{1}{\lambda_1-\lambda_2},\qquad n\ge0.
	\end{align}
	It is known that~\cite{Ruzza,Eynard}
	\begin{align}
		\mathcal J_n^\circ(\lambda_1,\lambda_2)=-\sum_{l=0}^{n-1}\frac{\pi_l(\lambda_1)\widehat\pi_l(\lambda_2)}{h_l}=-\int_\RR\frac{K_n(\lambda_1,\lambda)}{\lambda-\lambda_2}\dif\mu(\lambda),\qquad n\ge0. \label{trans-cd}
	\end{align}
	Obviously, $\mathcal J_n^\circ(\lambda_1,\lambda_2)$ is analytic on $(\CC\backslash\RR)^2$. Similar to~\cite{DYZcp1}, we prove the following

	\begin{lem} \label{analytic}
		For $k\ge2$ and $n\ge0$, the function
		\begin{equation} \label{hkk}
			\frac{(-1)^{k-1}}{k}\sum_{\sigma\in S_k}\prod_{i=1}^k\mathcal J_n(\lambda_{\sigma(i)},\lambda_{\sigma(i+1)})-\frac{\delta_{k,2}}{(\lambda_1-\lambda_2)^2}
		\end{equation}
		is analytic along $\lambda_i=\lambda_j,i\neq j$. Here $\lambda_1,\cdots,\lambda_k\in\CC\backslash\RR$.
	\end{lem}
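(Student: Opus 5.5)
We write $\mathcal J_n(\lambda_1,\lambda_2)=\frac{1}{\lambda_1-\lambda_2}+\mathcal J_n^\circ(\lambda_1,\lambda_2)$, where by \eqref{trans-cd} the second term $\mathcal J_n^\circ$ is analytic on $(\CC\backslash\RR)^2$, including the diagonal. The only possible singularities of \eqref{hkk} therefore come from the factors $\frac{1}{\lambda_{\sigma(i)}-\lambda_{\sigma(i+1)}}$. Fix a pair, say $\lambda_1=\lambda_2$ (by symmetry of \eqref{hkk} under permutations of the variables this suffices), and show that the expression has no pole along $\lambda_1=\lambda_2$ when the other $\lambda_j$ are kept distinct and away from it. Multiple coincidences are handled by Hartogs/Riemann removable singularity: a function analytic off a codimension-one analytic set and locally bounded there extends analytically. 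Alternatively, one checks each hyperplane $\lambda_i=\lambda_j$ separately; the complement of the union of pairwise intersections has codimension two, and Hartogs' theorem fills it in.

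For $k\ge3$, group the cyclic products $\prod_{i=1}^k\mathcal J_n(\lambda_{\sigma(i)},\lambda_{\sigma(i+1)})$ (indices mod $k$) according to whether $1,2$ are adjacent in the cycle.

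\emph{Non-adjacent.} If they are not adjacent, no factor is singular at $\lambda_1=\lambda_2$.

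\emph{Adjacent.} If they are adjacent, pair the cycle $(\dots,a,1,2,b,\dots)$ with $(\dots,a,2,1,b,\dots)$, obtained by swapping $1$ and $2$ in the cycle. Their sum is
\[
\mathcal J_n(\lambda_a,\lambda_1)\mathcal J_n(\lambda_1,\lambda_2)\mathcal J_n(\lambda_2,\lambda_b)+\mathcal J_n(\lambda_a,\lambda_2)\mathcal J_n(\lambda_2,\lambda_1)\mathcal J_n(\lambda_1,\lambda_b)
\]
times a common regular factor. Writing $\mathcal J_n(\lambda_1,\lambda_2)=\frac{1}{\lambda_1-\lambda_2}+\text{regular}$ and $\mathcal J_n(\lambda_2,\lambda_1)=-\frac{1}{\lambda_1-\lambda_2}+\text{regular}$, the polar part is
\[
\frac{\mathcal J_n(\lambda_a,\lambda_1)\mathcal J_n(\lambda_2,\lambda_b)-\mathcal J_n(\lambda_a,\lambda_2)\mathcal J_n(\lambda_1,\lambda_b)}{\lambda_1-\lambda_2}.
\]
The numerator is analytic near the diagonal, since $\lambda_a,\lambda_b\ne\lambda_1,\lambda_2$, and it vanishes at $\lambda_1=\lambda_2$. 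Hence the quotient is regular.

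\emph{Case $k=3$.} Here $a=b$. The same argument applies, since the factor $\mathcal J_n(\lambda_a,\cdot)\mathcal J_n(\cdot,\lambda_a)$ is regular at $\lambda_1=\lambda_2\ne\lambda_a$.

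\emph{Case $k=2$.} For $k=2$ the expression is $-\mathcal J_n(\lambda_1,\lambda_2)\mathcal J_n(\lambda_2,\lambda_1)-\frac{1}{(\lambda_1-\lambda_2)^2}$. Its double pole cancels: the product contributes $-\frac{1}{(\lambda_1-\lambda_2)^2}$ with a minus sign, giving $+\frac{1}{(\lambda_1-\lambda_2)^2}$, which the subtracted term removes. The simple pole is
\[
-\frac{\mathcal J_n^\circ(\lambda_2,\lambda_1)-\mathcal J_n^\circ(\lambda_1,\lambda_2)}{\lambda_1-\lambda_2}.
\]
Its numerator is antisymmetric and analytic, so it vanishes on the diagonal and the quotient is analytic.

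\emph{Main obstacle.} The only delicate point is confirming that near $\lambda_1=\lambda_2$ we need the full regularity of $\mathcal J_n^\circ$ on the diagonal. This is provided by the integral representation \eqref{trans-cd}, since $K_n$ is a polynomial and $\lambda_2\notin\RR$. With that in hand, each step reduces to a difference quotient of analytic functions vanishing on the diagonal.
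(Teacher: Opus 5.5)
Your proposal is correct and is essentially the paper's argument. The paper also fixes one diagonal (there $\lambda_{k-1}=\lambda_k$), pairs each cyclic product in which the two coinciding variables are adjacent with the one where they are swapped, and obtains exactly your difference quotient $\bigl(\mathcal J_n(\lambda_a,\lambda_1)\mathcal J_n(\lambda_2,\lambda_b)-\mathcal J_n(\lambda_a,\lambda_2)\mathcal J_n(\lambda_1,\lambda_b)\bigr)/(\lambda_1-\lambda_2)$ plus $\mathcal J_n^\circ$-terms plus non-adjacent terms; it treats $k=2$ by the same double-pole cancellation. Your Hartogs step for points with several coincidences makes explicit what the paper leaves implicit, though it is the \emph{union} of pairwise intersections, not its complement, that has codimension two. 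Your $k=2$ simple-pole term $+\bigl(\mathcal J_n^\circ(\lambda_1,\lambda_2)-\mathcal J_n^\circ(\lambda_2,\lambda_1)\bigr)/(\lambda_1-\lambda_2)$ has the correct sign; the paper's display has the opposite sign, which is harmless for analyticity.
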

	\begin{proof}
	Without loss of generality, we show that~\eqref{hkk} is analytic along $\lambda_{k-1}=\lambda_k$. For $k\ge3$, we have
	\begin{align}
		&-\frac{1}{k}\sum_{\sigma\in S_k}\prod_{i=1}^k\mathcal J_n(\lambda_{\sigma(i)},\lambda_{\sigma(i+1)})\nn\\
		=&-\sum_{\sigma\in S_{k-2}}\left(\prod_{i=1}^{k-3}\mathcal J_n(\lambda_{\sigma(i)},\lambda_{\sigma(i+1)})\right)\frac{\mathcal J_n(\lambda_{\sigma(k-2)},\lambda_{k-1})\mathcal J_n(\lambda_k,\lambda_{\sigma(1)})-\mathcal J_n(\lambda_{\sigma(k-2)},\lambda_k)\mathcal J_n(\lambda_{k-1},\lambda_{\sigma(1)})}{\lambda_{k-1}-\lambda_k}\nn\\
		&-\sum_{\sigma\in S_{k-2}}\left(\prod_{i=1}^{k-3}\mathcal J_n(\lambda_{\sigma(i)},\lambda_{\sigma(i+1)})\right)\bigg(\mathcal J_n(\lambda_{\sigma(k-2)},\lambda_{k-1})\mathcal J_n^\circ(\lambda_{k-1},\lambda_k)\mathcal J_n(\lambda_k,\lambda_{\sigma(1)})\nn\\
		&\qquad\qquad\qquad\qquad\qquad\qquad\qquad\qquad+\mathcal J_n(\lambda_{\sigma(k-2)},\lambda_k)\mathcal J_n^\circ(\lambda_k,\lambda_{k-1})\mathcal J_n(\lambda_{k-1},\lambda_{\sigma(1)})\bigg)\nn\\
		&-\sum_{\substack{\sigma\in S_k\\\sigma(k)=k,\sigma(k-1)\neq k-1,\sigma(1)\neq k-1}}\prod_{i=1}^k\mathcal J_n(\lambda_{\sigma(i)},\lambda_{\sigma(i+1)}),
	\end{align}
	which is analytic along $\lambda_{k-1}=\lambda_k$.

	For $k=2$,~\eqref{hkk} reads
	\begin{align}
		&-\mathcal J_n^\circ(\lambda_1,\lambda_2)\mathcal J_n^\circ(\lambda_2,\lambda_1)-\frac{\mathcal J_n^\circ(\lambda_1,\lambda_2)-\mathcal J_n^\circ(\lambda_2,\lambda_1)}{\lambda_1-\lambda_2},
	\end{align}
	which is analytic along $\lambda_1=\lambda_2$. The lemma is proved.
	\end{proof}

	\begin{remark}
		By the three-term recurrence relations of $\pi_n$, we have
\begin{equation}
	L(\psi_A^\star(\lambda,n))=\lambda\psi_A^\star(\lambda,n),\qquad\psi_A^\star(\lambda,n)=(1+O(\lambda^{-1}))\lambda^n,\qquad n\in\ZZ_{\ge0}.
\end{equation}
By~\eqref{poincare} for $n\in\ZZ_{\ge0}$,
 \begin{equation}
	\psi_B^\star(\lambda,n)=(1+O(\lambda^{-1}))h_n\lambda^{-n}.
 \end{equation} Also for $n\in\ZZ_{\ge1}$,
 \begin{align}
	L(\psi_B^\star(\lambda,n))&=\psi_B^\star(\lambda,n+1)+\beta_n\psi_B^\star(\lambda,n)+\gamma_n\psi_B^\star(\lambda,n-1)\nn\\
	&\sim-\lambda\int_\RR\frac{1}{x-\lambda}(\pi_{n+1}(x)+\beta_n\pi_n(x)+\gamma_n\pi_{n-1}(x))\dif\mu(x)\nn\\
	&=-\lambda\int_\RR\frac{x\pi_n(x)}{x-\lambda}\dif\mu(x)\nn\\
 	&=-\lambda(\lambda\widehat\pi_n(\lambda)+h_0\delta_{n,0})\nn\\
	&\sim\lambda\psi_B^\star(\lambda,n).
 \end{align} In addition, from~\eqref{reduced-cd} for $n\in\ZZ_{\ge1}$, 
 \begin{align}
	\psi_A^\star(\lambda,n)\mathcal T^{-1}\psi_B^\star(\lambda,n)-\psi_B^\star(\lambda,n)\mathcal T^{-1}\psi_A^\star(\lambda,n)\sim&-\pi_n(\lambda)\lambda\widehat\pi_{n-1}(\lambda)+\lambda\widehat\pi_n(\lambda)\pi_{n-1}(\lambda)\nn\\
 	=&\lambda h_{n-1}. \label{wronskian}
 \end{align} 
 Thus for $n\in\ZZ_{\ge0}$, $\psi_A^\star(\lambda,n),\psi_B^\star(\lambda,n)$ satisfy the condition of forming a pair of wave functions.
	\end{remark}

\section{Proof of Theorem \ref{main}} \label{T1sec}

By~\eqref{asym-func} and Lemma~\ref{analytic}, we only need to prove for pairwise distinct $\lambda_1,\cdots,\lambda_k\in\CC\backslash\RR$,
\begin{align}
	\mathcal C_1(n;\lambda_1)=&\mathcal J_n^\circ(\lambda_1,\lambda_1)-\frac{n}{\lambda_1}, \label{mainanaly1}\\
	\mathcal C_k(n;\lambda_1,\cdots,\lambda_k)=&\frac{(-1)^{k-1}}{k}\sum_{\sigma\in S_k}\prod_{i=1}^k\mathcal J_n(\lambda_{\sigma(i)},\lambda_{\sigma(i+1)})-\frac{\delta_{k,2}}{(\lambda_1-\lambda_2)^2},\qquad k\ge2, \label{mainanaly}
\end{align}
(cf.~\cite{BE,DYZcp1,FY,Laguerre,JUE,corJUE}). The case $n=0$ is direct. We always let $n\ge1$ throughout this section. We first prove a lemma.

	\begin{lem}
		For any $l\ge k\ge 0$,
		\begin{align}
			\widehat{\pi_k\pi_l}(\lambda)=\pi_k(\lambda)\widehat\pi_l(\lambda).\label{ax1}
		\end{align}
	\end{lem}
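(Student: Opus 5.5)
Here $\widehat{f}(\lambda)$ denotes the Cauchy--Hilbert--Stieltjes transform $\int_\RR \frac{f(x)}{x-\lambda}\dif\mu(x)$ of a polynomial $f$, extending~\eqref{chs}. The plan is a direct computation from the definition, using only orthogonality and the fact that $\pi_k$ is a polynomial of degree $k$. We start from
\begin{equation*}
\widehat{\pi_k\pi_l}(\lambda)-\pi_k(\lambda)\widehat\pi_l(\lambda)=\int_\RR\frac{\pi_k(x)-\pi_k(\lambda)}{x-\lambda}\,\pi_l(x)\dif\mu(x),
\end{equation*}
which is legitimate for $\lambda\in\CC\backslash\RR$ since both integrals converge (all moments are finite and $|x-\lambda|^{-1}$ is bounded on $\RR$).

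The key observation is that the difference quotient $q(x,\lambda):=\frac{\pi_k(x)-\pi_k(\lambda)}{x-\lambda}$ is a polynomial in $x$ of degree $k-1$, with coefficients polynomial in $\lambda$. For $k=0$ it vanishes identically, since $\pi_0\equiv1$. For $k\ge1$, expanding $\frac{x^m-\lambda^m}{x-\lambda}=\sum_{a+b=m-1}x^a\lambda^b$ writes $q(x,\lambda)$ as a finite sum $\sum_{a=0}^{k-1}c_a(\lambda)x^a$.

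Next we integrate term by term against $\pi_l(x)\dif\mu(x)$. Each term is $c_a(\lambda)\int_\RR x^a\pi_l(x)\dif\mu(x)$ with $a\le k-1<l$. Every monomial $x^a$ with $a<l$ lies in the span of $\pi_0,\dots,\pi_{l-1}$, so orthogonality gives $\int x^a\pi_l\dif\mu=0$. Hence the difference vanishes, which yields~\eqref{ax1}.

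There is no real obstacle. The only point needing care is that the hypothesis $l\ge k$ is exactly what is needed: the degree bound $k-1<l$ is where it enters. When $k=l$ the degree bound still gives $k-1<l$, so the argument applies equally to $\widehat{\pi_l^2}=\pi_l\widehat\pi_l$.
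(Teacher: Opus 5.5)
Your proposal is correct and follows essentially the same argument as the paper. Both write the difference as $\int_\RR\frac{\pi_k(y)-\pi_k(\lambda)}{y-\lambda}\pi_l(y)\dif\mu(y)$ and conclude by orthogonality, since the difference quotient is a polynomial in $y$ of degree $k-1<l$; the paper treats $k=0$ as direct, where you observe the quotient vanishes.
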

	\begin{proof}
		$k=0$ is direct. For $k\ge1$, we have
		\begin{align}
			\widehat{\pi_k\pi_l}(\lambda)-\pi_k(\lambda)\widehat\pi_l(\lambda)=&\int_\RR\frac{\pi_k(y)-\pi_k(\lambda)}{y-\lambda}\pi_l(y)\dif\mu(y)=0.
		\end{align} The last equation follows from the fact that $\frac{\pi_k(y)-\pi_k(\lambda)}{y-\lambda}$ is a polynomial in $y$ of degree $k-1<l$. Thus the lemma is proved.
	\end{proof}
	
	The key method we use in the proof of Theorem \ref{main} will be based on the four fundamental properties~\eqref{c-d},~\eqref{trans-cd},~\eqref{reduced-cd} and~\eqref{ax1} of orthogonal polynomials. 

	\begin{proof}[Proof of Theorem~\ref{main}]
		We first prove~\eqref{mainanaly1}. Using~\eqref{ax1},
		\begin{align}
	\mathcal C_1(n;\lambda_1)=&\int_{\RR^n}\sum_{i=1}^n\frac{1}{\lambda_1-y_i}\rho_n(n;y_1,\cdots,y_n)\dif\mu(y_1)\cdots\dif\mu(y_n)-\frac{n}{\lambda_1}\nn\\
	=&\sum_{i=1}^n\int_{\RR}\frac{1}{\lambda_1-y_i}\rho_1(n;y_i)\dif\mu(y_i)-\frac{n}{\lambda_1}\nn\\
	=&\int_{\RR}\frac{1}{\lambda_1-y}\sum_{l=0}^{n-1}\frac{\pi_l(y)^2}{h_l}\dif\mu(y)-\frac{n}{\lambda_1}\nn\\
	=&-\sum_{l=0}^{n-1}\frac{\pi_l(\lambda_1)\widehat\pi_l(\lambda_1)}{h_l}-\frac{n}{\lambda_1}\nn\\
	=&\mathcal J_n^\circ(\lambda_1,\lambda_1)-\frac{n}{\lambda_1}. \label{pf1}
	\end{align}

Now we prove~\eqref{mainanaly}. For a partition $\mathcal P$ of $\mathbf Z_k$, we write $\mathcal P=\{\mathcal P_1,\cdots,\mathcal P_{|\mathcal P|}\}$.
	The following formula was given in~\cite{corJUE}: for
	$k\ge1$,
	\begin{equation} \label{ckbar}
		\left\langle\prod_{j=1}^k\tr\frac{1}{\lambda_j-M}\right\rangle(n)=\sum_{\text{partition }\mathcal P\text{ of }\mathbf Z_k}\int_{\RR^l}\prod_{i=1}^{|\mathcal P|}\left(\prod_{j \in \mathcal P_i} \frac{1}{\lambda_{j}-y_{i}}\right)\det(K_{n}(y_{i},y_{j}))_{i,j=1}^{l} \dif\mu(y_{1})\cdots\dif\mu(y_{|\mathcal P|}).
	\end{equation}
To prove~\eqref{ckbar}, we consider the following two cases. For the case when $n\ge k$,
	\begin{align}
		\left\langle\prod_{j=1}^k\tr\frac{1}{\lambda_j-M}\right\rangle(n)=&\int_{\RR^{n}}\prod_{j=1}^{k}\left(\sum_{i=1}^{n}\frac{1}{\lambda_{j}-y_{i}}\right)\rho_n(n;y_1,\cdots,y_n)\dif\mu(y_{1})\cdots\dif\mu(y_{n}) \nn \\
		=&\sum_{\text{partition }\mathcal P\text{ of }\mathbf Z_k}\int_{\RR^l}\prod_{i=1}^{|\mathcal P|}\left(\prod_{j \in \mathcal P_i} \frac{1}{\lambda_{j}-y_{i}}\right)\det(K_{n}(y_{i},y_{j}))_{i,j=1}^{l} \dif\mu(y_{1})\cdots\dif\mu(y_{|\mathcal P|}).
	\end{align}
	For the case when $n<k$,
	\begin{align}
		&\left\langle\prod_{j=1}^k\tr\frac{1}{\lambda_j-M}\right\rangle(n)\nn\\
		=&\sum_{\substack{\text{partition }\mathcal P\text{ of }\mathbf Z_k\\|\mathcal P|\le n}}\int_{\RR^l}\prod_{i=1}^{|\mathcal P|}\left(\prod_{j \in \mathcal P_i} \frac{1}{\lambda_{j}-y_{i}}\right)\det(K_{n}(y_{i},y_{j}))_{i,j=1}^{l} \dif\mu(y_{1})\cdots\dif\mu(y_{|\mathcal P|})\nn\\
		=&\left(\sum_{\text{partition }\mathcal P\text{ of }\mathbf Z_k}-\sum_{\substack{\text{partition }\mathcal P\text{ of }\mathbf Z_k\\|\mathcal P|>n}}\right)\int_{\RR^l}\prod_{i=1}^{|\mathcal P|}\left(\prod_{j \in \mathcal P_i} \frac{1}{\lambda_{j}-y_{i}}\right)\det(K_{n}(y_{i},y_{j}))_{i,j=1}^{|\mathcal P|} \dif\mu(y_{1})\cdots\dif\mu(y_{|\mathcal P|}).
	\end{align}
	Note that for any $m>n$, the matrix
	\begin{align}
		(K_n(y_i,y_j))_{i,j=1}^m=\left(\frac{\pi_l(y_i)}{\sqrt{h_l}}\right)_{i=1,\cdots,m;l=0,\cdots,n-1} \left(\frac{\pi_l(y_j)}{\sqrt{h_l}}\right)^T_{j=1,\cdots,m;l=0,\cdots,n-1}
	\end{align} has rank $\le n<m$. Thus $\det(K_n(y_i,y_j))_{i,j=1}^m=0$. Therefore we obtain~\eqref{ckbar}.

	Now we claim that for $k\ge1$,
	\begin{equation}
		\mathcal C_{k}(n;\lambda_{1},\cdots,\lambda_{k})=\sum_{\text{partition }\Pi\text{ of }\mathbf Z_k}C_{\mathbf Z_k,\Pi}-\frac{n\delta_{k,1}}{\lambda_1}, \label{ck'}
	\end{equation}
	where for $I\subset\mathbf Z_k$ and a partition $\mathcal P=\{P_1,\cdots,P_{|\mathcal P|}\}$ of $I$, we denote
	\begin{equation}
		C_{I,\mathcal P}:=\int_{\RR^{|\mathcal P|}}\prod_{i=1}^{|\mathcal P|}\left(\frac{\dif\mu(y_i)}{\prod_{j\in P_i}(\lambda_j-y_i)}\right)\frac{(-1)^{|\mathcal P|-1}}{|\mathcal P|}\sum_{\sigma\in S_{|\mathcal P|}}\prod_{i=1}^{|\mathcal P|}K_{n}(y_{\sigma(i)},y_{\sigma(i+1)}).
	\end{equation}
	The case $k=1$ is straightforward by definition. We prove by induction on $k$. Assume that~\eqref{ck'} holds for all $m<k$. 
	For $I\subset\mathbf Z_k$ and a partition $\mathcal P$ of $\mathbf Z_k$, we write $\mathcal P(I)$ for the partition of $I$ induced by $\mathcal P$. For two paritions $\Pi,\mathcal P$ of $\mathbf Z_k$, we write $\Pi\prec\mathcal P$ if $\Pi$ is finer than $\mathcal P$. 
	Then for case $k$,
	\begin{align}
		&\mathcal C_{k}(n;\lambda_{1},\cdots,\lambda_{k})\nn\\
		=&\left\langle\prod_{j=1}^k\tr\frac{1}{\lambda_j-M}\right\rangle(n)-\sum_{\substack{\text{partition }\mathcal{P}\text{ of }\mathbf Z_k\\|\mathcal{P}|>1}}\prod_{\substack{I\in\mathcal{P}\\I:=\{i_1,\cdots,i_{|I|}\}}}\left(\mathcal C_{|I|}(n;\lambda_{i_1},\cdots,\lambda_{i_{|I|}})+\frac{n\delta_{|I|,1}}{\lambda_{i_1}}\right) \nn \\
		=&\sum_{\text{partition }\Pi\text{ of }\mathbf Z_k}\int_{\RR^l}\prod_{i=1}^{|\mathcal P|}\left(\prod_{j \in \mathcal P_i} \frac{1}{\lambda_{j}-y_{i}}\right)\det(K_{n}(y_{i},y_{j}))_{i,j=1}^{l} \dif\mu(y_{1})\cdots\dif\mu(y_{|\mathcal P|})\nn\\
		&-\sum_{\substack{\text{partition }\mathcal{P}\text{ of }\mathbf Z_k\\|\mathcal{P}|>1}}\prod_{\substack{I\in\mathcal{P}\\I:=\{i_1,\cdots,i_{|I|}\}}}\left(\mathcal C_{|I|}(n;\lambda_{i_1},\cdots,\lambda_{i_{|I|}})+\frac{n\delta_{|I|,1}}{\lambda_{i_1}}\right)\nn \\
		=&\sum_{\text{partition }\Pi\text{ of }\mathbf Z_k}\sum_{\substack{\text{partition }\mathcal P\text{ of }\mathbf Z_k \\ \Pi\prec\mathcal P}}\prod_{I\in\mathcal P}C_{I,\Pi(I)}-\sum_{\substack{\text{partition }\mathcal P\text{ of }\mathbf Z_k \\ |\mathcal P|>1}}\prod_{I\in\mathcal P}\sum_{\text{partition }\Pi(I)\text{ of }I}C_{I,\Pi(I)} \nn\\
		=&\sum_{\text{partition }\Pi\text{ of }\mathbf Z_k}C_{\mathbf Z_k,\Pi}. \label{n>k}
	\end{align}

Now by~\eqref{ck'}, 
\begin{align}
	&\mathcal C_{k}(n;\lambda_{1},\cdots,\lambda_{k})\nn\\
	=&\sum_{\text{partition }\Pi\text{ of }\mathbf Z_k}\int_{\RR^{|\Pi|}} \prod_{i=1}^{|\Pi|}\left(\frac{\dif\mu(y_i)}{\prod_{j\in\Pi_i}(\lambda_j-y_i)}\right)\frac{(-1)^{|\Pi|-1}}{|\Pi|}\sum_{\sigma\in S_{|\Pi|}}\prod_{i=1}^{|\Pi|}K_n(y_{\sigma(i)},y_{\sigma(i+1)}) \nn \\
	=&-\sum_{\text{partition }\Pi\text{ of }\mathbf Z_k}\sum_{(a_1,\cdots,a_{|\Pi|})\in\Pi_1\times\cdots\times\Pi_{|\Pi|}} \left(\prod_{i=1}^{|\Pi|} Q_{a_i,\Pi_i\backslash\{a_i\}}(\lambda_1,\cdots,\lambda_k)\right)\frac{1}{|\Pi|}\sum_{\sigma\in S_{|\Pi|}}\ell_{|\Pi|}(n;\lambda_{a_{\sigma(1)}},\cdots,\lambda_{a_{\sigma(|\Pi|)}}),
\end{align}
where we denote the following rational functions in $\lambda_1,\cdots,\lambda_k$ for $I\subset \mathbf Z_k$ and $a\in \mathbf Z_k\backslash I$:
\begin{equation}
	Q_{a,I}(\lambda_1,\cdots,\lambda_k):=\prod_{b\in I}\frac{1}{\lambda_b-\lambda_a},
\end{equation}
and denote the cyclic integrals
\begin{align}
	\ell_0:=&1,\\
	\ell_k(n;\lambda_1,\cdots,\lambda_k):=&\int_{\RR^k}\left(\prod_{i=1}^{k}\frac{K_n(y_i,y_{i+1})}{y_i-\lambda_i}\dif\mu(y_i)\right),\ k\ge1,
\end{align} where $y_{k+1}:=y_1$.
To compute $\ell_k$, we first define
\begin{align}
	E_1(n;\lambda_1,\lambda_2):=&K_n(\lambda_1,\lambda_2),\\
	E_k(n;\lambda_1,\cdots,\lambda_{k+1}):=&\int_{\RR^{k-1}}K_n(\lambda_1,y_2)\left(\prod_{i=2}^{k-1}\frac{K_n(y_i,y_{i+1})}{y_i-\lambda_i}\dif\mu(y_i)\right)\frac{K_n(y_k,\lambda_{k+1})}{y_k-\lambda_k}\dif y_k,\qquad k\ge2.
\end{align}
Then
\begin{align}
	E_k(n;\lambda_1,\cdots,\lambda_{k+1})=&\int_\RR\frac{1}{y_k-\lambda_k}E_{k-1}(n;\lambda_1,\cdots,\lambda_{k-1},y_k)K_n(y_k,\lambda_{k+1})\dif\mu(y_k),\\
	\ell_k(n;\lambda_1,\cdots,\lambda_k)=&\int_\RR\frac{1}{y-\lambda_1}E_k(n;y,\lambda_2,\cdots,\lambda_k,y)\dif\mu(y).
\end{align}
\begin{lem}
	For $k\ge2$,
	\begin{equation} \label{ek}
		E_k(n;\lambda_1,\cdots,\lambda_{k+1})=-E_{k-1}(n;\lambda_1,\cdots,\lambda_k)\mathcal J_n(\lambda_{k+1},\lambda_k)-\frac{1}{\lambda_k-\lambda_{k+1}}E_{k-1}(n;\lambda_1,\cdots,\lambda_{k-1},\lambda_{k+1}).
	\end{equation}
\end{lem}

\begin{proof}
	For $k=2$, using~\eqref{c-d},~\eqref{trans-cd},~\eqref{reduced-cd} and~\eqref{ax1},
	\begin{align}
		&E_2(n;\lambda_1,\lambda_2,\lambda_3)+K_n(\lambda_1,\lambda_2)\mathcal J_n^\circ(\lambda_3,\lambda_2)\nn\\
		=&\sum_{l,m=0}^{n-1}\frac{1}{h_lh_m}(\pi_l(\lambda_1)\widehat{\pi_l\pi_m}(\lambda_2)\pi_m(\lambda_3)-\pi_l(\lambda_1)\pi_l(\lambda_2)\widehat\pi_m(\lambda_2)\pi_m(\lambda_3))\nn\\
		=&\sum_{l>m}\frac{1}{h_lh_m}\pi_l(\lambda_1)\pi_m(\lambda_3)(\widehat\pi_l(\lambda_2)\pi_m(\lambda_2)-\pi_l(\lambda_2)\widehat\pi_m(\lambda_2))\nn\\
		=&\sum_{l=1}^{n-1}\frac{\pi_l(\lambda_1)\widehat\pi_l(\lambda_2)}{h_lh_{l-1}}\frac{\pi_l(\lambda_2)\pi_{l-1}(\lambda_3)-\pi_{l-1}(\lambda_2)\pi_l(\lambda_3)}{\lambda_2-\lambda_3}\nn\\
		&-\sum_{l=1}^{n-1}\frac{\pi_l(\lambda_1)\pi_l(\lambda_2)}{h_l}\left(\frac{1}{h_{l-1}}\frac{\pi_l(\lambda_3)\widehat\pi_{l-1}(\lambda_2)-\pi_{l-1}(\lambda_3)\widehat\pi_l(\lambda_2)}{\lambda_3-\lambda_2} -\frac{1}{\lambda_2-\lambda_3}\right) \nn\\
		=&\sum_{l=1}^{n-1}\frac{\pi_l(\lambda_1)\pi_l(\lambda_3)}{(\lambda_2-\lambda_3)h_l}\frac{1}{h_{l-1}}(-\widehat\pi_l(\lambda_2)\pi_{l-1}(\lambda_2)+\widehat\pi_{l-1}(\lambda_2)\pi_l(\lambda_2))+\sum_{l=1}^{n-1}\frac{\pi_l(\lambda_1)\pi_l(\lambda_2)}{(\lambda_2-\lambda_3)h_l} \nn\\
		=&\frac{1}{\lambda_3-\lambda_2}(K_n(\lambda_1,\lambda_3)-K_n(\lambda_1,\lambda_2)). \label{method}
	\end{align} 
	Similarly,
	\begin{equation} \label{dk}
		\int_\RR\frac{1}{y_2-\lambda_2}\mathcal J_n(y_2,\lambda_1)K_n(y_2,\lambda_3)\dif\mu(y_2)=-\mathcal J_n(\lambda_2,\lambda_1)\mathcal J_n(\lambda_3,\lambda_2)-\left(\frac{1}{\lambda_1-\lambda_2}+\frac{1}{\lambda_2-\lambda_3}\right)\mathcal J_n(\lambda_3,\lambda_1).
	\end{equation}
	We prove~\eqref{ek} by induction on $k$. From now on, we will omit the parameter $n$ in $E_k$. Assume~\eqref{ek} holds for $k$. Now for $k+1$, using~\eqref{dk},
	\begin{align}
		&E_{k+1}(\lambda_1,\cdots,\lambda_{k+2})\nn\\
		=&E_{k-1}(\lambda_1,\cdots,\lambda_k)\left(\mathcal J_n(\lambda_{k+1},\lambda_k)\mathcal J_n(\lambda_{k+2},\lambda_{k+1})+\left(\frac{1}{\lambda_k-\lambda_{k+1}}+\frac{1}{\lambda_{k+1}-\lambda_{k+2}}\right)\mathcal J_n(\lambda_{k+2},\lambda_k)\right)\nn\\
		&+\frac{1}{\lambda_k-\lambda_{k+1}}(E_k(\lambda_1,\cdots,\lambda_k,\lambda_{k+2})-E_k(\lambda_1,\cdots,\lambda_{k-1},\lambda_{k+1},\lambda_{k+2}))\nn\\
		=&-\left(E_k(\lambda_1,\cdots,\lambda_{k+1})+\frac{1}{\lambda_k-\lambda_{k+1}}E_{k-1}(\lambda_1,\cdots,\lambda_{k-1},\lambda_{k+1})\right)\mathcal J_n(\lambda_{k+2},\lambda_{k+1})\nn\\
		&+\frac{1}{\lambda_k-\lambda_{k+1}}E_{k-1}(\lambda_1,\cdots,\lambda_k)\mathcal J_n(\lambda_{k+2},\lambda_k)\nn\\
		&-\frac{1}{\lambda_{k+1}-\lambda_{k+2}}\left(E_k(\lambda_1,\cdots,\lambda_k,\lambda_{k+2})+\frac{1}{\lambda_k-\lambda_{k+2}}E_{k-1}(\lambda_1,\cdots,\lambda_{k-1},\lambda_{k+2})\right)\nn\\
		&-\frac{1}{\lambda_k-\lambda_{k+1}}\left(E_{k-1}(\lambda_1,\cdots,\lambda_k)\mathcal J_n(\lambda_{k+2},\lambda_k)+\frac{1}{\lambda_k-\lambda_{k+2}}E_{k-1}(\lambda_1,\cdots,\lambda_{k-1},\lambda_{k+2})\right)\nn\\
		&+\frac{1}{\lambda_k-\lambda_{k+1}}\left(E_{k-1}(\lambda_1,\cdots,\lambda_{k-1},\lambda_{k+1})\mathcal J_n(\lambda_{k+2},\lambda_{k+1})+\frac{1}{\lambda_{k+1}-\lambda_{k+2}}E_{k-1}(\lambda_1,\cdots,\lambda_{k-1},\lambda_{k+2})\right)\nn\\
		=&-E_k(\lambda_1,\cdots,\lambda_{k+1})\mathcal J_n(\lambda_{k+2},\lambda_{k+1})-\frac{1}{\lambda_{k+1}-\lambda_{k+2}}E_k(\lambda_1,\cdots,\lambda_k,\lambda_{k+2}).
	\end{align}
	The lemma is proved.
\end{proof}

Having established a recurrence for $E_k$, we now turn to the cyclic integral $\ell_k$. To solve this recursively, we introduce an auxiliary double sequence of functions, $\ell_{k,(m)}$, defined as follows for $k\ge2$ and $1\le m\le k-2$:
\begin{align}
	\ell_{1,(1)}(n;\lambda_1):=&\ell_1(n;\lambda_1)=-\mathcal J_n^\circ(\lambda_1,\lambda_1),\\
	\ell_{k,(1)}(n;\lambda_1,\cdots,\lambda_k):=&\ell_k(n;\lambda_1,\cdots,\lambda_k)+\frac{1}{\lambda_k-\lambda_1}(\ell_{k-1}(n;\lambda_1,\cdots,\lambda_{k-1})-\ell_{k-1}(n;\lambda_2,\cdots,\lambda_k)),\\
	\ell_{k,(m+1)}(n;\lambda_1,\cdots,\lambda_k):=&\ell_{k,(m)}(n;\lambda_1,\cdots,\lambda_k)\nn\\
	&+\frac{1}{\lambda_{k-m}-\lambda_{k-m+1}}\ell_{k-1,(m)}(n;\lambda_1,\cdots,\lambda_{k-m-1},\lambda_{k-m+1},\cdots,\lambda_k),\\
	\ell_{k,(k)}(n;\lambda_1,\cdots,\lambda_k):=&\ell_{k,(k-1)}(n;\lambda_1,\cdots,\lambda_k)+\left(\frac{1}{\lambda_1-\lambda_2}+\frac{1}{\lambda_k-\lambda_1}\right)\ell_{k-1,(k-1)}(n;\lambda_2,\cdots,\lambda_k).
\end{align}
This construction allows us to relate the desired term $\ell_k$ to a more tractable term $\ell_{k,(k)}$ which we can compute directly.

\begin{lem}
	For $k\ge2$,
	\begin{equation}
		\ell_{k,(k)}(n;\lambda_1,\cdots,\lambda_k)=(-1)^k\prod_{i=1}^k\mathcal J_n(\lambda_{i+1},\lambda_i)-\frac{1}{\prod_{i=1}^k(\lambda_i-\lambda_{i+1})}. \label{lkk}
	\end{equation}
\end{lem}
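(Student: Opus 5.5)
Since the recursive definition of $\ell_{k,(k)}$ is built to peel off, one at a time, the boundary terms produced by integrating out variables in the cyclic integral, I would prove~\eqref{lkk} by induction on $k$. The engine is the recurrence~\eqref{ek} for $E_k$ together with the identity~\eqref{dk}, integrated once more against $\frac{\dif\mu(y)}{y-\lambda_1}$ to close the cycle.

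\textbf{Closed form for $E_k$.} First I would unfold~\eqref{ek} into an explicit expression. Iterating it expresses $E_k(\lambda_1,\dots,\lambda_{k+1})$ as $K_n(\lambda_1,\cdot)$ evaluated at various $\lambda_j$, times products of factors $-\mathcal J_n(\lambda_{j+1},\lambda_j)$ along consecutive chains, times Cauchy-type factors $1/(\lambda_a-\lambda_b)$ for the skipped indices. Concretely, I expect
\[
E_k(\lambda_1,\dots,\lambda_{k+1})=\sum_{j=2}^{k}K_n(\lambda_1,\lambda_j)\,R_{j,k}\;+\;\text{(terms with }K_n(\lambda_1,\lambda_{k+1})\text{)},
\]
where each coefficient $R_{j,k}$ is a polynomial in the $\mathcal J_n$'s and rational functions of $\lambda_2,\dots,\lambda_{k+1}$ only. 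Only the $\lambda_1$-dependence through $K_n(\lambda_1,\cdot)$ matters for what follows.

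\textbf{Closing the cycle.} Next I set $\lambda_1=\lambda_{k+1}=y$ and integrate. By~\eqref{trans-cd},
\[
\int_\RR\frac{K_n(y,\lambda_j)}{y-\lambda_1}\dif\mu(y)=-\mathcal J_n^\circ(\lambda_j,\lambda_1).
\]
The terms in which both ends carry $y$, i.e.\ the $K_n(y,y)$-type terms coming from the last factor, are handled by~\eqref{dk} and the $k=2$ computation~\eqref{method}; these produce $\mathcal J_n(\lambda_{2},\lambda_1)\cdots$ closures and extra Cauchy differences. Substituting, $\ell_k$ becomes
\[
(-1)^k\prod_{i}\mathcal J_n(\lambda_{i+1},\lambda_i)
\]
plus a collection of correction terms. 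Each correction term is of the form (Cauchy factor)$\times\ell_{k-1}$ evaluated on a subset of the variables, or lower corrections thereof. Since $\mathcal J_n=\mathcal J_n^\circ+\frac{1}{\lambda_1-\lambda_2}$, the fully rational part collapses to $-1/\prod_i(\lambda_i-\lambda_{i+1})$; this should follow from the partial-fraction identity for cyclic products.

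\textbf{Matching corrections and the main obstacle.} The crux is to show that the correction terms are exactly cancelled by the successive additions in the definitions of $\ell_{k,(1)},\dots,\ell_{k,(k)}$. Using the induction hypothesis, $\ell_{k-1,(k-1)}$ and the intermediate $\ell_{k-1,(m)}$ can be rewritten in closed form, so these cancellations become explicit. I expect this bookkeeping, matching the index shifts $\lambda_{k-m}\leftrightarrow\lambda_{k-m+1}$ with the terms generated by unfolding~\eqref{ek}, to be the main obstacle. My first step would be the base case $k=2$, where
\[
\ell_{2,(2)}=\ell_2+\Bigl(\frac{1}{\lambda_2-\lambda_1}\Bigr)(\ell_1(\lambda_1)-\ell_1(\lambda_2))+\Bigl(\frac{1}{\lambda_1-\lambda_2}+\frac{1}{\lambda_2-\lambda_1}\Bigr)\ell_1(\lambda_2).
\]
I would verify this directly from~\eqref{method} integrated against $\frac{\dif\mu(y)}{y-\lambda_1}$, which yields $\mathcal J_n(\lambda_2,\lambda_1)\mathcal J_n(\lambda_1,\lambda_2)-\frac{1}{(\lambda_1-\lambda_2)(\lambda_2-\lambda_1)}$. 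I would then check $k=3$ by hand to confirm the pattern of which $\ell_{k-1,(m)}$ absorbs which correction before writing the general inductive step.
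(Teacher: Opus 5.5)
Your base case $k=2$ is fine (it needs \eqref{dk} as well as \eqref{method}). The problem is the inductive step: it is never supplied, and the induction as you set it up cannot close.

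\textbf{Why your induction does not close.} The hypothesis available at stage $k-1$ is the lemma itself. That is a closed form for the diagonal entry $\ell_{k-1,(k-1)}$ only. Going from $\ell_k$ to $\ell_{k,(k)}$, however, adds $\frac{1}{\lambda_k-\lambda_1}\bigl(\ell_{k-1}(\lambda_1,\dots,\lambda_{k-1})-\ell_{k-1}(\lambda_2,\dots,\lambda_k)\bigr)$ and the terms $\frac{1}{\lambda_{k-m}-\lambda_{k-m+1}}\ell_{k-1,(m)}(\cdots)$ for $1\le m\le k-2$. The hypothesis controls none of these. Your sentence that the induction hypothesis gives the intermediate $\ell_{k-1,(m)}$ in closed form is simply not true. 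Your plan (fully unfold $E_k$, compute $\ell_k$, then match corrections) would therefore need explicit formulas for every off-diagonal entry of the table, which is more than the lemma asserts. You leave exactly this matching as ``the main obstacle''.

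\textbf{The closing step is also misdescribed.} Once $\lambda_{k+1}=y$, the coefficients in the unfolded $E_k$ depend on $y$. Every term $K_n(y,\lambda_j)$ with $j\le k$ carries a factor $\mathcal J_n(y,\lambda_a)$, and possibly Cauchy factors $1/(\lambda_b-y)$. So these are integrals of type \eqref{dk} after partial fractions in $y$, not the \eqref{trans-cd} integral $\int_\RR K_n(y,\lambda_j)(y-\lambda_1)^{-1}\dif\mu(y)$. It is the $K_n(y,y)$ terms that reduce to values of $\ell_1$. Your remark that ``only the $\lambda_1$-dependence through $K_n(\lambda_1,\cdot)$ matters'' is false.

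\textbf{The missing idea.} You need an identity valid for every column, not just the diagonal: for $1\le m\le k-1$,
\[
\ell_{k,(m)}(\lambda_1,\dots,\lambda_k)=(-1)^m\int_\RR\frac{\dif\mu(y)}{y-\lambda_1}\,E_{k-m}(y,\lambda_2,\dots,\lambda_{k-m+1})\Bigl(\prod_{i=k-m+1}^{k-1}\mathcal J_n(\lambda_{i+1},\lambda_i)\Bigr)\mathcal J_n(y,\lambda_k).
\]
For $m=1$ this follows from one application of \eqref{ek} to $E_k(y,\lambda_2,\dots,\lambda_k,y)$, partial fractions in $\frac{1}{(y-\lambda_1)(\lambda_k-y)}$, and cyclic invariance of $\ell_{k-1}$.

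For $m\to m+1$, apply \eqref{ek} once to $E_{k-m}$. The first term is the $(m+1)$ integrand. The skip term is, by the same identity for the $k-1$ variables with $\lambda_{k-m}$ deleted, exactly $-\frac{1}{\lambda_{k-m}-\lambda_{k-m+1}}\ell_{k-1,(m)}$. That is precisely what the definition of $\ell_{k,(m+1)}$ adds back. So this induction runs on $m$, for all $k$ simultaneously. This is the reason the recursive definition of $\ell_{k,(m)}$ exists, and no closed form for $E_k$ or $\ell_k$ is ever needed.

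\textbf{Finishing the argument.} At $m=k-1$ the kernel is $E_1=K_n(y,\lambda_2)$, and a single use of \eqref{dk} gives $\ell_{k,(k-1)}$. Only then does the induction on $k$ enter. Adding $\bigl(\frac{1}{\lambda_1-\lambda_2}+\frac{1}{\lambda_k-\lambda_1}\bigr)\ell_{k-1,(k-1)}(\lambda_2,\dots,\lambda_k)$ cancels the extra term $\mathcal J_n(\lambda_2,\lambda_k)\prod_{i=2}^{k-1}\mathcal J_n(\lambda_{i+1},\lambda_i)$ produced by \eqref{dk}. The rational term comes from $\frac{1}{\lambda_1-\lambda_2}+\frac{1}{\lambda_k-\lambda_1}=\frac{\lambda_k-\lambda_2}{(\lambda_1-\lambda_2)(\lambda_k-\lambda_1)}$, not from a cyclic partial-fraction identity.
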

\begin{proof}
	Throughout this proof we will omit the parameter $n$ in $\ell_{k,(m)}$. We first claim that for $1\le m\le k-1$,
	\begin{equation}
		\ell_{k,(m)}(\lambda_1,\cdots,\lambda_k)=\int_\RR\frac{(-1)^m}{y-\lambda_1}E_{k-m}(y,\lambda_2,\cdots,\lambda_{k-m+1})\left(\prod_{i=k-m+1}^{k-1}\mathcal J_n(\lambda_{i+1},\lambda_i)\right)\mathcal J_n(y,\lambda_k)\dif\mu(y). \label{claim}
	\end{equation} We prove this claim by induction on $m$. For $m=1$,
	\begin{align}
		&\ell_k(\lambda_1,\cdots,\lambda_k)\nn\\
		=&\int_\RR\frac{1}{y-\lambda_1}\left(-E_{k-1}(y,\lambda_2,\cdots,\lambda_k)\mathcal J_n(y,\lambda_k)-\frac{1}{\lambda_k-y}E_{k-1}(y,\lambda_2,\cdots,\lambda_{k-1},y)\right)\dif\mu(y)\nn\\
		=&-\int_\RR\frac{1}{y-\lambda_1}E_{k-1}(y,\lambda_2,\cdots,\lambda_k)\mathcal J_n(y,\lambda_k)\dif\mu(y)+\frac{1}{\lambda_1-\lambda_k}(\ell_{k-1}(\lambda_1,\cdots,\lambda_{k-1})-\ell_{k-1}(\lambda_2,\cdots,\lambda_k)),
	\end{align} 
	where in the last equation we use the fact that by definition $\ell_{k-1}(\lambda_k,\lambda_2,\cdots,\lambda_{k-1})=\ell_{k-1}(\lambda_2,\cdots,\lambda_k)$. Now assume that~\eqref{claim} holds for $m$. Then for $m+1$,
	\begin{align}
		&\ell_{k,(m)}(\lambda_1,\cdots,\lambda_k)\nn\\
		=&\int_\RR\frac{(-1)^m}{y-\lambda_1}E_{k-m}(y,\lambda_2,\cdots,\lambda_{k-m+1})\left(\prod_{i=k-m+1}^{k-1}\mathcal J_n(\lambda_{i+1},\lambda_i)\right)\mathcal J_n(y,\lambda_k)\dif\mu(y)\nn\\
		=&(-1)^m\left(\prod_{i=k-m+1}^{k-1}\mathcal J_n(\lambda_{i+1},\lambda_i)\right)\int_\RR\frac{\mathcal J_n(y,\lambda_k)}{y-\lambda_1}\bigg(-E_{k-m-1}(y,\lambda_2,\cdots,\lambda_{k-m})\mathcal J_n(\lambda_{k-m+1},\lambda_{k-m})\nn\\
		&\qquad\qquad\qquad\qquad\qquad\qquad\qquad\qquad-\frac{1}{\lambda_{k-m}-\lambda_{k-m+1}}E_{k-m-1}(y,\lambda_2,\cdots,\lambda_{k-m-1},\lambda_{k-m+1})\bigg)\dif\mu(y)\nn\\
		=&\int_\RR\frac{(-1)^{m+1}}{y-\lambda_1}E_{k-m-1}(y,\lambda_2,\cdots,\lambda_{k-m})\left(\prod_{i=k-m}^{k-1}\mathcal J_n(\lambda_{i+1},\lambda_i)\right)\mathcal J_n(y,\lambda_k)\dif\mu(y)\nn\\
		&-\frac{1}{\lambda_{k-m}-\lambda_{k-m+1}}\ell_{k-1,(m)}(\lambda_1,\cdots,\lambda_{k-m-1},\lambda_{k-m+1},\cdots,\lambda_k).
	\end{align} Thus we have proved~\eqref{claim}.
	
	Now we prove~\eqref{lkk} by induction on $k$. For $k=2$, using an argument similar to that of~\eqref{method},
	\begin{align}
		\ell_{2,(1)}(\lambda_1,\lambda_2)=&-\int_\RR\frac{1}{y-\lambda_1}E_1(y,\lambda_2)\mathcal J_n(y,\lambda_2)\dif\mu(y)\nn\\
		=&\mathcal J_n(\lambda_2,\lambda_1)\mathcal J_n(\lambda_1,\lambda_2)+\frac{1}{(\lambda_1-\lambda_2)^2}.
	\end{align}
	Then
	\begin{align}
		\ell_{2,(2)}(\lambda_1,\lambda_2)=&\ell_{2,(1)}(\lambda_1,\lambda_2)+\left(\frac{1}{\lambda_1-\lambda_2}+\frac{1}{\lambda_2-\lambda_1}\right)\ell_{1,(1)}(\lambda_2)\nn\\
		=&\mathcal J_n(\lambda_2,\lambda_1)\mathcal J_n(\lambda_1,\lambda_2)-\frac{1}{(\lambda_1-\lambda_2)(\lambda_2-\lambda_1)}.
	\end{align} 
	Assume that~\eqref{lkk} holds for $k-1$, then for $k$, following~\eqref{dk},
	\begin{align}
		&\ell_{k,(k-1)}(\lambda_1,\cdots,\lambda_k)\nn\\
		=&(-1)^{k-1}\left(\prod_{i=2}^{k-1}\mathcal J_n(\lambda_{i+1},\lambda_i)\right)\int_\RR\frac{1}{y-\lambda_1}\mathcal J_n(y,\lambda_k)K_n(y,\lambda_2)\dif\mu(y)\nn\\
		=&(-1)^k\left(\prod_{i=2}^{k-1}\mathcal J_n(\lambda_{i+1},\lambda_i)\right)\left(\mathcal J_n(\lambda_1,\lambda_k)\mathcal J_n(\lambda_2,\lambda_1)+\left(\frac{1}{\lambda_k-\lambda_1}+\frac{1}{\lambda_1-\lambda_2}\right)\mathcal J_n(\lambda_2,\lambda_k)\right)\nn\\
		=&(-1)^k\left(\prod_{i=1}^k\mathcal J_n(\lambda_{i+1},\lambda_i)+\left(\frac{1}{\lambda_1-\lambda_2}+\frac{1}{\lambda_k-\lambda_1}\right)\mathcal J_n(\lambda_2,\lambda_k)\prod_{i=2}^{k-1}\mathcal J_n(\lambda_{i+1},\lambda_i)\right).
	\end{align} 
	Thus
	\begin{align}
		&\ell_{k,(k)}(\lambda_1,\cdots,\lambda_k)\nn\\
		=&\ell_{k,(k-1)}(\lambda_1,\cdots,\lambda_k)+\left(\frac{1}{\lambda_1-\lambda_2}+\frac{1}{\lambda_k-\lambda_1}\right)\ell_{k-1,(k-1)}(\lambda_2,\cdots,\lambda_k)\nn\\
		=&(-1)^k\left(\prod_{i=1}^k\mathcal J_n(\lambda_{i+1},\lambda_i)+\left(\frac{1}{\lambda_1-\lambda_2}+\frac{1}{\lambda_k-\lambda_1}\right)\mathcal J_n(\lambda_2,\lambda_k)\prod_{i=2}^{k-1}\mathcal J_n(\lambda_{i+1},\lambda_i)\right)\nn\\
		&+\left(\frac{1}{\lambda_1-\lambda_2}+\frac{1}{\lambda_k-\lambda_1}\right)\left((-1)^{k-1}\mathcal J_n(\lambda_2,\lambda_k)\prod_{i=2}^{k-1}\mathcal J_n(\lambda_{i+1},\lambda_i)-\frac{1}{(\lambda_k-\lambda_2)\prod_{i=1}^{k-1}(\lambda_i-\lambda_{i+1})}\right)\nn\\
		=&(-1)^k\prod_{i=1}^k\mathcal J_n(\lambda_{i+1},\lambda_i)-\frac{1}{\prod_{i=1}^k(\lambda_i-\lambda_{i+1})}.
	\end{align}
	The lemma is proved.
\end{proof}

	The double sequence $\{\ell_{k,(m)}\}$ forms the following table:
	\begin{equation} \label{table}
		\begin{array}{c|cccccc}
			\text{row}\backslash\text{column} & 0 & 1 & 2 & \dots & k-1 & k \\
			\hline
			0 & \ell_0 & \ell_1 & \ell_2 & \dots & \ell_{k-1} & \ell_k \\
			1 & & \ell_{1,(1)} & \ell_{2,(1)} & \dots & \ell_{k-1,(1)} & \ell_{k,(1)} \\
			2 & & & \ell_{2,(2)} & \dots & \ell_{k-1,(2)} & \ell_{k,(2)} \\
			\vdots & & & & \ddots & \vdots & \vdots \\
			k-1 & & & & & \ell_{k-1,(k-1)} & \ell_{k,(k-1)} \\
			k & & & & & & \ell_{k,(k)}
		\end{array}
	\end{equation} 
	Each entry in this table is a sum of the terms located directly above and to the upper-left. Given the diagonal entries, we can uniquely determine the entire table, specifically the 0th-row entries.

Denote the following rational functions in $\lambda_1,\cdots,\lambda_k$ for $\{a_1,\cdots,a_t\}\subset \mathbf Z_k$:
\begin{equation}
	R_{a_1,\cdots,a_t}(\lambda_1,\cdots,\lambda_k):=\prod_{b\in \mathbf Z_k\backslash\{a_1,\cdots,a_t\}}\frac{1}{\lambda_b-\lambda_{b+1}}.
\end{equation}

\begin{lem}
	For $k\ge2$,
	\begin{align}
		\ell_k(n;\lambda_1,\cdots,\lambda_k)=&(-1)^k\prod_{i=1}^k\mathcal J_n(\lambda_{i+1},\lambda_i)-\prod_{i=1}^k\frac{1}{(\lambda_i-\lambda_{i+1})}\nn\\
		&-\sum_{t=1}^{k-1}\sum_{1\le a_1<\cdots<a_t\le k}R_{a_1,\cdots,a_t}(\lambda_1,\cdots,\lambda_k)\ell_t(n;\lambda_{a_1},\cdots,\lambda_{a_t}). \label{lk}
	\end{align}
\end{lem}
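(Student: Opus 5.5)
The plan is to invert the table~\eqref{table}. Each entry is the one directly above it plus a rational multiple of the upper-left entry. Unrolling this from the diagonal value $\ell_{k,(k)}$, given by~\eqref{lkk}, back up to row $0$ expresses $\ell_k$ as $\ell_{k,(k)}$ minus a combination of lower $\ell_t$'s with rational coefficients. We then check that those coefficients are exactly the $R_{a_1,\cdots,a_t}$. We argue by induction on $k$. For $k=2$ the definitions give $\ell_2=\ell_{2,(2)}-\frac{1}{\lambda_2-\lambda_1}(\ell_1(\lambda_1)-\ell_1(\lambda_2))$ directly. This matches~\eqref{lk}, because $R_{1}=1/(\lambda_2-\lambda_1)$ and $R_2=1/(\lambda_1-\lambda_2)$.

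For the inductive step, first sum the defining relations of the rows:
\begin{equation*}
\ell_k=\ell_{k,(k)}-\frac{\ell_{k-1}(\lambda_1,\cdots,\lambda_{k-1})-\ell_{k-1}(\lambda_2,\cdots,\lambda_k)}{\lambda_k-\lambda_1}-\sum_{m=1}^{k-2}\frac{\ell_{k-1,(m)}(\cdots,\widehat{\lambda_{k-m}},\cdots)}{\lambda_{k-m}-\lambda_{k-m+1}}-\Big(\frac{1}{\lambda_1-\lambda_2}+\frac{1}{\lambda_k-\lambda_1}\Big)\ell_{k-1,(k-1)}(\lambda_2,\cdots,\lambda_k).
\end{equation*}
Next, substitute the diagonal formula~\eqref{lkk} for $\ell_{k,(k)}$. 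Then express each $\ell_{k-1,(m)}$ through row-$0$ entries $\ell_t$ with $t\le k-1$. The natural tool is an auxiliary identity obtained from the same table applied at level $k-1$: for every row $m$, $\ell_{k-1,(m)}$ equals $\ell_{k-1}$ plus explicit rational multiples of lower $\ell_t$'s. I would guess and prove its closed form by a secondary induction on $m$. The expected shape is a sum over subsets $\{a_1<\cdots<a_t\}$ that contain the last $m$ indices (cyclically), weighted by the products $R$ restricted to the gaps.

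An alternative, which I would try first because it avoids the explicit intermediate rows, is to use uniqueness. The recursion determines the whole table from its diagonal. So it suffices to \emph{define} candidate row-$0$ entries $\tilde\ell_t$ by the right-hand side of~\eqref{lk}, with $\tilde\ell_1=\ell_1$, and build from them a candidate table by the same recursion. If the candidate table has the correct diagonal entries, namely~\eqref{lkk} for every $k$ and $\ell_{1,(1)}=\ell_1$, then it coincides with the true table, and~\eqref{lk} follows. In this approach the product term $(-1)^k\prod\mathcal J_n$ passes through unchanged, because it appears only in the diagonal entry. What remains is a purely rational identity: the candidate diagonal must reproduce $-1/\prod_i(\lambda_i-\lambda_{i+1})$ exactly, with all the lower $\ell_t$ contributions cancelling.

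The main obstacle is this combinatorial cancellation. Removing one index $\lambda_{k-m}$ from the cycle merges two consecutive factors $1/(\lambda_{k-m-1}-\lambda_{k-m})\cdot1/(\lambda_{k-m}-\lambda_{k-m+1})$ into $1/(\lambda_{k-m-1}-\lambda_{k-m+1})$. The key elementary identity is the partial-fraction relation
\begin{equation*}
\frac{1}{(a-b)(b-c)}=\frac{1}{a-c}\Big(\frac{1}{a-b}+\frac{1}{b-c}\Big).
\end{equation*}
I would organize the bookkeeping by the complementary set $\mathbf Z_k\backslash\{a_1,\cdots,a_t\}$ of removed indices, and check the identity for a fixed subset $\{a_1,\cdots,a_t\}$ by telescoping along the removed gaps. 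The boundary terms involving $1/(\lambda_k-\lambda_1)$, which come from the cyclic wrap-around in rows $1$ and $k$, need separate care. I expect this step to be the most delicate.
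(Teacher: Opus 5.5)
Your reduction is correct as far as it goes. The $k=2$ check is right, and your summed row relations are exactly the $m=1$ case of the paper's intermediate identity \eqref{lind}. The overall strategy is also the paper's: eliminate the table from $\ell_{k,(k)}$ back to row $0$, then insert \eqref{lkk}.

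The gap is the coefficient bookkeeping. The lemma is \emph{equivalent} to the claim that, once every intermediate entry is eliminated, each $\ell_t(\lambda_{a_1},\cdots,\lambda_{a_t})$ carries coefficient exactly $-R_{a_1,\cdots,a_t}$. That claim is the whole content of the lemma, and you leave it open. Neither of your devices closes it.

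\emph{The uniqueness route} only restates the same coefficient identity; it does not avoid the intermediate rows. The candidate entry $\tilde\ell_{k,(k)}$ equals $\tilde\ell_k$ plus rational multiples of \emph{every} entry of column $k-1$ (rows $0,\cdots,k-1$, each with one argument deleted). Knowing that the candidate and true tables agree up to level $k-1$ tells you these entries equal the true ones. It does not tell you what they are in terms of row $0$. For the same reason, an induction on $k$ that uses only \eqref{lk} at lower levels cannot close.

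\emph{The shape you guess for $\ell_{k-1,(m)}$} (subset sums weighted by gap products) is also wrong. Already $\ell_{3,(1)}=\ell_3+\frac{\ell_2(\lambda_1,\lambda_2)-\ell_2(\lambda_2,\lambda_3)}{\lambda_3-\lambda_1}$ gives $\ell_2(\lambda_2,\lambda_3)$ the weight $1/(\lambda_1-\lambda_3)$. The gap factor for the deleted index $1$ would instead be $1/(\lambda_1-\lambda_2)$. Intermediate entries carry weights that are not products of gap factors, and the clean $R$'s appear only after the last column has been eliminated.

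What is missing is an invariant strong enough to carry the induction. The paper fixes $k$ and inducts on the number $m$ of eliminated columns, proving \eqref{lind}. That identity writes $\ell_k$ in terms of four pieces:
\begin{itemize}
\item the diagonal entry $\ell_{k,(k)}$;
\item the row-$0$ entries of columns $k-m+1,\cdots,k-1$, already carrying their final weights $-R$;
\item boundary differences of column-$(k-m)$ row-$0$ entries, weighted by $1/(\lambda_k-\lambda_1)$ times gap factors;
\item the entries $\ell_{k-m,(t)}$ of column $k-m$, weighted by $\frac{\lambda_k-\lambda_{r+1}}{\lambda_k-\lambda_1}$ times gap factors.
\end{itemize}
The step $m\to m+1$ uses two identities. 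The first is $\frac{1}{\lambda_{m+1}-\lambda_{m+2}}+\frac{1}{\lambda_k-\lambda_{m+1}}=\frac{\lambda_k-\lambda_{m+2}}{(\lambda_{m+1}-\lambda_{m+2})(\lambda_k-\lambda_{m+1})}$, which handles the diagonal term. The second is the telescoping $\sum_{t=0}^{r-1}(\lambda_{t+1}-\lambda_{t+2})-(\lambda_k-\lambda_{r+1})=\lambda_1-\lambda_k$, which turns the accumulated coefficients of the column-$(k-m)$ row-$0$ entries into $-R$.

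Your remarks about partial fractions and telescoping point in this direction. But until you state and verify an explicit invariant of this kind, the proposal does not prove the lemma. In particular, the invariant must include the non-product factor $\frac{\lambda_k-\lambda_{r+1}}{\lambda_k-\lambda_1}$ that absorbs the wrap-around terms.
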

\begin{proof}
	We first outline the strategy of the proof. In the context of Table~\eqref{table}, equation~\eqref{lk} can be interpreted as expressing the entry at $(0,k)$ in terms of the entries at $(0,m)$, $0\le m\le k-1$ and the diagonal entry at $(k,k)$. To achieve this, we initially express the entry at $(0,k)$ in terms of the entries at the $(k-1)$th column and at the diagonal term $(k,k)$. Subsequently, in each step, we iteratively reduce the dependence on the $m$-th column to the $(m-1)$-th column and the boundary entry at $(0,m)$. This procedure ultimately yields~\eqref{lk}. The process is illustrated as follows:
	\[
	\begin{array}{c|c}
		 & k \\ \hline
		 0 & *
	\end{array}
	\Rightarrow
	\begin{array}{c}
		\begin{array}{c|cc}
			 & k-1 & k \\ \hline
			0 & * & * \\
			1 & * & 0 \\
			\vdots & \vdots & \vdots \\
			k-1 & * & 0 \\
			k & & * 
		\end{array}
	\end{array}
	\Rightarrow
	\begin{array}{c}
		\begin{array}{c|ccc}
			& k-2 & k-1 & k \\ \hline
			0 & * & * & * \\
			1 & * & 0 & 0 \\
			\vdots & \vdots & \vdots & \vdots \\
			k-2 & * & 0 & 0 \\
			k-1 & & 0 & 0 \\
			k & &  & * 
		\end{array}
	\end{array}
	\Rightarrow \dots \Rightarrow
	\begin{array}{c}
		\begin{array}{c|ccccc}
			& 0 & 1 & \dots & k-1 & k \\ \hline
			0 & * & * & \dots & * & * \\
			1 & & 0 & \dots & 0 & 0 \\
			\vdots & &  & \ddots & \vdots & \vdots \\
			k-1 & &  &  & 0 & 0 \\ 
			k & & &  & & * 
		\end{array}
	\end{array}
	\]
	
	More precisely, for $m\le k-1$ we prove the following formula, which expresses $\ell_k$ in terms of the entries at $(0,r)$ for $k-m\le r\le k-1$, the entries at $(r,k-m)$ for $1\le r\le k-m$, and the diagonal term $(k,k)$: (We still omit the parameter $n$.)
	\begin{align}
		&\ell_k(\lambda_1,\cdots,\lambda_k)\nn \\
		=& \ell_{k,(k)}-\sum_{r=1}^{m-1}\sum_{1\le a_1<\cdots<a_r\le k}\widetilde R_{a_1,\cdots,a_r}\ell_{k-r}^{\widehat{a_1,\cdots,a_r}} \nn\\
		& -\sum_{r=0}^{m-1}\sum_{r+1<a_{r+1}<\cdots<a_{m-1}\le k-1}\widetilde R_{1,\cdots,r,a_{r+1},\cdots,a_{m-1}}\frac{\ell_{k-m}^{\widehat{1,\cdots,r,a_{r+1},\cdots,a_{m-1},k}}-\ell_{k-m}^{\widehat{1,\cdots,r,r+1,a_{r+1},\cdots,a_{m-1}}}}{\lambda_k-\lambda_1} \nn\\
		& -\sum_{t=1}^{k-m-1}\sum_{r=0}^{m-1}\sum_{r+1<a_{r+1}<\cdots<a_m=k-t} \frac{\lambda_k-\lambda_{r+1}}{\lambda_k-\lambda_1}\widetilde R_{1,\cdots,r,a_{r+1},\cdots,a_m}\ell_{k-m,(t)}^{\widehat{1,\cdots,r,a_{r+1},\cdots,a_m}}\nn\\
		&-\frac{\lambda_k-\lambda_{m+1}}{\lambda_k-\lambda_1}\widetilde R_{1,\cdots,m}\ell_{k-m,(k-m)}^{\widehat{1,\cdots,m}}, \label{lind}
	\end{align}
	where for $a_{1}<\cdots<a_{m}$ we denote
	\begin{align}
		\widetilde R_{a_1,\cdots,a_m}:=& \prod_{s=1}^{m}\frac{1}{\lambda_{a_s}-\lambda_{a_s+1}}, \\
		\ell_{k-m}^{\widehat{a_1,\cdots,a_m}} :=& \ell_{k-m}(\lambda_1,\cdots,\lambda_{a_1-1},\lambda_{a_1+1},\cdots,\lambda_{a_m-1},\lambda_{a_m+1},\cdots,\lambda_k), \\
		\ell_{k-m,(t)}^{\widehat{a_1,\cdots,a_m}} :=& \ell_{k-m,(t)}(\lambda_1,\cdots,\lambda_{a_1-1},\lambda_{a_1+1},\cdots,\lambda_{a_m-1},\lambda_{a_m+1},\cdots,\lambda_k).
	\end{align}
	
	We prove~\eqref{lind} by induction on $m$. First, for $m=1$,
	\begin{align}
		\ell_{k}(\lambda_1,\cdots,\lambda_k) &= \ell_{k,(1)}-\frac{1}{\lambda_k-\lambda_1}(\ell_{k-1}^{\widehat k}-\ell_{k-1}^{\widehat1}) \nn\\
		&= \ell_{k,(2)}-\frac{1}{\lambda_k-\lambda_1}(\ell_{k-1}^{\widehat k}-\ell_{k-1}^{\widehat1})-\frac{1}{\lambda_{k-1}-\lambda_k}\ell_{k-1,(1)}^{\widehat{k-1}} \nn\\
		&= \ell_{k,(k-1)}-\frac{1}{\lambda_k-\lambda_1}(\ell_{k-1}^{\widehat k}-\ell_{k-1}^{\widehat1})-\sum_{t=1}^{k-2}\frac{1}{\lambda_{k-t}-\lambda_{k-t+1}}\ell_{k-1,(t)}^{\widehat{k-t}} \nn\\
		&= \ell_{k,(k)}-\frac{1}{\lambda_k-\lambda_1}(\ell_{k-1}^{\widehat k}-\ell_{k-1}^{\widehat1})-\sum_{t=1}^{k-2}\widetilde R_{k-t}\ell_{k-1,(t)}^{\widehat{k-t}} -\frac{\lambda_k-\lambda_2}{\lambda_k-\lambda_1}\widetilde R_1\ell_{k-1,(k-1)}^{\widehat1}.
	\end{align}
	
	Now assume that~\eqref{lind} holds for $m$. For $m+1$, we transition from the $(k-m)$-th column to the $(k-m-1)$-th column. Consequently, we need to compute the terms at the entries $(t,k-m-1)$ for $0\le t\le k-m-1$, as well as the boundary term $(0,k-m)$. The term at $(k-m-1,k-m-1)$ is derived from the term at $(k-m,k-m)$ and is given by
	\begin{align}
		&-\left(\frac{1}{\lambda_{m+1}-\lambda_{m+2}}+\frac{1}{\lambda_k-\lambda_{m+1}}\right)\frac{\lambda_k-\lambda_{m+1}}{\lambda_k-\lambda_1}\widetilde R_{1,\cdots,m}\ell_{k-m-1,(k-m-1)}^{\widehat{1,\cdots,m,m+1}}\nn\\
		=&-\frac{\lambda_k-\lambda_{m+2}}{\lambda_k-\lambda_1}\widetilde R_{1,\cdots,m+1}\ell_{k-m-1,(k-m-1)}^{\widehat{1,\cdots,m+1}}. \label{part1}
	\end{align}
	The term at $(t,k-m-1)$ for $1\le t\le k-m-2$ is
	\begin{align}
		&-\frac{1}{\lambda_{k-t}-\lambda_{k-t+1}}\sum_{r=0}^{m}\sum_{r+1<a_{r+1}<\cdots<a_m\le k-t-1}\frac{\lambda_k-\lambda_{r+1}}{\lambda_k-\lambda_1}\widetilde R_{1,\cdots,r,a_{r+1},\cdots,a_m}\ell_{k-m-1,(t)}^{\widehat{1,\cdots,r,a_{r+1},\cdots,a_m,k-t}}\nn\\
		=&-\sum_{r=0}^{m+1}\sum_{r+1<a_{r+1}<\cdots<a_{m+1}=k-t}\frac{\lambda_k-\lambda_{r+1}}{\lambda_k-\lambda_1}\widetilde R_{1,\cdots,r,a_{r+1},\cdots,a_{m+1}}\ell_{k-m-1,(t)}^{\widehat{1,\cdots,r,a_{r+1},\cdots,a_{m+1}}}. \label{part2}
	\end{align}
	Similarly, the term at $(0,k-m-1)$ reads
	\begin{align}
		&-\sum_{r=0}^{m}\sum_{r+1<a_{r+1}<\cdots<a_m\le k-1}\frac{\lambda_k-\lambda_{r+1}}{\lambda_k-\lambda_1}\widetilde R_{1,\cdots,r,a_{r+1},\cdots,a_m}\frac{\ell_{k-m-1}^{\widehat{1,\cdots,r,a_{r+1},\cdots,a_m,k}}-\ell_{k-m-1}^{\widehat{1,\cdots,r+1,a_{r+1},\cdots,a_m}}}{\lambda_k-\lambda_{r+1}}\nn\\
		=&-\sum_{r=0}^{m}\sum_{r+1<a_{r+1}<\cdots<a_m\le k-1}\widetilde R_{1,\cdots,r,a_{r+1},\cdots,a_m}\frac{\ell_{k-m-1}^{\widehat{1,\cdots,r,a_{r+1},\cdots,a_m,k}}-\ell_{k-m-1}^{\widehat{1,\cdots,r+1,a_{r+1},\cdots,a_m}}}{\lambda_k-\lambda_1}. \label{part3}
	\end{align}
	Finally, the term at $(0,k-m)$ changes to
	\begin{align}
		& -\sum_{r=0}^{m-1}\sum_{r+1<a_{r+1}<\cdots<a_{m-1}\le k-1}\widetilde R_{1,\cdots,r,a_{r+1},\cdots,a_{m-1}}\frac{\ell_{k-m}^{\widehat{1,\cdots,r,a_{r+1},\cdots,a_{m-1},k}}-\ell_{k-m}^{\widehat{1,\cdots,r+1,a_{r+1},\cdots,a_{m-1}}}}{\lambda_k-\lambda_1} \nn\\
		&\qquad -\sum_{t=1}^{k-m}\sum_{r=0}^{m}\sum_{r+1<a_{r+1}<\cdots<a_m=k-t} \frac{\lambda_k-\lambda_{r+1}}{\lambda_k-\lambda_1}\widetilde R_{1,\cdots,r,a_{r+1},\cdots,a_m}\ell_{k-m}^{\widehat{1,\cdots,r,a_{r+1},\cdots,a_m}}\nn\\
		&\qquad -\frac{\lambda_k-\lambda_{m+1}}{\lambda_k-\lambda_1}\widetilde R_{1,\cdots,m}\ell_{k-m}^{\widehat{1,\cdots,m}}.\label{0k-m}
	\end{align}
	In~\eqref{0k-m} we compute the rational function before $\ell_{k-m}^{\widehat{1,\cdots,r,a_{r+1},\cdots,a_{m}}}$ where $a_{r+1}>r+1$ and $a_{m}<k$. It is
	\begin{align}
		&-\sum_{t=0}^{r-1}\frac{1}{\lambda_1-\lambda_k}\widetilde R_{1,\cdots,t,t+2,\cdots,r,a_{r+1},\cdots,a_{m-1}}-\frac{\lambda_k-\lambda_{r+1}}{\lambda_k-\lambda_1}\widetilde R_{1,\cdots,r,a_{r+1},\cdots,a_m} \nn\\
		=&-\frac{1}{\lambda_1-\lambda_k}\left(\sum_{t=0}^{r-1}(\lambda_{t+1}-\lambda_{t+2})-(\lambda_k-\lambda_{r+1})\right)\widetilde R_{1,\cdots,r,a_{r+1},\cdots,a_{m}} \nn\\
		=&-\widetilde R_{1,\cdots,r,a_{r+1},\cdots,a_m}.
	\end{align}
	Similarly, the rational function before $\ell_{k-m}^{\widehat{1,\cdots,m}}$ is $-\widetilde R_{1,\cdots,m}$. And lastly, the rational function before $\ell_{k-m}^{\widehat{1,\cdots,r,a_{r+1},\cdots,a_{m-1},k}}$ where $a_{r+1}>r+1$ is
	\begin{align}
	-\frac{1}{\lambda_k-\lambda_1}\widetilde R_{1,\cdots,r,a_{r+1},\cdots,a_{m-1}} =-\widetilde R_{1,\cdots,r,a_{r+1},\cdots,a_{m-1},k}.
	\end{align}
	Hence~\eqref{0k-m} reads
	\begin{equation}
		-\sum_{1\le a_1<\cdots<a_m\le k}\widetilde R_{a_1,\cdots,a_m}\ell_{k-m}^{\widehat{a_1,\cdots,a_m}}. \label{part4}
	\end{equation}
	Thus by~\eqref{part1},~\eqref{part2},~\eqref{part3} and~\eqref{part4}, we obtain~\eqref{lind} for $m+1$.
	
	Now following~\eqref{lind}, for $m=k-1$ we have
	\begin{align}
		&\ell_k(\lambda_1,\cdots,\lambda_k)\nn\\
		=& \ell_{k,(k)}-\sum_{r=1}^{k-2}\sum_{1\le a_1<\cdots<a_r\le k}\widetilde R_{a_1,\cdots,a_r}\ell_{k-r}^{\widehat{a_1,\cdots,a_r}}-\sum_{r=0}^{k-2}\widetilde R_{1,\cdots,r,r+2,\cdots,k-1}\frac{\ell_1(\lambda_{r+1})-\ell_1(\lambda_k)}{\lambda_k-\lambda_1} \nn\\
		=& \ell_{k,(k)}-\sum_{r=1}^{k-1}\sum_{1\le a_1<\cdots<a_r\le k}\widetilde R_{a_1,\cdots,a_r}\ell_{k-r}^{\widehat{a_1,\cdots,a_r}} \nn\\
		=&(-1)^k\prod_{i=1}^k\mathcal J_n(\lambda_{i+1},\lambda_i)-\frac{1}{\prod_{i=1}^k(\lambda_i-\lambda_{i+1})}-\sum_{t=1}^{k-1}\sum_{1\le a_1<\cdots<a_t\le k}R_{a_1,\cdots,a_t}(\lambda_1,\cdots,\lambda_k)\ell_t(\lambda_{a_1},\cdots,\lambda_{a_t}).
	\end{align}
	Thus the lemma is proved.
\end{proof}

\begin{lem}
	For $k\ge2$,
	\begin{align}
		&\frac{1}{k}\sum_{\sigma\in S_k}\ell_k(n;\lambda_{\sigma(1)},\cdots,\lambda_{\sigma(k)})\nn\\
		=&\frac{(-1)^k}{k}\sum_{\sigma\in S_k}\prod_{i=1}^k\mathcal J_n(\lambda_{\sigma(i)},\lambda_{\sigma(i+1)})+\frac{\delta_{k,2}}{(\lambda_1-\lambda_2)^2}\nn\\
		&-\sum_{t=1}^{k-1}\sum_{1\le a_1<\cdots<a_t\le k}\left(\sum_{\sqcup_{s=1}^tP_s=\mathbf Z_k\backslash\{a_1,\cdots,a_t\}}\prod_{i=1}^tQ_{a_i,P_i}(\lambda_1,\cdots,\lambda_k)\right)\frac{1}{t}\sum_{\sigma\in S_t}\ell_t(n;\lambda_{a_{\sigma(1)}},\cdots,\lambda_{a_{\sigma(t)}}). \label{cor-L}
	\end{align}
\end{lem}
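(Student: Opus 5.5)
Apply $\frac1k\sum_{\sigma\in S_k}$ to the formula~\eqref{lk} evaluated at $(\lambda_{\sigma(1)},\dots,\lambda_{\sigma(k)})$ and handle its three terms separately.

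\emph{First term.} Since $\prod_{i=1}^k\mathcal J_n(\lambda_{\sigma(i+1)},\lambda_{\sigma(i)})$ is the cyclic product along the reversed cycle, reindexing $\sigma\mapsto\sigma\circ w_0$ (reversal) shows that $\frac{(-1)^k}{k}\sum_\sigma\prod_i\mathcal J_n(\lambda_{\sigma(i+1)},\lambda_{\sigma(i)})$ equals the first term on the right of~\eqref{cor-L}.

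\emph{Second term.} I must show that $\frac1k\sum_{\sigma}\prod_{i=1}^k(\lambda_{\sigma(i)}-\lambda_{\sigma(i+1)})^{-1}$ equals $-\delta_{k,2}/(\lambda_1-\lambda_2)^2$. For $k=2$ this is a direct computation. For $k\ge3$, the sum is a symmetric rational function with at most simple poles along $\lambda_a=\lambda_b$ after symmetrization, since it is symmetric and the factor $(\lambda_a-\lambda_b)$ appears at most with multiplicity one per cycle in the case $k\ge3$. Thus it is a polynomial of negative degree $-k$, hence zero. Alternatively, one can use the known identity that the sum over cyclic orders of $\prod 1/(x_i-x_{i+1})$ vanishes for $k\ge3$, proved via partial fractions in $\lambda_k$: the residues at $\lambda_k=\lambda_a$ cancel between the two cycles in which $k$ sits next to $a$ on either side.

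\emph{Third term.} This is the main combinatorial step. I need
\[
\frac1k\sum_{\sigma\in S_k}\sum_{t}\sum_{a_1<\dots<a_t}R_{a_1,\dots,a_t}(\lambda_\sigma)\,\ell_t(\lambda_{\sigma(a_1)},\dots)
=\sum_t\sum_{A}\Bigl(\sum_{\sqcup P_s}\prod_iQ_{a_i,P_i}\Bigr)\frac1t\sum_{\tau\in S_t}\ell_t(\lambda_{a_{\tau(1)}},\dots).
\]
I would group the left side by the cyclic word of the chosen letters. A term is specified by a cyclic arrangement of $\mathbf Z_k$ with $t$ marked positions. Reading the cycle, each marked letter $a$ is followed by a (possibly empty) run of unmarked letters $b_1,\dots,b_r$ before the next marked letter. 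By the definition of $R$, the product over unmarked $b$ of $1/(\lambda_b-\lambda_{b+1})$ factorizes over these runs, each run contributing a chain $\prod 1/(\lambda_{b_j}-\lambda_{b_{j+1}})$ ending at the next marked letter. Here I must check carefully which index carries the marking. In~\eqref{lk} the set $\{a\}$ removes the factors $1/(\lambda_a-\lambda_{a+1})$, so a run is the segment strictly after $a$ up to the next mark, and its chain factor ends at the next marked letter $a'$. For a fixed cyclic order of the marked letters and fixed sets $P_i$, summing over orderings of each run gives a sum over linear orders of $\prod 1/(x_j-x_{j+1})$ with the last element fixed. By the standard identity $\sum_{\text{orders of }P}\prod_{\text{chain }P\to a'}=\prod_{b\in P}1/(\lambda_b-\lambda_{a'})$, proved by induction through partial fractions, this yields $Q_{a',P}$. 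The factor $\frac1k$ together with the $k$ cyclic rotations, against the $\frac1t$ on the right, matches the counting: each cyclic arrangement is counted $k$ times on the left and $t$ times on the right. Finally, I relabel so that each $P_i$ is attached to $a_i$ (the run preceding it), giving exactly the right-hand side.

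I expect the main obstacle to be this bookkeeping of the third term, specifically aligning the run orientation with $Q_{a,P}=\prod_{b\in P}1/(\lambda_b-\lambda_a)$ and confirming the chain identity's sign. If the orientation comes out reversed, I would fall back on the reversal symmetry $\ell_t(x_1,\dots,x_t)=\ell_t(x_t,\dots,x_1)$, which follows from the symmetry $K_n(y,y')=K_n(y',y)$, and absorb it after symmetrizing.
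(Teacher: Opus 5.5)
Your proposal is correct and follows essentially the same route as the paper: symmetrize \eqref{lk}, show that the cyclic sum of $\prod_i(\lambda_{\sigma(i)}-\lambda_{\sigma(i+1)})^{-1}$ vanishes for $k\ge3$, and group the third term by its marked letters, collapsing each run with the chain identity $\sum_{\text{orders of }P}\prod(\cdots)=Q_{a,P}$, which is the paper's \eqref{claimQ} proved by partial fractions. Your bookkeeping is right, so the reversal-symmetry fallback is unnecessary: the run after $a_s$ attaches to $a_{s+1}$, reversal matches the orientation of the $\mathcal J_n$ product, and the $1/k$ factor against the $1/t$ factor gives the correct counting; your vanishing argument (symmetry plus simple poles, or cancelling residues in $\lambda_k$) is only a minor variant of the paper's explicit split with $\lambda_k$ fixed in last position.
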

\begin{proof}
	By~\eqref{lk}, noting that $\ell_{k}$ and $\ell_{k,(k)}$ are invariant under the cyclic group $C_{k}$, we have
	\begin{align}
		&\sum_{\sigma\in S_k/C_k}\ell_k(\lambda_{\sigma(1)},\cdots,\lambda_{\sigma(k)})\nn\\
		=&(-1)^k\sum_{\sigma\in S_k/C_k}\prod_{i=1}^k\mathcal J_n(\lambda_{\sigma(i)},\lambda_{\sigma(i+1)})-\sum_{\sigma\in S_k/C_k}\prod_{i=1}^k\frac{1}{\lambda_{\sigma(i)}-\lambda_{\sigma(i+1)}} \nn\\
		&-\sum_{\sigma\in S_{k}/C_{k}}\sum_{t=1}^{k-1}\sum_{1\le a_1<\cdots<a_t\le k}R_{\sigma(a_1),\cdots,\sigma(a_t)}(\lambda_{\sigma(1)},\cdots,\lambda_{\sigma(k)})\ell_{t}(\lambda_{\sigma(a_1)},\cdots,\lambda_{\sigma(a_t)}).
	\end{align}
	For $k\ge3$,
	\begin{align}
		\sum_{\sigma\in S_k/C_k}\prod_{i=1}^k\frac{1}{\lambda_{\sigma(i)}-\lambda_{\sigma(i+1)}}=\sum_{\sigma\in S_{k-1}}\left(\prod_{i=1}^{k-1}\frac{1}{\lambda_{\sigma(i)}-\lambda_{\sigma(i+1)}}\right)\left(\frac{1}{\lambda_{\sigma(k-1)}-\lambda_k}-\frac{1}{\lambda_{\sigma(1)}-\lambda_k}\right)=0.
	\end{align}
	Given $1\le a_{1}<\cdots<a_{t}\le k$, $\ell_{t}(\lambda_{a_{\sigma(1)}},\cdots,\lambda_{a_{\sigma(t)}})$ has the same rational function before it for any $\sigma\in S_{t}/C_{t}$, which is
	\begin{align}
		\sum_{\substack{\sqcup_{s=1}^tP_s=\mathbf Z_k\backslash\{a_1,\cdots,a_t\}\\P_s:=\{p_1,\cdots,p_{g_s}\}}}\prod_{s=1}^{t}\sum_{\sigma\in S_{g_s}}\left(\prod_{b=1}^{g_s-1}\frac{1}{\lambda_{p_{\sigma(b)}}-\lambda_{p_{\sigma(b+1)}}}\right)\frac{1}{\lambda_{p_{\sigma(g_s)}}-\lambda_{a_s}}.
	\end{align}
	Now we claim that for any set $P=\{p_{1},\cdots,p_{s}\}\subset\mathbf Z_k$ and $a\in\mathbf Z_k\backslash P$
	\begin{equation}
		\sum_{\sigma\in S_s}\left(\prod_{b=1}^{s-1}\frac{1}{\lambda_{p_{\sigma(b)}}-\lambda_{p_{\sigma(b+1)}}}\right)\frac{1}{\lambda_{p_{\sigma(s)}}-\lambda_{a}}=\prod_{b=1}^s\frac{1}{\lambda_{p_{b}}-\lambda_{a}},\label{claimQ}
	\end{equation}
	where the RHS is just $Q_{a,P}(\lambda_1,\cdots,\lambda_k)$. We prove~\eqref{claimQ} by induction on s. The base case $s=1$ is straightforward. Assume that~\eqref{claimQ} holds for $s\ge1$. Then for $s+1$, we consider the two sides of~\eqref{claimQ} as meromorphic functions in $\lambda_{a}$, treating the other $\lambda_{j}$'s as fixed parameters. Both rational functions possess only simple poles at $\lambda_{p_{1}},\cdots,\lambda_{p_{s+1}}$. Therefore, it suffices to verify that the residues at these poles match. Without loss of generality, we focus on the coefficient of the term $\frac{1}{\lambda_{p_{s+1}}-\lambda_{a}}$. Specifically, the coefficient on the LHS of~\eqref{claimQ} is
	\begin{align}
		\sum_{\substack{\sigma\in S_{s+1}\\\sigma(s+1)=s+1}}\prod_{b=1}^{s}\frac{1}{\lambda_{p_{\sigma(b)}}-\lambda_{p_{\sigma(b+1)}}}
		&=\sum_{\sigma\in S_s}\left(\prod_{b=1}^{s-1}\frac{1}{\lambda_{p_{\sigma(b)}}-\lambda_{p_{\sigma(b+1)}}}\right)\frac{1}{\lambda_{p_{\sigma(s)}}-\lambda_{p_{s+1}}}=\prod_{b=1}^s\frac{1}{\lambda_{p_b}-\lambda_{p_{s+1}}},
	\end{align}
	which is the same as the coefficient on the RHS of~\eqref{claimQ}. These prove the lemma.
\end{proof}

Now we are ready to finish the proof of Theorem~\ref{main}. For $k\ge2$,
	\begin{align}
		&C_{k}(n;\lambda_1,\cdots,\lambda_k)\nn\\
		=&-\sum_{\text{partition }\Pi\text{ of }\mathbf Z_k}\sum_{(a_1,\cdots,a_{|\Pi|})\in\Pi_1\times\cdots\times\Pi_{|\Pi|}} \left(\prod_{i=1}^{|\Pi|} Q_{a_i,\Pi_i\backslash\{a_i\}}(\lambda_1,\cdots,\lambda_k)\right)\frac{1}{|\Pi|}\sum_{\sigma\in S_{|\Pi|}} \ell_{|\Pi|}(n;\lambda_{a_{\sigma(1)}},\cdots,\lambda_{a_{\sigma(|\Pi|)}})\nn\\
		=&-\sum_{l=1}^k\sum_{1\le a_1<\cdots<a_l\le k} \sum_{\sqcup_{i=1}^l\Pi_i=\mathbf Z_k\setminus\{a_1,\cdots,a_l\}}\left(\prod_{i=1}^{l}Q_{a_i,\Pi_i}(\lambda_1,\cdots,\lambda_k)\right)\frac{1}{l}\sum_{\sigma\in S_l}\ell_l(n;\lambda_{a_{\sigma(1)}},\cdots,\lambda_{a_{\sigma(l)}})\nn\\
		=&\frac{(-1)^{k-1}}{k}\sum_{\sigma\in S_k}\prod_{i=1}^k\mathcal J_n(\lambda_{\sigma(i)},\lambda_{\sigma(i+1)})-\frac{\delta_{k,2}}{(\lambda_1-\lambda_2)^2},
	\end{align}
	where the last equation follows from~\eqref{cor-L}. Therefore we obtain the theorem.
\end{proof}	

\begin{remark}
	The matrix-resolvent method of the 1D Toda lattice hierarchy also applies to more general solutions, such as the one given by the Gromov-Witten invariants of $\mathbb{CP}^1$~\cite{DY,DYZcp1,Toda}.
\end{remark}

\begin{remark}
The formulas~\eqref{mainanaly1} and~\eqref{mainanaly} can be equivalently (cf.~\cite{BE,DY,DYZcp1,Toda,Ruzza,Laguerre,JUE,corJUE,LUE}) written as
\begin{align}
	\mathcal C_k(n;\lambda_1,\cdots,\lambda_k)=&-\sum_{\sigma\in S_k/C_k}\frac{\tr(R_n(\lambda_{\sigma(1)})\cdots R_n(\lambda_{\sigma(k)}))}{\prod_{i=1}^k(\lambda_{\sigma(i)}-\lambda_{\sigma(i+1)})}-\frac{\delta_{k,2}}{(\lambda_1-\lambda_2)^2},\qquad k\ge2,
\end{align}
(cf. also~\cite{BDY,BDYds}), where
\begin{align}
	R_n(\lambda):=&\frac{1}{h_{n-1}}\begin{pmatrix}
		-\pi_n(\lambda)\widehat\pi_{n-1}(\lambda)&-\pi_n(\lambda)\widehat\pi_n(\lambda)\\ 
		\pi_{n-1}(\lambda)\widehat\pi_{n-1}(\lambda)&\pi_{n-1}(\lambda)\widehat\pi_n(\lambda)
	\end{pmatrix}.
\end{align}
\end{remark}

\begin{remark}
For the cases of Jacobi unitary ensemble and Laguerre unitary ensemble, Gisonni, Grava and Ruzza~\cite{Laguerre,JUE,corJUE} also gave explicit formulas in terms of generalized hypergeometric functions using different methods (cf.~\cite{LUE}).
\end{remark}
	
\begin{remark}
	For the case of Gaussian unitary ensemble (GUE), the Christoffel--Darboux kernel reads
	\begin{equation}
		K_n^\text{GUE}(\lambda_1,\lambda_2)=\frac{1}{h_{n-1}}\frac{\He_n(\lambda_1)\He_{n-1}(\lambda_2)-\He_{n-1}(\lambda_1)\He_n(\lambda_2)}{\lambda_1-\lambda_2}.
	\end{equation} It is well known~\cite{PR,Szego,Forrester} that the double-scaling limit $n\to\infty$ of $K_n^\text{GUE}(\lambda_1,\lambda_2)$ gives the Airy kernel $K^\text{Airy}(\xi_1,\xi_2)$:
	\begin{equation}
		\lim_{n\to\infty}n^{-1/6}K_n^\mathrm{GUE}(\lambda_1,\lambda_2)e^{-\frac{\lambda_1^2+\lambda_2^2}{4}}=K^\mathrm{Airy}(\xi_1,\xi_2),
	\end{equation}
	where
	\begin{equation}
		\lambda_1=2\sqrt{n}+n^{-1/6}\xi_1,\qquad \lambda_2=2\sqrt{n}+n^{-1/6}\xi_2,
	\end{equation}
	and
	\begin{equation}
		K^\text{Airy}(\xi_1,\xi_2):=\frac{\mathrm{Ai}(\xi_1)\mathrm{Ai}'(\xi_2)-\mathrm{Ai}'(\xi_1)\mathrm{Ai}(\xi_2)}{\xi_1-\xi_2}.
	\end{equation}
	In addition, in the lower half plane, the $\mathcal J_n$ kernel (cf.~\eqref{jn}) for GUE
	\begin{equation}
		\mathcal J_n^\mathrm{GUE}(\lambda_1,\lambda_2)=-\frac{1}{h_{n-1}}\frac{\He_n(\lambda_1)\widehat{\He}_{n-1}(\lambda_2)-\He_{n-1}(\lambda_1)\widehat{\He}_n(\lambda_2)}{\lambda_1-\lambda_2}
	\end{equation}
	(cf.~\cite{Toda,FY}) gives the Airy--Bairy kernel as $n\to\infty$:
	\begin{equation} \label{dsl}
		\lim_{n\to\infty}n^{-1/6}e^{\frac{\lambda_1^2-\lambda_2^2}{4}}\mathcal J_n^\mathrm{GUE}(\lambda_1,\lambda_2)=2\pi i\omega \mathcal J^\mathrm{Ai-Bi}(\xi_1,\xi_2),
	\end{equation} where
	\begin{equation}
		\mathcal J^\mathrm{Ai-Bi}(\xi_1,\xi_2):=-\frac{\mathrm{Ai}(\xi_1)\omega\mathrm{Ai}'(\omega\xi_2)-\mathrm{Ai}'(\xi_1)\mathrm{Ai}(\omega\xi_2)}{\xi_1-\xi_2},
	\end{equation} and $\omega:=e^{2\pi i/3}$.
	We study application of~\eqref{dsl} and of the determinantal formulas~\eqref{maineq} and~\eqref{maineq1} in a subsequent publication.
\end{remark}

\section{Applications}

\subsection{KP integrability and affine coordinates} \label{sec-kp}

We first prove Lemma~\ref{exp-an}.

\begin{proof}[Proof of Lemma~\ref{exp-an}]
	The case $n=0$ is direct. For the case $n\ge1$, along the lines of the proof of~\cite[Lemma 4.4]{YZ}, we denote for $n\in\ZZ_{\ge0}$,
	\begin{align}
		&\phi_A(\lambda,n):=\lambda^{-n}\psi_A^\star(\lambda,n),\qquad\phi_B(\lambda,n):=h_n^{-1}\lambda^n\psi_B^\star(\lambda,n).
	\end{align}
	Then for $n\in\ZZ_{\ge1}$,
	\begin{equation}\label{an}
		\frac{\lambda_2^n}{\lambda_1^n}B_n(\lambda_1,\lambda_2)=\frac{\phi_A(\lambda_1,n)\phi_B(\lambda_2,n-1)-\gamma_n\frac{\phi_A(\lambda_1,n-1)}{\lambda_1}\frac{\phi_B(\lambda_2,n)}{\lambda_2}}{\lambda_1-\lambda_2},
	\end{equation}
	where we recall that $\gamma_n=\frac{h_n}{h_{n-1}}$ for $n\ge1$. Similar to~\cite{BDYds,DYZ,YZ}, we have
	\begin{align}
		\phi_B(\lambda_2)=&\phi_B(\lambda_1)+\phi'_B(\lambda_1)(\lambda_2-\lambda_1)+(\lambda_2-\lambda_1)^2\frac{\dif}{\dif\lambda_1}\left(\frac{\phi_B(\lambda_1)-\phi_B(\lambda_2)}{\lambda_1-\lambda_2}\right),\label{phi2}\\
		\frac{\phi_B(\lambda_2)}{\lambda_2}=&\frac{\phi_B(\lambda_1)}{\lambda_1}+\frac{\dif}{\dif\lambda_1}\left(\frac{\phi_B(\lambda_1)}{\lambda_1}\right)(\lambda_2-\lambda_1)+(\lambda_2-\lambda_1)^2\frac{\dif}{\dif\lambda_1}\left(\frac{\phi_B(\lambda_1)/\lambda_1-\phi_B(\lambda_2)/\lambda_2}{\lambda_1-\lambda_2}\right),\label{phi2'}
	\end{align} where we omit the parameter $n$. Since $\phi_A(\lambda,n)\sim1+O(\lambda^{-1})$ and $\phi_B(\lambda,n)\sim1+O(\lambda^{-1})$, substituting~\eqref{phi2},~\eqref{phi2'} into~\eqref{an}, and using~\eqref{wronskian}, we obtain the lemma.
\end{proof}

We now prove Corollary~\ref{KPthm}. 

\begin{proof}[Proof of Corollary~\ref{KPthm}]
	Using Theorem~\ref{main}, Lemma~\ref{exp-an} and the converse theorem of Zhou's theorem, we obtain the corollary.
\end{proof}

Denote
\begin{align}\label{pi}
	\pi_l(\lambda)=\sum_{k=0}^la_k(l)\lambda^{l-k},\qquad\lambda^l=\sum_{k=0}^lc_k(l)\pi_{l-k}(\lambda),
\end{align} and let
\begin{equation}
	T_{k,l}(n):=A_{k-1,l}(n)-A_{k,l-1}(n).
\end{equation}
We may prove a useful lemma.
\begin{lem} \label{at}
	We have
	\begin{align}
		&A_{k,0}(n)=c_{k+1}(n+k)~~(n\ge1),\qquad A_{0,l}(n)=-a_{l+1}(n)~~(n>l), \label{0-case}\\
		&T_{k,l}(n)=a_l(n)c_k(n-1+k)-a_{l-1}(n-1)c_{k-1}(n+k-1)\gamma_n~~(n>l). \label{klcase}
	\end{align}
\end{lem}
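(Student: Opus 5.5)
The plan is to compare coefficients directly in the closed form for $\frac{\lambda_2^n}{\lambda_1^n}B_n(\lambda_1,\lambda_2)$ obtained in the proof of Lemma~\ref{exp-an}. Recall~\eqref{an}:
\begin{equation*}
(\lambda_1-\lambda_2)\frac{\lambda_2^n}{\lambda_1^n}B_n(\lambda_1,\lambda_2)=\phi_A(\lambda_1,n)\phi_B(\lambda_2,n-1)-\gamma_n\frac{\phi_A(\lambda_1,n-1)\phi_B(\lambda_2,n)}{\lambda_1\lambda_2}.
\end{equation*}
Multiplying~\eqref{affine} by $\lambda_1-\lambda_2$ turns its left-hand side into
\begin{equation*}
1+\sum_{k,l\ge0}A_{k,l}(n)\left(\lambda_1^{-l}\lambda_2^{-k-1}-\lambda_1^{-l-1}\lambda_2^{-k}\right).
\end{equation*}
With the convention $A_{-1,\cdot}=A_{\cdot,-1}=0$, the coefficient of $\lambda_1^{-l}\lambda_2^{-k}$ here is exactly $T_{k,l}(n)$ for $(k,l)\neq(0,0)$, and $1$ for $(k,l)=(0,0)$. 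So everything reduces to computing the coefficients of $\phi_A$ and $\phi_B$ as power series in $\lambda^{-1}$.

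First I identify these series. Since $\psi_A^\star(\lambda,n)=\pi_n(\lambda)$, the definition~\eqref{pi} gives $\phi_A(\lambda,n)=\sum_{l=0}^n a_l(n)\lambda^{-l}$. For $\phi_B$, I use~\eqref{poincare}, which gives
\begin{equation*}
\phi_B(\lambda,n)=h_n^{-1}\sum_{j\ge0}\lambda^{-j}\int_\RR y^{n+j}\pi_n(y)\dif\mu(y).
\end{equation*}
Expanding $y^{n+j}=\sum_i c_i(n+j)\pi_{n+j-i}(y)$ via~\eqref{pi} and using orthogonality, only the term $i=j$ survives. Hence $\int_\RR y^{n+j}\pi_n\dif\mu=c_j(n+j)h_n$, and so $\phi_B(\lambda,n)=\sum_{j\ge0}c_j(n+j)\lambda^{-j}$. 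I also record $\gamma_n=h_n/h_{n-1}$, which follows from the three-term relation~\eqref{three-term} by pairing with $\pi_{n-1}$.

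Now I read off the coefficient of $\lambda_1^{-l}\lambda_2^{-k}$ on the right-hand side. The first product contributes $a_l(n)c_k(n-1+k)$. The second, after the shift by $\lambda_1^{-1}\lambda_2^{-1}$, contributes $-\gamma_n a_{l-1}(n-1)c_{k-1}(n+k-1)$, with the convention that terms carrying a negative index vanish. This yields~\eqref{klcase}.

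The restriction $n>l$ is needed so that $a_l(n)$ and $a_{l-1}(n-1)$ are genuine coefficients of $\pi_n$ and $\pi_{n-1}$ (i.e. $l\le n$ and $l-1\le n-1$), rather than indices beyond the polynomial degree. The case $(k,l)=(0,0)$ is a consistency check: $a_0(n)c_0(n-1)=1$.

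The boundary formulas~\eqref{0-case} then follow as special cases. Setting $l=0$ gives $T_{k,0}=A_{k-1,0}$, while the right side reduces to $a_0(n)c_k(n-1+k)=c_k(n-1+k)$, because the $\gamma_n$-term has $a_{-1}=0$. Shifting $k\mapsto k+1$ gives $A_{k,0}(n)=c_{k+1}(n+k)$; this case imposes no condition on $l$, so $n\ge1$ suffices. Setting $k=0$ gives $T_{0,l}=-A_{0,l-1}$, while the right side is $a_l(n)c_0(n-1)=a_l(n)$, because $c_{-1}=0$. Shifting $l\mapsto l+1$ gives $A_{0,l}(n)=-a_{l+1}(n)$ for $l+1\le n$, i.e. $n>l$.

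The only step requiring real care is the identification of the coefficients of $\phi_B$. One must be sure that the asymptotic series~\eqref{poincare} can legitimately be multiplied and compared termwise, which is fine because all the identities involved are formal power series identities in $\lambda_1^{-1},\lambda_2^{-1}$. The remaining work is index bookkeeping at the boundary values $k=0$ and $l=0$.
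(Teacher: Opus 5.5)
Your proposal is correct and follows the paper's argument: the paper also identifies the coefficients of $\phi_B(\lambda,n)$ as $c_j(n+j)=h_n^{-1}\int_\RR y^{n+j}\pi_n(y)\dif\mu(y)$ via \eqref{poincare} and orthogonality, then compares coefficients in \eqref{affine} (equivalently, \eqref{an} multiplied by $\lambda_1-\lambda_2$); you simply write out the bookkeeping the paper leaves implicit. One small correction to your justification: your coefficient count shows \eqref{klcase} already holds for $n\ge l$, since $a_l(n)$ and $a_{l-1}(n-1)$ are genuine coefficients there, so $n>l$ is not actually "needed" as you claim, though proving the formula under this weaker hypothesis does no harm.
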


\begin{proof}
	By definition,
	\begin{equation}
		c_k(l+k)=\frac{1}{h_l}\int_{\RR}y^{l+k}\pi_l(y)\dif\mu(y).
	\end{equation}
	By comparing the coefficients in the two hand sides of~\eqref{affine}, we obtain the lemma.
\end{proof}

Now using Theorem~\ref{main}, Corollary~\ref{KPthm} and Lemma~\ref{at}, we prove Theorem~\ref{hankel}. For a matrix $M$, denote $M_{i_1,\cdots,i_m}^{j_1,\cdots,j_n}$ as the matrix obtained from the matrix $M$ deleting the $i_1$th, \dots, $i_m$th rows and the $j_1$th, \dots, $j_n$th columns. We denote
\begin{equation}
	\mathcal H:=H_{(0,1,\cdots,n;0,1,\cdots,n)}
\end{equation} We first state a lemma.

\begin{lem} \label{det}
	For any $2\le i<j\le n-1$,
	\begin{equation}
		\det(\mathcal H_{i,j}^{n,n+1})+\det(\mathcal H_{j,n+1}^{i-1,n+1})-\det(\mathcal H_{j-1,n+1}^{i,n+1})=0.
	\end{equation}
\end{lem}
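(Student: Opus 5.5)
The plan is to turn each of the three minors into an integral over $\RR^{n-1}$ by the Andréief (Heine) identity, so that the claim reduces to an identity between symmetric polynomials. We index rows and columns of $\mathcal H$ by $1,\dots,n+1$, so $\mathcal H_{r,c}=m_{r+c-2}$. Each minor in the statement keeps $n-1$ rows, with exponent set $R\subset\{0,\dots,n\}$ (the retained row indices minus one), and $n-1$ columns, with exponent set $C$. For such a pair, Andréief's identity gives
\begin{equation*}
\det\big(m_{r+c}\big)_{r\in R,\,c\in C}=\frac{1}{(n-1)!}\int_{\RR^{n-1}}\det\big(y_k^{r}\big)_{r\in R,\,k}\,\det\big(y_k^{c}\big)_{c\in C,\,k}\prod_{k=1}^{n-1}\dif\mu(y_k).
\end{equation*}
Each generalized Vandermonde factor equals $\Delta(y)$ times a Schur polynomial in $y_1,\dots,y_{n-1}$. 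Since $\mathcal H$ is symmetric, we may transpose a minor whenever that puts it in a more convenient form.

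Next I compute the Schur factors explicitly.
\begin{itemize}
\item For $\det(\mathcal H_{i,j}^{n,n+1})$: the columns are $\{0,\dots,n-2\}$, giving the plain Vandermonde $\Delta(y)$. The rows are $\{0,\dots,n\}\setminus\{i-1,j-1\}$, giving $\Delta(y)\,s_\lambda(y)$, where $\lambda$ is a two-column shape.
\item For $\det(\mathcal H_{j,n+1}^{i-1,n+1})$: the rows are $\{0,\dots,n-1\}\setminus\{j-1\}$, giving $\Delta\,e_{n-j}(y)$. The columns are $\{0,\dots,n-1\}\setminus\{i-2\}$, giving $\Delta\,e_{n-i+1}(y)$.
\item For $\det(\mathcal H_{j-1,n+1}^{i,n+1})$: the same reasoning gives $\Delta\,e_{n-j+1}(y)$ and $\Delta\,e_{n-i}(y)$.
\end{itemize}
Here I use the standard fact that omitting the exponent $a$ from $\{0,\dots,N\}$ with $N+1$ variables yields $\Delta\cdot e_{N-a}$. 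With $n-1$ variables and $N=n-1$, omitting exponent $j-1$ gives $e_{n-j}$ and omitting $i-2$ gives $e_{n-i+1}$, as used above.

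The claim then reduces to the pointwise polynomial identity
\begin{equation*}
s_\lambda(y)=\pm\big(e_{n-i}(y)\,e_{n-j+1}(y)-e_{n-i+1}(y)\,e_{n-j}(y)\big),
\end{equation*}
multiplied by $\Delta(y)^2$ and integrated against $\dif\mu^{\otimes(n-1)}$. I will prove it with the dual Jacobi--Trudi formula. Omitting the exponents $i-1<j-1$ from $\{0,\dots,n\}$ in $n-1$ variables gives $\lambda'=(n-i+1,\,n-j+1)$ up to the usual shift. Hence $s_\lambda=\det\begin{pmatrix}e_{n-i+1}&e_{n-i+2}\\ e_{n-j}&e_{n-j+1}\end{pmatrix}$. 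I still have to pin down the exact index shift, which should turn this into the $2\times2$ determinant $e_{n-i}e_{n-j+1}-e_{n-i+1}e_{n-j}$ with the indices matching the two product terms. The hypothesis $2\le i<j\le n-1$ guarantees that every $e_k$ involved has $0\le k\le n-1$, so no boundary degeneracies arise.

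The main obstacle is bookkeeping rather than ideas. I need the precise sign of each Andréief representation, coming from the ordering of the retained rows and columns and from transposing the first minor. I also need the exact partition $\lambda$, so that the signs $+,+,-$ in the statement come out right. I would first test the identity for $n=4$, $i=2$, $j=3$ by direct expansion to fix all conventions, and then write the general argument.
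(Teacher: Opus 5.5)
Your route is correct, and it differs from the paper's. The paper applies Jacobi's complementary-minor theorem to turn the three $(n-1)\times(n-1)$ minors of $\mathcal H$ into $2\times 2$ minors of the adjugate $\mathcal K$. It then identifies $\mathcal K/\Delta_n$ with the coefficient matrix of the Christoffel--Darboux kernel $K_{n+1}$ (a Bezoutian), so that $\mathcal K_{i,j}=\Delta_n\sum_k a_{k-i+1}(k)a_{k-j+1}(k)/h_k$. The resulting identity among the $a_k(l)$ and $h_k$ is finished by a telescoping sum built from the three-term recurrence.

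You instead use Heine/Andr\'eief to put all three minors under a single integral $\frac{1}{(n-1)!}\int_{\RR^{n-1}}\Delta(y)^2(\cdots)\prod_k\dif\mu(y_k)$. The bracket then vanishes identically by the dual Jacobi--Trudi formula for a two-column shape. Your argument is shorter and needs neither $\Delta_n\neq0$ nor any orthogonal-polynomial input. Any finite sequence $m_0,\dots,m_{2n}$ is the moment sequence of a finite complex combination of point masses, so your argument in fact proves the statement as a polynomial identity valid for every Hankel matrix. The paper's route has the merit of staying inside the $a_k(l),h_k$ bookkeeping already used in the proof of Theorem~\ref{hankel}.

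The bookkeeping you left open resolves cleanly, but fix two slips.

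\begin{itemize}
\item Omitting $a$ from $\{0,\dots,N\}$ leaves $N$ exponents, so your ``standard fact'' is for $N$ variables, not $N+1$. Your application with $N=n-1$ is already correct.
\item The row exponents $\{0,\dots,n\}\setminus\{i-1,j-1\}$ give $\lambda=(2^{n-j+1},1^{j-i-1})$. Hence $\lambda'=(n-i,\,n-j+1)$, not $(n-i+1,n-j+1)$, and
\[
s_\lambda=\det\begin{pmatrix}e_{n-i}&e_{n-i+1}\\ e_{n-j}&e_{n-j+1}\end{pmatrix}=e_{n-i}e_{n-j+1}-e_{n-i+1}e_{n-j}.
\]
\end{itemize}

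There are no signs to track. Keep the retained rows and columns in increasing order; Andr\'eief then applies directly. Each generalized Vandermonde is the same $\Delta(y)$ times its Schur polynomial, so no transposition or reordering is needed. The integrand of the left-hand side is
\[
(e_{n-i}e_{n-j+1}-e_{n-i+1}e_{n-j})+e_{n-j}e_{n-i+1}-e_{n-j+1}e_{n-i}\equiv 0,
\]
with the signs $+,+,-$ exactly as in the statement. For your test case $n=4$, $i=2$, $j=3$, the three integrand factors are $s_{(2,2)}=e_2^2-e_1e_3$, then $e_1e_3$, then $e_2^2$.
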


\begin{proof}
	Denote by $\mathcal K$ the adjoint matrix of $\mathcal H$. Then from~\cite[Section 1.4]{Matrix}, we only need to prove that
	\begin{equation} \label{adj}
		\det\begin{pmatrix}
			\mathcal K_{n,i}&\mathcal K_{n,j}\\\mathcal K_{n+1,i}&\mathcal K_{n+1,j}
		\end{pmatrix}+\det\begin{pmatrix}
			\mathcal K_{i-1,j}&\mathcal K_{i-1,n+1}\\\mathcal K_{n+1,j}&\mathcal K_{n+1,n+1}
		\end{pmatrix}-\det\begin{pmatrix}
			\mathcal K_{i,j-1}&\mathcal K_{i,n+1}\\\mathcal K_{n+1,j-1}&\mathcal K_{n+1,n+1}
		\end{pmatrix}=0.
	\end{equation} 
	It is known~\cite{Bezoutian} (see also~\cite{Hankel}) that $\mathcal K$ is a Bezoutian matrix with respect to the orthogonal polynomials associated to moments $\{m_k\}_{k\ge0}$:
	\begin{equation}
		K_{n+1}(\lambda_1,\lambda_2)=\sum_{l=0}^{n}\frac{\pi_l(\lambda_1)\pi_l(\lambda_2)}{h_l}=\frac{1}{h_{n}}\frac{\pi_{n+1}(\lambda_1)\pi_n(\lambda_2)-\pi_n(\lambda_1)\pi_{n+1}(\lambda_2)}{\lambda_1-\lambda_2}=\frac{1}{\Delta_n}\sum_{i,j=0}^{n}\mathcal K_{i+1,j+1}\lambda_1^i\lambda_2^j.
	\end{equation} Thus
	\begin{equation}
		\mathcal K_{i,j}=\Delta_n\sum_{k=\max\{i-1,j-1\}}^n\frac{a_{k-i+1}(k)a_{k-j+1}(k)}{h_k}.
	\end{equation} 
	Then~\eqref{adj} is equivalent to
	\begin{align}
		&-\frac{a_{j-i-1}(j-2)}{h_{j-2}}-\frac{a_{n-i+1}(n)a_{n-j}(n-1)-a_{n-i}(n-1)a_{n-j+1}(n)}{h_{n-1}}\nn\\
		&\qquad+\sum_{k=j-1}^{n-1}\frac{a_{k-i+2}(k)a_{k-j+1}(k)-a_{k-i+1}(k)a_{k-j+2}(k)}{h_k}=0. \label{adj1}
	\end{align}
	
	We write
	\begin{equation}
		\alpha_k:=\frac{1}{h_k}(a_{k-i+2}(k+1)a_{k-j+1}(k)-a_{k-i+1}(k)a_{k-j+2}(k+1)).
	\end{equation} By~\eqref{three-term}, we have
	\begin{equation}
		a_{k-i}(k)=a_{k-i}(k+1)+\beta_ka_{k-i-1}(k)+\frac{h_k}{h_{k-1}}a_{k-i-2}(k-1).
	\end{equation}
	Thus
	\begin{align}
		&\frac{1}{h_k}(a_{k-i+2}(k)a_{k-j+1}(k)-a_{k-i+1}(k)a_{k-j+2}(k))\nn\\
		=&\frac{1}{h_k}\biggl(\left(a_{k-i+2}(k+1)+\beta_ka_{k-i+1}(k)+\frac{h_k}{h_{k-1}}a_{k-i}(k-1)\right)a_{k-j+1}(k)\nn\\
		&\qquad-a_{k-i+1}(k)\left(a_{k-j+2}(k+1)+\beta_ka_{k-j+1}(k)+\frac{h_k}{h_{k-1}}a_{k-j}(k-1)\right)\biggr)\nn\\
		=&\alpha_k-\alpha_{k-1}.
	\end{align}
	Then the LHS of~\eqref{adj1} writes
	\begin{equation}
		\alpha_{j-2}-\alpha_{n-1}+\sum_{k=j-1}^{n-1}(\alpha_k-\alpha_{k-1})=0.
	\end{equation}
	Thus the lemma is proved.
\end{proof}

Now we can prove Theorem~\ref{hankel}.
\begin{proof}[Proof of Theorem~\ref{hankel}]
	The orthogonal polynomials $\{\pi_k\}_{k\ge0}$ with respect to $\dif\mu(\lambda)$ read
	\begin{equation}
		\pi_k(\lambda)=\frac{1}{\Delta_{k-1}}\det\begin{pmatrix}
			m_0&m_1&\cdots&m_k\\
			m_1&m_2&\cdots&m_{k+1}\\
			\cdots&\cdots&\cdots&\cdots\\
			m_{k-1}&m_k&\cdots&m_{2k-1}\\
			1&\lambda&\cdots&\lambda^k
		\end{pmatrix}.
	\end{equation}
	Thus we have
	\begin{align}
		a_{n-k}(n)&=\frac{(-1)^{n-k}}{\Delta_{n-1}}\det H_{(0,1,\cdots,n-1;0,1,\cdots,\widehat{k},\cdots,n)},\\
		c_{n-l}(n)&=\frac{1}{\Delta_l}\det H_{(0,1,\cdots,l-1,n;0,1,\cdots,l)},\\
		\gamma_n&=\frac{\Delta_n\Delta_{n-2}}{\Delta_{n-1}^2}.
	\end{align}
	Therefore, $A_{0,j}(n)=-a_{j+1}(n)$ and $A_{i,0}(n)=c_{i+1}(n+i)$ satisfy~\eqref{hk}. Then by~\eqref{klcase},
	\begin{align}
		T_{i,j}(n)=&\frac{(-1)^j}{\Delta_{n-1}^2}\biggl(\det H_{(0,1,\cdots,\widehat{n-j},\cdots,n;0,1,\cdots,n-1)}\det H_{(0,1,\cdots,n-2,n+i-1;0,1,\cdots,n-1)}\nn\\
		&\qquad\quad+\det H_{(0,1,\cdots,\widehat{n-j},\cdots,n-1;0,1,\cdots,n-2)}\det H_{(0,1,\cdots,n-1,n+i-1;0,1,\cdots,n)}\biggr)\nn\\
		=&\frac{(-1)^j}{\Delta_{n-1}}\det H_{(0,1,\cdots,\widehat{n-j},\cdots,n;0,1,\cdots,n-2,n+i-1)},
	\end{align}
	where the last equation follows from the Sylvester's determinant identity~\cite[Section 2.3]{Matrix}:
	\begin{equation}
		\det(M)\det(M_{u,v}^{s,t})=\det(M_u^s)\det(M_v^t)-\det(M_u^t)\det(M_v^s),
	\end{equation}
	for any $u<v$ and $s<t$.
	
	Now we assume that $A_{i,j}(n),j<n-1$ satisfies~\eqref{hk}, we want to show that $A_{i-1,j+1}(n)$ also satisfies~\eqref{hk}. Then the proof is complete following the fact that $A_{i,0}(n)$ and $A_{0,j}(n)$ satisfy~\eqref{hk}. Since
	\begin{equation}
		A_{i-1,j+1}(n)=A_{i,j}(n)+T_{i,j+1}(n),
	\end{equation}
	we need to prove that
	\begin{align}
		0=&\det H_{(0,1,\cdots,\widehat{n-j-1},\cdots,n;0,1,\cdots,n-2,n+i-1)}-\det H_{(0,1,\cdots,\widehat{n-j-1},\cdots,n-1,n+i;0,1,\cdots,n-1)}\nn\\
		&-\det H_{(0,1,\cdots,\widehat{n-j-2},\cdots,n-1,n+i-1;0,1,\cdots,n-1)}\nn\\
		=&\sum_{\substack{2\le k\le n\\k\neq n-j}}(-1)^{k+n}m_{n+i+k-2}(\det(\mathcal H_{k,n-j}^{n,n+1})+\det(\mathcal H_{n-j,n+1}^{k-1,n+1})-\det(\mathcal H_{n-j-1,n+1}^{k,n+1})),
	\end{align}
	where we use the notation that $\det(\mathcal H_{j,i}^{n,n+1})=-\det(\mathcal H_{i,j}^{n,n+1})$ for $i<j$. This follows from Lemma~\ref{det}. Thus we obtain the theorem.
\end{proof}

\subsection{Properties of connected correlators} \label{section-rational}

We prove the following theorem.

\begin{thm} \label{rational}
		For any integers $i_1,\cdots,i_k\ge1$, $\langle\tr M^{i_1}\cdots\tr M^{i_k}\rangle_c(n)$ is a polynomial of $a_j(m)$, $1\le j\le i_1+\cdots+i_k$, $n-1\le m\le n+i_1+\cdots+i_k$. Here $a_j(m)$ are defined in~\eqref{pi}.
\end{thm}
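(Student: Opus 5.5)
The plan is to read off the connected correlators as coefficients of the formal expansion given by Theorem~\ref{main} and Lemma~\ref{exp-an}, and then to track which quantities enter those coefficients. Write $N:=i_1+\cdots+i_k$.

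\textbf{Step 1: reduce to affine coordinates.} Since $B_n(\lambda_a,\lambda_b)$ differs from $\widehat A(\lambda_a,\lambda_b)$ only by the factor $\lambda_b^n/\lambda_a^n$, and these factors cancel around every cycle $\prod_i B_n(\lambda_{\sigma(i)},\lambda_{\sigma(i+1)})$, formula~\eqref{maineq} equals~\eqref{kpmain} with $A_{i,j}=A_{i,j}(n)$. Expanding the cyclic products, each $\langle\tr M^{i_1}\cdots\tr M^{i_k}\rangle_c(n)$ becomes a polynomial with integer coefficients in the $A_{p,q}(n)$. A degree count in $1/\lambda$ bounds the indices: each $A_{p,q}$ carries total degree $p+q+2$, so only $p+q\le N-2$ occur. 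The case $k=1$ is handled the same way using~\eqref{maineq1}.

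\textbf{Step 2: express $A_{p,q}(n)$ through the coefficients $a_j(m)$.} I would use Lemma~\ref{at}: the relation $A_{p,q}=A_{p+1,q-1}+T_{p+1,q}$ telescopes $A_{p,q}(n)$ into $A_{p+q,0}(n)=c_{p+q+1}(n+p+q)$ plus a sum of $T$'s. By~\eqref{klcase}, these $T$'s involve only $a_l(n)$, $a_{l-1}(n-1)$, $c$'s and $\gamma_n$.

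When $q\ge n$, Lemma~\ref{at} does not apply directly. There I would instead expand~\eqref{an} honestly: $\phi_A$ has coefficients $a_j(n)$ and $a_j(n-1)$, and $\phi_B(\lambda,m)$ has coefficients $c_j(m+j)$ by~\eqref{poincare}, since $\int y^{m+j}\pi_m\,d\mu=h_m c_j(m+j)$. The divided-difference identities~\eqref{phi2},~\eqref{phi2'} show that every $A_{p,q}(n)$ is a polynomial in the $a_j(n)$, $a_j(n-1)$, $c_j(n-1+j)$, $c_j(n+j)$ and $\gamma_n$. This route works for all $p,q$ and is the cleanest.

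\textbf{Step 3: eliminate $c$ and $\gamma_n$.} The $c_j(m)$ are the entries of the inverse of the unitriangular change of basis between $\{\lambda^l\}$ and $\{\pi_l\}$. Hence each $c_j(m)$ is an integer polynomial in $a_{j'}(m')$ with $j'\le j$ and $m-j\le m'\le m$. With $j\le N$ and $m\le n+N$, this keeps the indices in the range $n-1\le m'\le n+N$.

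The main obstacle is $\gamma_n=h_n/h_{n-1}$, which is not obviously a polynomial in the $a$'s. I would first try the three-term recurrence~\eqref{three-term}. Comparing coefficients of $\lambda^{n-1}$ in $\lambda\pi_n=\pi_{n+1}+\beta_n\pi_n+\gamma_n\pi_{n-1}$ gives
\begin{equation*}
\gamma_n=a_2(n)-a_2(n+1)-\beta_n a_1(n),\qquad \beta_n=a_1(n)-a_1(n+1).
\end{equation*}
So $\gamma_n$ is a polynomial in $a_1(n),a_1(n+1),a_2(n),a_2(n+1)$, all within the allowed range.

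Finally, I would recheck the index bounds. The $c_j(n-1+j)$ with $j\le N$ use $a_{j'}(m')$ with $n-1\le m'\le n+N$. The $a_j(n-1)$ and $a_j(n)$ have $j\le N$. The required $c$'s need $j$ up to about $p+q+1\le N-1$. This confirms $1\le j\le N$ and $n-1\le m\le n+N$, with the small-$n$ cases handled by the same expansion.
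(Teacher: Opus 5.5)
Your proposal is essentially the paper's proof. You read the correlators off Theorem~\ref{main} as polynomials in the $A_{p,q}(n)$, write these through $a_j$, $c_j$ and $\gamma_n$ via Lemma~\ref{at}/\eqref{an}, and then eliminate $c_j$ by the triangular inversion~\eqref{cnj} and $\gamma_n$ by the three-term recurrence. Your direct expansion of~\eqref{an} for $q\ge n$ just makes explicit what the paper's use of~\eqref{klcase} tacitly assumes, namely the convention $a_l(n)=0$ for $l>n$.

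One slip to fix: the degree count gives $p+q\le N-1$, not $N-2$. The target monomial has degree $N+k$; for example $\langle\tr M^2\rangle_c(n)=A_{1,0}(n)+A_{0,1}(n)$. So you need $c_j(n-1+j)$ up to $j=N$. As your final check already allows, this still stays within $1\le j\le N$, $n-1\le m\le n+N$.
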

\begin{proof}
	First, note that $A_{0,j}(n)=-a_{j+1}(n)$. Also by comparing the coefficients of $x^n$ and $x^{n-1}$ in the three-term recurrence relation~\eqref{three-term}, we have that
	\begin{align}
	\beta_n=&a_1(n)-a_1(n+1), \\
	\gamma_n=&a_2(n)-a_2(n+1)-\beta_na_1(n),
	\end{align}
	are polynomials in $a_j(m)$, $j=1,2$, $m=n,n+1$.
	
	Next, $A_{i,0}(n)=c_{i+1}(n+i)$, where $c_j(n)$ are defined in~\eqref{pi}. We prove that for any $j\ge0$, $c_{j}(n)$ is a polynomial in the coefficients of the orthogonal polynomials by induction on $j$. For $j=0$, $c_0(n)=1$. Assume that $c_k(n)$ is a polynomial in the coefficients of the orthogonal polynomials for any $k<j$. Then by definition,
	\begin{equation} \label{cnj}
		c_{j}(n)=-\sum_{k=0}^{j-1}c_{k}(n)a_{j-k}(n-k)
	\end{equation}
	is a polynomial in $a_j(m)$. Hence $A_{i,0}(n)$ is also a polynomial in $a_j(m)$, $1\le j\le i+1$, $n\le m\le n+i$.
	
	Finally, letting $A_{0,-1}(n)=-a_0(n)=-1$ and $A_{-1,0}(n)=c_0(n)=1$, for $k,l\ge1$,
	\begin{equation}
		T_{k,l}(n)=-A_{0,l-1}(n)A_{k-1,0}(n)+A_{0,l-2}(n-1)A_{k-2,0}(n+1)\gamma_n \label{tij}
	\end{equation}
	is a polynomial in $a_j(m)$, $1\le j\le\max\{2,k,l\}$, $n-1\le m\le\max\{n+1,n+k-1\}$. Thus by~\eqref{klcase}, $A_{k,l}(n)$ are polynomials in $a_j(m)$, $1\le j\le k+l+1$, $n-1\le m\le n+k+l$. Now by comparing the coefficients in~\eqref{maineq}, we obtain the theorem.
\end{proof}

As a direct corollary, we obtain:
	\begin{cor} \label{rational-cor}
		If the orthogonal polynomials $\{\pi_l\}_{l\ge0}$ satisfy that for any $k\ge0$, $a_k(n)$ is rational (resp. polynomial, holomorphic, meromorphic) in $n$, then for any $i_1,\cdots,i_k\ge1$, $\langle\tr M^{i_1}\cdots\tr M^{i_k}\rangle_c(n)$ is rational (resp. polynomial, holomorphic, meromorphic) in $n$.
	\end{cor}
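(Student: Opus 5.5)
This follows directly from Theorem~\ref{rational}, so the plan is to make the dependence on $n$ explicit there. By Theorem~\ref{rational}, for fixed $i_1,\cdots,i_k\ge1$ there is a polynomial $P$ with constant coefficients, independent of $n$, such that
\[
\langle\tr M^{i_1}\cdots\tr M^{i_k}\rangle_c(n)=P\bigl(a_j(m)\bigr),
\]
where $1\le j\le N:=i_1+\cdots+i_k$ and $m$ ranges over $n-1,\cdots,n+N$. The first step is to check this uniformity in $n$ by rereading the proof of Theorem~\ref{rational}.
\begin{itemize}
\item The coefficients $\beta_n,\gamma_n$ are given by universal formulas in $a_1,a_2$ at $n,n+1$.
\item The coefficients $c_j(n)$ come from the universal recursion~\eqref{cnj}.
\item The $A_{k,l}(n)$ come from~\eqref{0-case}, \eqref{klcase} and~\eqref{tij}, in which only shifts $n\mapsto n+s$ with $s$ bounded in terms of $N$ occur.
\item Finally, $\langle\tr M^{i_1}\cdots\tr M^{i_k}\rangle_c(n)$ is obtained by extracting coefficients from~\eqref{maineq}, \eqref{maineq1} using Lemma~\ref{exp-an}. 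This step also involves the explicit term $-n/\lambda_1$ and the factors $\lambda_2^n/\lambda_1^n$.
\end{itemize}
One therefore gets a fixed polynomial $P$, in the variables $a_j(n+s)$ and possibly also in $n$ itself, with $s$ in a bounded window.

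The second step is to observe that each variable $a_j(n+s)$ is, by hypothesis, rational (resp.\ polynomial, holomorphic, meromorphic) in $n$. Each of these classes is preserved under the shift $n\mapsto n+s$, under sums and products, and under multiplication by polynomials in $n$. Composing, the correlator lies in the same class.

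The main point to treat carefully is the explicit $n$-dependence entering through the conversion from the $B_n$-expansion to the correlators. The $\lambda^{\pm n}$ factors are absorbed in the definition of the $A_{k,l}(n)$ via~\eqref{affine}, and re-expanding~\eqref{maineq} in $\lambda_i^{-1}$ yields only polynomial dependence on $n$, through the term $-n/\lambda_1$ in~\eqref{maineq1} and through the binomial coefficients coming from $\lambda_1^n/\lambda_2^n$. Since polynomials in $n$ belong to every class considered, this causes no problem.

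A subtlety is that the formulas in Lemma~\ref{at} require $n>l$. The statement should therefore be read for $n$ sufficiently large relative to $N$, as a function on such $n$ that extends in the stated class. Alternatively, one notes that both sides are given by the same universal expression for all large $n$, which determines the rational or analytic continuation.
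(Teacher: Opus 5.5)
Your proposal is correct and is essentially the paper's argument, which reads the corollary directly off Theorem~\ref{rational}: the correlator is a universal ($n$-independent) polynomial in finitely many shifted coefficients $a_j(n+s)$, and each class is closed under shifts, sums and products. Your caveat about reading the conclusion for $n$ large (because of the condition $n>l$ in Lemma~\ref{at}) is a sensible precision the paper leaves implicit; one minor correction is that no explicit $n$-dependence survives, since the factors $\lambda_{\sigma(i)}^n/\lambda_{\sigma(i+1)}^n$ telescope to $1$ in the cyclic products of \eqref{maineq}, and in \eqref{maineq1} the limit equals $n/\lambda_1+\sum_{k,l}A_{k,l}(n)\lambda_1^{-k-l-2}$, so the $n/\lambda_1$ cancels and no binomial coefficients in $n$ arise (harmless either way, as you note).
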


\subsection{Examples}

\begin{example}[GUE: Gaussian unitary ensemble] \label{gue}
	GUE is defined on the space of hermitian matrices $\mathcal{H}_n$ with measure 
	\begin{equation}
		\dif\mu(\lambda)=e^{-\frac{1}{2}\lambda^2}\dif\lambda.
	\end{equation}
	The corresponding orthogonal polynomials are known~\cite{Szego} to be the Hermite polynomials
	\begin{equation} \label{orth-herm}
		\He_n(\lambda)=\sum_{m=0}^{[\frac{n}{2}]}\frac{(-1)^m}{m!2^m}(n-2m)_{2m+1}\lambda^{n-2m}.
	\end{equation}
	In this case, the affine coordinates are given in~\eqref{kpgue}, which are obviously polynomials in $n$. By~\eqref{orth-herm} and the proof of Theorem~\ref{rational}, we can also see that the affine coordinates are polynomials in $n$. In addition, by Corollary~\ref{rational-cor}, the connected correlators are also polynomials in $n$.
\end{example}

\begin{example}[LUE: Laguerre unitary ensemble~\cite{Lag,Sonine,Mehta,Szego,Laguerre,LUE}] \label{lue}
	LUE is defined on the space of $n\times n$ positive definite hermitian matrices $\mathcal{H}_n^+$ with measure
	\begin{equation}
		\dif\mu(\lambda)=\lambda^\alpha e^{-\lambda}\mathbf{1}_{(0,+\infty)}(\lambda)\dif \lambda,
	\end{equation}
	where $\Re(\alpha)>-1$. The corresponding orthogonal polynomials are known to be the generalized Laguerre polynomials
	\begin{equation} \label{orth-lue}
		L_n(\alpha;\lambda)=\sum_{k=0}^n\frac{(-1)^k}{k!}(n-k+1)_k(n+\alpha-k+1)_k\lambda^{n-k}.
	\end{equation}
	In this case, the explicit expressions of affine coordinates are derived in~\cite{ZhouGro,LUE}:
	\begin{equation}
		A_{i,j}^{\mathrm{LUE}}(n)=\frac{(-1)^j}{i!j!(i+j+1)}(n-j)_{i+j+1}(n+\alpha-j)_{i+j+1}.
	\end{equation}
	Obviously, they are polynomials in $n$. By~\eqref{orth-lue} and the proof of Theorem~\ref{rational}, we can also see that the affine coordinates are polynomials in $n$. In addition, by Corollary~\ref{rational-cor}, the connected correlators are also polynomials in $n$.
\end{example}

\begin{example}[JUE: Jacobi unitary ensemble~\cite{Jacobi,Mehta,Szego,JUE,corJUE}] \label{jue}
	JUE is defined on the space $\mathcal{H}_n(0,1)$ with measure
	\begin{equation}
		\dif\mu(\lambda)=\lambda^\alpha(1-\lambda)^\beta\mathbf{1}_{(0,1)}(\lambda)\dif\lambda,
	\end{equation}
	where $\Re(\alpha),\Re(\beta)>-1$. The corresponding orthogonal polynomials are known to be the Jacobi polynomials
	\begin{equation} \label{orth-jue}
		J_n(\alpha,\beta;\lambda)=\frac{n!}{(\alpha+\beta+n+1)_n}\sum_{k=0}^n\binom{n+\alpha}{k}\binom{n+\beta}{n-k}(\lambda-1)^k\lambda^{n-k}.
	\end{equation}
	In this case, the explicit expressions of affine coordinates are derived in~\cite{JUE,corJUE} (see~\cite{HE,Emergent,BY}):
	\begin{equation}
		A_{i,j}^\mathrm{JUE}(n)=\frac{(-1)^j}{(i+j+1)i!j!}\frac{(n-j)_{i+j+1}(n+\alpha-j)_{i+j+1}}{(2n+\alpha+\beta-j)_{i+j+1}}.
	\end{equation}
	Obviously, they are rational functions in $n$. By~\eqref{orth-jue} and the proof of Theorem~\ref{rational}, we can also see that the affine coordinates are rational functions in $n$. In addition, by Corollary~\ref{rational-cor}, the connected correlators are also rational functions in $n$.
\end{example}

\begin{example}[Atkin polynomials~\cite{KZ}] \label{atkin}
	Let $j=j(\tau)=q^{-1}+744+196884q+\cdots$ be the modular invariant, where $q=e^{2\pi i\tau}$. Let $\theta:[0,1728]\to[\pi/3,\pi/2]$ be the inverse of $\theta\mapsto j(e^{i\theta})$. Following~\cite{KZ}, consider the measure
	\begin{equation}
		\dif\mu(j)=\frac{6}{\pi}\theta'(j)\mathbf{1}_{[0,1728]}(j)\dif j.
	\end{equation}
	This gives the so-called Atkin polynomials $A_n(j)$ with three-term recurrence relation
	\begin{equation}
		jA_n(j)=A_{n+1}(j)+(\iota_{2n}+\iota_{2n+1})A_n(j)+\iota_{2n-1}\iota_{2n}A_{n-1}(j),\qquad n\ge0
	\end{equation} where
	\begin{equation}
		\iota_n=\begin{cases}
			0,&n=0,\\
			720,&n=1,\\
			12\left(6+\frac{(-1)^n}{n-1}\right)\left(6+\frac{(-1)^n}{n}\right),&n\ge2.
		\end{cases}
	\end{equation}
	For Atkin polynomials, $a_k(n)$ is rational in $n$ for each $k\ge0$ (see~\cite{KZ}). By the proof of Theorem~\ref{rational}, we can see that the affine coordinates are rational functions in $n$. In addition, by Corollary~\ref{rational-cor}, the connected correlators are also rational functions in $n$.
\end{example}

\section{Proof of Theorem~\ref{dual-ex}} \label{sec-dual}

In introduction, we mentioned the KP dual for hermitian matrix models which is characterized by
\begin{equation}
	\widetilde{\langle s_\rho\rangle}(n)=(-1)^{|\rho|}\langle s_{\rho'}\rangle(-n),\qquad n\ge\max\{2,|\rho|\}.
\end{equation}
Now we prove Theorem~\ref{dual-ex}.

\begin{proof}[Proof of Theorem~\ref{dual-ex}]
	The three-term recurrence coefficients associated to $\dif\mu$ read
	\begin{equation}
		\beta_n=\beta,n\ge0,\qquad\gamma_n=\begin{cases}
			\gamma,&n\ge1,\\
			0,&n=0.
		\end{cases}
	\end{equation}
	The associated orthogonal polynomials $\pi_n$ read
	\begin{equation}
		\pi_n(\lambda)=\sum_{k=0}^na_k(n)\lambda^{n-k},\qquad\lambda^n=\sum_{k=0}^nc_k(n)\pi_{n-k}(\lambda),
	\end{equation}
	where $a_0(n)\equiv1,c_0(n)\equiv1$, and for $k\ge1$,
	\begin{align}
		a_k(n)=&\sum_{j=0}^{[k/2]}(-1)^{k-j}\frac{\beta^{k-2j}\gamma^j}{j!(k-2j)!}(n-k+1)_{k-j}, \label{ak}\\
		c_k(n)=&(n-k+1)\sum_{j=0}^{[k/2]}\frac{\beta^{k-2j}\gamma^j}{j!(k-2j)!}(n-k+j+2)_{k-j-1}.
	\end{align}
	Let
	\begin{equation}
		\widetilde a_k(n):=c_k(-n-1+k).
	\end{equation}
	Then the polynomials
	\begin{equation}
		\widetilde\pi_n(\lambda):=\sum_{k=0}^n\widetilde a_k(n)\lambda^{n-k},\qquad n\ge0,
	\end{equation} satisfy the three-term recurrence relation
	\begin{equation}
		\lambda\widetilde\pi_n(\lambda)=\widetilde\pi_{n+1}(\lambda)+\widetilde\beta_n\widetilde\pi_n(\lambda)+\widetilde\gamma_n\widetilde\pi_{n-1}(\lambda),\qquad n\ge0,
	\end{equation}
	where for $n\ge0$,
	\begin{equation}
		\widetilde\beta_n=\beta,\qquad\widetilde\gamma_n=\begin{cases}
			\gamma+\delta_{n,1}\gamma,&n\ge1,\\
			0,&n=0.
		\end{cases}
	\end{equation}
	The corresponding measure is $\dif\widetilde\mu$.
	Let
	\begin{equation}
		\widetilde c_k(n):=a_k(-n-1+k).
	\end{equation}
	Then
	\begin{equation}
		\lambda^n=\sum_{k=0}^n\widetilde c_k(n)\widetilde\pi_{n-k}(\lambda).
	\end{equation}

	For the hermitian matrix models $M,\widetilde M$ associated to $\dif\mu,\dif\widetilde\mu$, the affine coordinates satisfy
	\begin{equation}
		\widetilde A_{0,i}(n)=-A_{i,0}(-n),\qquad\widetilde A_{i,0}(n)=-A_{0,i}(-n),\qquad |n|\ge i.
	\end{equation}
	Thus for $|n|\ge\max\{2,i,j\}$,
	\begin{equation}
		\widetilde T_{i,j}(n)=T_{j,i}(-n).
	\end{equation}
	Therefore, for $|n|\ge\max\{2,i,j\}$,
	\begin{equation}
		\widetilde A_{i,j}(n)=-A_{j,i}(-n).
	\end{equation}
	Then using Theorem~\ref{main}, for $|n|\ge\max\{2,i_1+\cdots+i_k\}$,
	\begin{align}
		\widetilde{\langle\tr M^{i_1}\cdots\tr M^{i_k}\rangle}_c(n)&=(-1)^k\langle\tr M^{i_1}\cdots\tr M^{i_k}\rangle_c(-n),\\
		\widetilde{\langle\tr M^{i_1}\cdots\tr M^{i_k}\rangle}(n)&=(-1)^k\langle\tr M^{i_1}\cdots\tr M^{i_k}\rangle(-n).
	\end{align}
	Now for partition $\rho=(\rho_1,\cdots,\rho_k)$, $l(\rho):=k,|\rho|:=\rho_1+\cdots+\rho_k$. Let $p_\rho(M):=\prod_{j=1}^{l(\rho)}\tr M^{\rho_j}$. Then
	\begin{equation}
		\widetilde{\langle p_\rho\rangle}(n) = (-1)^{l(\rho)}\langle p_\rho\rangle(-n),\qquad |n|\ge\max\{2,|\rho|\}.
	\end{equation}
	By Frobenius character formula~\cite[Section 1.7]{Schur}, the Schur polynomial of $\rho$ reads
	\begin{equation}
		s_\rho=\sum_{|\lambda|=|\rho|}\frac{\chi^\rho_\lambda}{z_\lambda} p_\lambda,
	\end{equation}
	where $\chi^\rho_\lambda$ is the character of the irreducible representation of the symmetric group $S_{|\rho|}$ on $\lambda$. By~\cite[Section 1.7]{Schur},
	\begin{equation}
		\chi_\lambda^\rho=\chi_\lambda^{\rho'}(-1)^{|\lambda|-l(\lambda)},
	\end{equation}
	where $\rho'$ is the conjugate partition of $\rho$. Then for $|n|\ge\max\{2,|\rho|\}$,
	\begin{align}
		\widetilde{\langle s_\rho\rangle}(n)=&\sum_{|\lambda|=|\rho|}\frac{\chi^\rho_\lambda}{z_\lambda} (-1)^{l(\lambda)}\langle p_\lambda\rangle(-n)\nn\\
		=&\sum_{|\lambda|=|\rho|}\frac{\chi^{\rho'}_\lambda (-1)^{|\lambda|-l(\lambda)}}{z_\lambda}(-1)^{l(\lambda)}\langle p_\lambda\rangle(-n)\nn\\
		=&(-1)^{|\rho|}\sum_{|\lambda|=|\rho'|}\frac{\chi^{\rho'}_\lambda}{z_\lambda}\langle p_\lambda\rangle(-n)\nn\\
		=&(-1)^{|\rho|}\langle s_{\rho'}\rangle(-n).
	\end{align}
	Thus $\widetilde M$ is the conjugate dual of $M$.
\end{proof}

\begin{remark}[Relations with Kerov map]
	The Jacobi fraction of $\dif\mu$ writes
	\begin{equation}
		J(t)=\dfrac{1}{1-\beta t-\dfrac{\gamma t^2}{1-\beta t-\dfrac{\gamma t^2}{\cdots}}}.
	\end{equation} Then
	\begin{equation}
		J(t)=\frac{1}{1-\beta t}E\left(\frac{t}{1-\beta t}\right),
	\end{equation} where
	\begin{equation}
		E(z):=\dfrac{1}{1-\dfrac{\gamma z^2}{1-\dfrac{\gamma z^2}{\cdots}}}.
	\end{equation}
	Consider the Kerov map~\cite{Kerov,Zhou2}
	\begin{equation}
		\mathcal{K}:J(t)\mapsto 1+t\frac{\dif}{\dif t}\log J(t).
	\end{equation} Then let $z:=\frac{t}{1-\beta t}$, we have
	\begin{align}
		\mathcal{K}(J)(t)=&1+\beta z+(1+\beta z)z\frac{\dif}{\dif z}\log E(z)\nn\\
		=&1+\beta z+(1+\beta z)(\mathcal{K}(E)(z)-1)\nn\\
		=&(1+\beta z)\mathcal{K}(E)(z)\nn\\
		=&\frac{1}{1-\beta t}\mathcal{K}(E)\left(\frac{t}{1-\beta t}\right).
	\end{align}
	Now $E(z)$ satisfies the relation
	\begin{equation}
		E(z)=\frac{1}{1-\gamma z^2E(z)},\qquad E(0)=1,
	\end{equation} which implies that
	\begin{equation}
		E(z)=\frac{1-\sqrt{1-4\gamma z^2}}{2\gamma z^2}.
	\end{equation}
	Then
	\begin{align}
		\mathcal{K}(E)(z)=&1+z\frac{\dif}{\dif z}\log E(z)\nn\\
		=&\frac{1}{\sqrt{1-4\gamma z^2}}\nn\\
		=&\frac{1}{1-2\gamma z^2E(z)}\nn\\
		=&\dfrac{1}{1-\dfrac{2\gamma z^2}{1-\dfrac{\gamma z^2}{1-\dfrac{\gamma z^2}{\cdots}}}}.
	\end{align}
	Thus 
	\begin{align}
		\mathcal{K}(J)(t)=\dfrac{1}{1-\beta t-\dfrac{2\gamma t^2}{1-\beta t-\dfrac{\gamma t^2}{1-\beta t-\dfrac{\gamma t^2}{\cdots}}}}.
	\end{align}
	This is exactly the Jacobi fraction of $\dif\widetilde\mu$.
\end{remark}

When $\beta=0,\gamma=1$, the corresponding Jacobi fractions are the generating functions of Dyck paths and of grand Dyck paths. When $\beta=1,\gamma=1$, the corresponding Jacobi fractions are the generating functions of Motzkin paths and of grand Motzkin paths. When $\beta=2,\gamma=1$, the corresponding Jacobi fractions are related to the generating functions of Catalan numbers of type A and of type B. When $\beta=3,\gamma=2$, the corresponding Jacobi fractions are related to the generating functions of big Schr\"oder paths and of central Delannoy paths. These examples are conjectured in~\cite{Zhou2,Zhou3}, which are now proved.

\end{document}